\documentclass[11pt]{article}

\usepackage[letterpaper,margin=0.75in]{geometry}
\usepackage[utf8]{inputenc}
\usepackage[T1]{fontenc}
\usepackage{mathptmx}
\usepackage{url}
\usepackage{microtype}
\usepackage{amsmath}
\usepackage{amssymb}
\usepackage{bm}
\usepackage{amsthm}
\usepackage{algorithm}
\usepackage{algorithmic}
\usepackage{float}
\usepackage{array}
\usepackage{booktabs}
\usepackage{graphicx}
\usepackage{subcaption}
\usepackage{xcolor}
\usepackage[hidelinks,hypertexnames=false,bookmarks=false]{hyperref}
\graphicspath{{figures/}{./figures/}}

\newcommand{\FloatBarrier}{}

\newtheorem{theorem}{Theorem}
\newtheorem{proposition}{Proposition}

\newtheorem{definition}{Definition}

\newtheorem{corollary}{Corollary}

\DeclareMathOperator{\Var}{Var}
\DeclareMathOperator{\Cov}{Cov}
\DeclareMathOperator{\tr}{tr}

\newcolumntype{L}[1]{>{\raggedright\arraybackslash}p{#1}}

\newcommand{\localcaauthorversion}{}

\title{\Large\bfseries Shunting Inhibition and Dendritic Branching\\
Shape Local Credit Assignment}
\author{%
  Houman Safaai\,$^{1}$\thanks{Correspondence: \texttt{houman\_safaai@harvard.edu}; \texttt{bernardo\_sabatini@hms.harvard.edu}.} \quad Maceo Richards\,$^{1}$ \quad Bernardo L. Sabatini\,$^{1,2}$\textsuperscript{*} \\[0.5em]
  $^{1}$ Kempner Institute for the Study of Natural and Artificial Intelligence \\
  at Harvard University \\
  $^{2}$ Department of Neurobiology, Howard Hughes Medical Institute, \\
  Harvard Medical School, Boston, MA 02115, USA
}
\date{}

\begin{document}
\maketitle

\begin{abstract}
    Biological neurons assign credit across branching dendrites, where synaptic drive, dendritic conductance, local voltage, and somatic teaching signals interact to shape synaptic plasticity. We study conductance-based dendritic networks with excitatory and inhibitory synapses, shunting inhibition, and tree-structured branch-to-soma coupling. We examine the conditions under which restricted somatic feedback can approximate compartment-specific backpropagated errors. Exact gradients factor into \emph{local eligibility $\times$ compartment error} terms: the eligibility is set by presynaptic activity, driving force, and input resistance, whereas the fast non-local term is a path-specific error obtained by transporting a soma error through dendritic gains. This factorization turns local learning into a credit-signal approximation problem. We test the hypothesis that shunting inhibition benefits learning under these constraints when it reshapes the compartment-error field to better match restricted somatic feedback. Exact-gradient reconstruction verifies the factorization; path-gain, feedback-fidelity, inhibition-intervention, and transported-error-oracle diagnostics test the proposed mechanism and its limits. Under nonnegative conductances and a 5-factor (5F) rule with matched-width/scalar-fallback feedback, shunting local credit assignment (LocalCA) remains $5$--$6$ percentage points below matched backpropagation on MNIST, Fashion-MNIST, and figure-ground MNIST, indicating that feedback-field fidelity remains a major bottleneck. Additional controls show that a 3-factor rule approaches matched backpropagation with exact transported feedback in the shunting model and with neuron-wise feedback in both architectures. However, shunting has no general advantage under matched initialization. These results show how conductance and dendritic branching enter the exact credit equation, and identify restricted feedback as a principal limit in these experiments.
    \end{abstract}

    \section{Introduction}
    \label{sec:intro}
    
    Unlike point units in standard neural networks, biological neurons assign credit to synapses distributed across branching dendritic trees. Each synapse receives presynaptic drive and samples local membrane voltage, synaptic reversal potential, and total conductance (the reciprocal of the local input resistance). Inhibitory synapses increase total local conductance, counteracting local excitation by shunting voltage. This increase in branch conductance alters the gain of every dendritic path passing through that compartment.
    
    We examine whether these biophysical ingredients can facilitate local credit assignment (LocalCA). Exact backpropagation \cite{rumelhart1986learning} assigns a distinct error to every dendritic compartment, whereas a biologically plausible supervisory signal is more likely to arrive as a low-bandwidth somatic or modulatory broadcast \cite{richards2019dendritic}. The central question is therefore not whether such a broadcast can reproduce unconstrained backpropagation in every setting, but when the exact dendritic error field is simple enough to be approximated by a low-bandwidth broadcast.
    Our contribution is not simply to add dendritic structure to a neural network, but to show how conductance-based synapses, shunting inhibition, and dendritic topology shape the geometry of the credit signal.
    
    \begin{figure}[t]
    \centering
    \includegraphics[width=\textwidth]{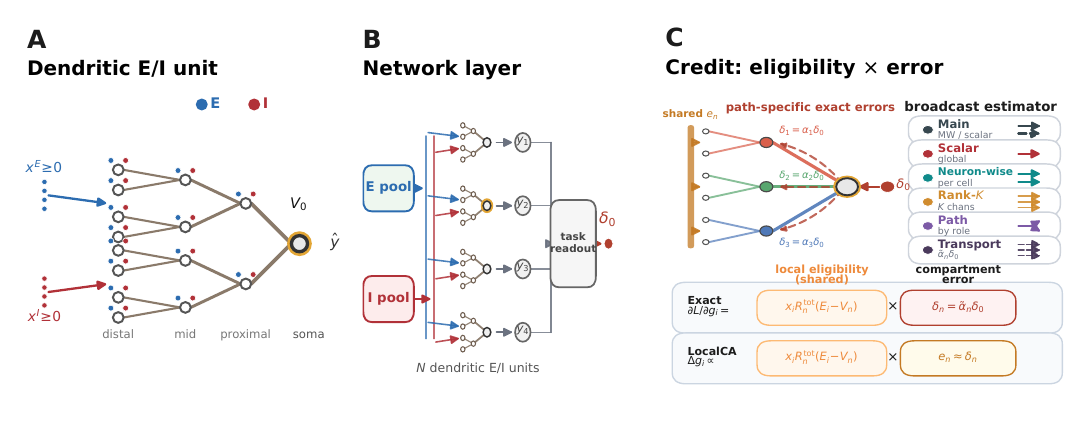}
    \caption{\textbf{Model and credit assignment.}
    (A)~Schematic of a dendritic unit with excitatory and inhibitory inputs on each branch. (B)~Layer of such units projecting to a task readout; $\boldsymbol{\delta}_0=\partial L/\partial\mathbf{V}_0$ is the soma/core error used for feedback. (C)~Exact updates multiply local eligibility by path-specific errors $\delta_{u,n}$; LocalCA replaces them with broadcast estimates $e_{u,n}$. The main matched-width/scalar-fallback hybrid is shown separately from strict scalar, neuron-wise, low-rank, path-structured, and transported feedback.}
    \label{fig:model_schematic}
    \end{figure}
    \FloatBarrier
    
    Starting from conductance-based dendritic voltage equations \cite{koch1999biophysics}, we derive exact gradients for dendritic trees (Fig.~\ref{fig:model_schematic}; Theorem~\ref{thm:tree_backprop}). Each synaptic gradient factorizes into a synapse-local eligibility term and a single non-local compartment error:
    \[
    \text{gradient}=\underbrace{\text{local eligibility}}_{\text{presynaptic drive, }E{-}V,\;R^{\mathrm{tot}}}
    \times
    \underbrace{\text{compartment error}}_{\text{path-specific non-local term}}.
    \]
    For compartment $n$ of neuron $u$, the exact compartment error equals that neuron's somatic error multiplied by the voltage gain along the unique path from $n$ to its soma. This path gain is the product of the local child-to-parent transfer factors on that route. LocalCA preserves the exact local eligibility and replaces this path-specific error with a broadcast field.
    
    \noindent\textbf{What is local in LocalCA?}
    The local eligibility contains three factors: presynaptic drive, synaptic driving force (the difference between synaptic reversal potential and local voltage), and input resistance. The task-dependent compartment error is approximated by a broadcast $e_n$. The 3F rule multiplies these terms, whereas 4F and 5F add slowly estimated branch-level preconditioners without changing the fast eligibility used by 3F.
    
    This factorization turns LocalCA into a feedback-compatibility problem: low intrinsic rank is insufficient unless the exact error field aligns with the available feedback. We compare a global scalar, the main matched-width/scalar-fallback hybrid, ancestry-shared, random low-rank, path-structured, and exact transported feedback. Shunting changes input resistance along dendritic paths, but helps learning only when that change improves alignment with the restricted broadcast.
    
    \noindent\textbf{Related work.}
    The work builds on models of dendritic integration, nonlinear branch computation, normalization, three-factor plasticity, and compartmental teaching signals \cite{koch1983nonlinear,koch1999biophysics,poirazi2003pyramidal,london2005dendritic,silver2010neuronal,carandini2012normalization,fremaux2016threefactor,urbanczik2014dendritic,guerguiev2017segregated,sacramento2018dendritic,payeur2021burst,iyer2022activedendrites}. It also relates to random, predictive, equilibrium, perturbation, and broadcast alternatives to backpropagation and to reviews of how the brain might approximate it \cite{lillicrap2016random,nokland2016dfa,lee2015dtp,scellier2017equilibrium,meulemans2021dfc,millidge2021predictive,hinton2022forward,dellaferrera2022pepita,haider2021latent,max2024backprojections,song2024prospective,whittington2019theories,lillicrap2020backprop}. Recent dendritic recordings report neuron-specific instructive signals during learning \cite{francioni2026vectorized}, providing a biological example related to the neuron-indexed feedback studied here. Rather than posit a compartmental teaching signal, we derive the exact credit required by a conductance tree and test its approximation from soma-level feedback. Appendix Table~\ref{tab:related_local_learning} gives a detailed comparison across approaches, and Appendix Fig.~\ref{fig:fa_dfa_appendix} shows the implemented FA/DFA reference.
    
    Exact reconstruction verifies the factorization, and dendritic-block metrics avoid masking it with an aligned soma coupling (Fig.~\ref{fig:gradient_fidelity}). An initial three-checkpoint cohort suggested better branch-gradient direction under shunting, but an independent five-seed replication did not reproduce that ordering. Reactivation and initialization controls show that the ordering is sensitive to the models' forward activation states and is not an architecture-wide advantage. In contrast, exact transport and neuron-wise feedback consistently reduce the learning gap, identifying feedback construction as the principal bottleneck to good performance.
    
    \noindent\textbf{Contributions.}
    We derive the conductance-stage path gain, its local-eligibility/path-error factorization, and an inhibitory path-gain ratio; define path-gain and feedback-compatibility diagnostics; and identify the restricted-feedback conditions under which shunting helps or fails.
    
    \noindent\textbf{Scope of evidence.}
    Theorems concern steady-state conductance trees. Simulations test reconstruction, feedback fidelity, oracle transport, and matched additive/shunting models; transported errors are diagnostic upper bounds, not proposed biological signals.
    
    \begin{figure}[t]
    \centering
    \includegraphics[width=\textwidth]{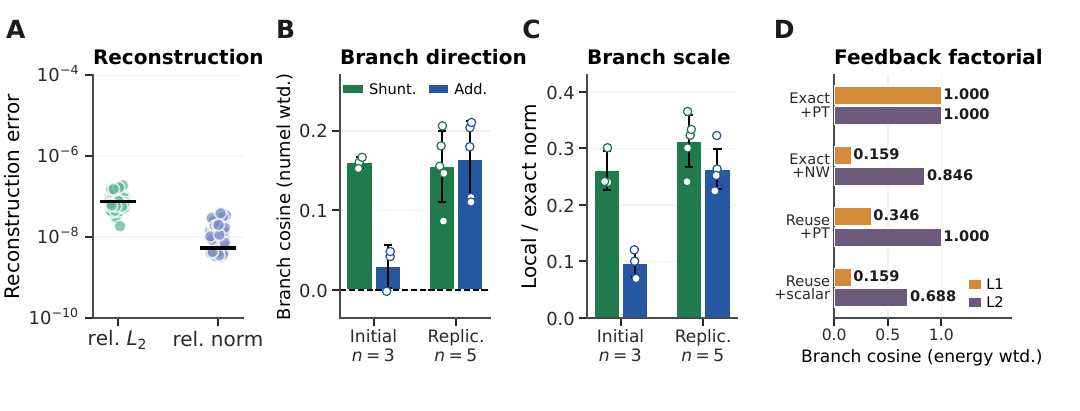}
    \caption{\textbf{Exact reconstruction and gradient diagnostics.}
    (A) Exact-factorization reconstruction against autograd: relative $L_2$ error and relative norm error, one point per parameter tensor. (B,C) Matched 3F diagnostics at non-somatic dendritic parameter blocks for the initial three-checkpoint cohort and an independent five-seed replication. Direction is the parameter-count-weighted cosine between local and exact branch gradients ($1$ means parallel; $0$ means orthogonal). Scale is the local-to-exact gradient-norm ratio ($1$ means equal norm). Each dot is one checkpoint averaged over train, validation, and test diagnostic batches; bars show mean $\pm$1 s.d.\ across seeds. Full values and inference limits are in Appendix Tables~\ref{tab:matched_3f_gradient} and \ref{tab:fresh_3f_replication}. (D) Energy-weighted branch-gradient cosine for selected combinations of layer-soma teaching signal and within-tree feedback. Exact/Reuse specifies whether the layer receives its exact soma error or reuses the final-core teaching vector; PT, NW, and scalar specify exact path transport, neuron-wise sharing over descendants, and scalar fallback, respectively. L1 and L2 are the two dendritic network layers. The two L1 conditions labeled $0.159$ differ before rounding. Metrics and protocol: Appendix~\ref{app:extra_results}.}
    \label{fig:gradient_fidelity}
    \end{figure}
    
    \section{Compartmental Voltage Model and Gradient Derivation}
    \label{sec:model}
    
    For each dendritic compartment, we use a steady-state conductance model derived from discretized passive cable dynamics \cite{koch1999biophysics,dayan2001theoretical}.
    \noindent\emph{Voltage equation and local sensitivities.}
    Consider compartment $n$ with synaptic inputs $j$ (activity $x_j$, reversal $E_j$, conductance $g_j^{\mathrm{syn}}\!\geq\!0$) and dendritic inputs from children (voltage $V_j$, transmitted activity $a_j=f_j(V_j)$, conductance $g_j^{\mathrm{den}}\!\geq\!0$). We use this reactivation notation throughout; when reactivation is disabled, $f_j$ is the identity, so $a_j=V_j$ and $f_j'(V_j)=1$.
    In normalized units, leak and inhibitory reversal potentials are $0$, the excitatory reversal potential is $1$, and leak conductance is fixed to $1$ (Appendix Table~\ref{tab:units_parameterization}). The steady-state voltage is:
    \begin{equation}
    V_n = \frac{\sum_j E_j x_j g_j^{\mathrm{syn}} + \sum_j a_j g_j^{\mathrm{den}}}{\underbrace{\sum_j x_j g_j^{\mathrm{syn}} + \sum_j g_j^{\mathrm{den}} + 1}_{g_n^{\mathrm{tot}}}},
    \qquad R_n^{\mathrm{tot}} = 1/g_n^{\mathrm{tot}}.
    \label{eq:voltage}
    \end{equation}
    $V_n$ is a convex combination of reversal potentials, transmitted child activities, and leak. Writing $\mathcal{S}_n = \{0\} \cup \{E_j\}_j \cup \{a_j\}_j$, we have $\min\mathcal{S}_n \leq V_n \leq \max\mathcal{S}_n$ and $0 < R_n^{\mathrm{tot}} \leq 1$.
    The local sensitivities follow directly:
    
    \begin{proposition}[Local Sensitivities]
    \label{prop:sensitivities}
    \begin{equation}
    \frac{\partial V_n}{\partial g_i^{\mathrm{syn}}} = x_i R_n^{\mathrm{tot}} (E_i - V_n),
    \quad
    \frac{\partial V_n}{\partial V_i} = f_i'(V_i)g_i^{\mathrm{den}} R_n^{\mathrm{tot}},
    \quad
    \frac{\partial V_n}{\partial g_i^{\mathrm{den}}} = R_n^{\mathrm{tot}} (a_i - V_n).
    \label{eq:sensitivities}
    \end{equation}
    \end{proposition}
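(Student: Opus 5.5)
The plan is to differentiate the steady-state expression \eqref{eq:voltage} directly, treating $V_n = N_n/g_n^{\mathrm{tot}}$ as a quotient. Here the numerator is
\[
N_n=\sum_j E_j x_j g_j^{\mathrm{syn}}+\sum_j a_j g_j^{\mathrm{den}},
\]
and the denominator is $g_n^{\mathrm{tot}}=\sum_j x_j g_j^{\mathrm{syn}}+\sum_j g_j^{\mathrm{den}}+1$. The key structural point is that, in the steady-state model, $V_n$ is an explicit function of the synaptic conductances, the child activities $a_i$, and the dendritic conductances at compartment $n$. Each partial derivative is therefore a single-variable calculus computation with all other quantities held fixed. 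In particular, child voltages $V_i$ do not themselves depend on $g_i^{\mathrm{syn}}$ or $g_i^{\mathrm{den}}$ of the parent's inputs, because the tree is feed-forward toward the soma. The denominator is strictly positive, since leak conductance equals $1$ and all conductances are nonnegative, so the quotient is smooth.

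For the synaptic conductance, the quotient rule with $\partial N_n/\partial g_i^{\mathrm{syn}} = E_i x_i$ and $\partial g_n^{\mathrm{tot}}/\partial g_i^{\mathrm{syn}} = x_i$ gives
\[
\frac{E_i x_i\, g_n^{\mathrm{tot}} - N_n x_i}{(g_n^{\mathrm{tot}})^2} = x_i R_n^{\mathrm{tot}}\,(E_i - N_n/g_n^{\mathrm{tot}}).
\]
Substituting $V_n=N_n/g_n^{\mathrm{tot}}$ yields the first identity.

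The dendritic-conductance case is identical, with $E_i x_i$ replaced by $a_i$ and $x_i$ replaced by $1$. This gives $R_n^{\mathrm{tot}}(a_i-V_n)$.

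For the child voltage, $g_n^{\mathrm{tot}}$ does not depend on $V_i$, because the dendritic coupling conductance is voltage-independent. Only the numerator changes, through $a_i=f_i(V_i)$. The chain rule then gives $f_i'(V_i)\,g_i^{\mathrm{den}}/g_n^{\mathrm{tot}}$. When reactivation is disabled, $f_i'=1$.

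I expect no genuine obstacle. The only care needed is to make explicit that $g_n^{\mathrm{tot}}$ contains no dependence on child voltages or activities, and that the derivatives are partial, with the other inputs to compartment $n$ held fixed. The total derivatives through the tree are deferred to the path-gain composition in Theorem~\ref{thm:tree_backprop}.
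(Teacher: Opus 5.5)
Your proposal is correct and takes essentially the paper's approach: the paper only states that these sensitivities follow directly from the steady-state quotient in Eq.~\eqref{eq:voltage}, and your quotient-rule computation (plus the chain rule through $a_i=f_i(V_i)$ for the child-voltage case) is that differentiation written out. One small tightening: strict positivity of $g_n^{\mathrm{tot}}$ also requires nonnegative presynaptic activity $x_j\geq 0$, not only nonnegative conductances, although the quotient rule itself only needs $g_n^{\mathrm{tot}}\neq 0$.
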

    Eq.~\eqref{eq:sensitivities} gives the eligibility factors used by the local rules below: presynaptic activity or transmitted-child-activity difference, synaptic driving force, input resistance, and the local reactivation derivative. Setting $f_i'(V_i)=1$ recovers identity transfer.

    Each compartment therefore computes a conductance-weighted voltage: excitation pulls toward the excitatory reversal potential, leak and shunting inhibition pull toward zero, and dendritic coupling transfers child voltages toward the soma. Throughout the manuscript we write dendritic morphologies as rooted trees $[b_1,b_2,\dots,b_D]$, where each factor gives the fan-in from one dendritic stage to the next. Thus, $[3,3]$ has three proximal branches per soma and three distal branches per proximal branch, for nine distal leaves. Synaptic inputs terminate on branches rather than directly on the soma. Branches receive excitatory conductance-based inputs and, when enabled, learned input-driven inhibitory conductances; this is not a recurrent lateral inhibitory circuit. The additive and shunting cores are tree-matched and differ in branch-level integration. Branching creates path structure, conductances set local state, and shunting changes path gain through input resistance.

    \noindent\textbf{Notation.} A compact reference is provided in Appendix Table~\ref{tab:notation_reference}. We use $\delta_{0,u}^{(\ell),V}=\partial L/\partial V_{0,u}^{(\ell)}$ for the exact voltage-space soma error, $\delta_{0,u}^{(\ell),a}=\partial L/\partial a_{0,u}^{(\ell)}$ for its activation-space counterpart, $\boldsymbol{\delta}_0$ for the operational layer-wide teaching vector, $\bar{\delta}$ for its global scalar average, and $\delta_{u,n}^{(\ell)}=\partial L/\partial V_{u,n}^{(\ell)}$ for the compartment error. We use $\alpha_n^{\mathrm{cond}}$ for the conductance-only path gain and $\tilde{\alpha}_n$ for the effective path gain including reactivation derivatives; they coincide when reactivation is disabled. $N_E$ and $N_I$ denote the numbers of excitatory and inhibitory synapses per branch, $r_n^{\mathrm{4F}}$ the empirical 4F branch-soma covariance proxy, and $\phi_n$ the 5F bounded preconditioner. All conductances and presynaptic activities in the main setting are nonnegative; additive voltages may be signed comparators.
    
    \noindent\emph{Shunting inhibition as divisive gain control.}
    An inhibitory synapse with $E_{\mathrm{inh}}\!\approx\!0$ contributes current $(0{-}V_n)x_j g_j^{\mathrm{syn}}$ and increases $g_n^{\mathrm{tot}}$.
    Its sensitivity is $\partial V_n / \partial g_j^{\mathrm{syn}} = -x_j R_n^{\mathrm{tot}} V_n$, which corresponds to multiplicative attenuation (divisive normalization).
    Shunting is divisive at the voltage level, but its effect on firing rates can be subtractive in some regimes \cite{holt1997shunting}.
    Inhibitory plasticity can balance excitation dynamically \cite{vogels2011inhibitory}; our learned inhibitory conductances provide a trainable balancing mechanism.
    The same denominator also gives the direct connection to path gain. Writing $G_n^E(x)$ and $G_n^I(x)$ for the total excitatory and inhibitory synaptic conductance impinging on compartment $n$, and $G_n^{\mathrm{den}}=\sum_jg_j^{\mathrm{den}}$ for its total child-coupling conductance,
    \[
    R_n^{\mathrm{tot}}(x)=\left(1+G_n^E(x)+G_n^I(x)+G_n^{\mathrm{den}}\right)^{-1}.
    \]
    Increasing $G_n^I(x)$ lowers $R_n^{\mathrm{tot}}(x)$ and therefore multiplicatively suppresses every upstream path whose error must pass through compartment $n$.
    In this sense, inhibition can gate credit flow by changing the path gain, even when the inhibitory drive is feedforward rather than lateral. Prop.~\ref{prop:inhibitory_path_gain} formalizes this statement after the tree path gain is defined.
    
    \noindent\emph{Exact gradients for dendritic trees.}
    For one neuron/tree in layer $\ell$, let $V_{0,u}^{(\ell)}$ be the somatic voltage and define the exact voltage-space soma error as $\delta_{0,u}^{(\ell),V}:=\partial L/\partial V_{0,u}^{(\ell)}$.
    If a decoder receives soma activity $a_0=f_0(V_0)$ rather than raw voltage, the decoder supplies an activation-space error $\delta_0^a$ and the voltage-space boundary is $\delta_0^V=f_0'(V_0)\delta_0^a$.
    The theoretical derivation begins from the conductance-stage steady-state voltage. A branch may then apply a monotone secondary nonlinearity, which we call a reactivation, $a_n=f_n(V_n)$, before transmitting activity upward; the identity convention above covers the disabled case. All exact-error diagnostics below are computed as pre-reactivation voltage errors and therefore include the local derivatives $f_n'(V_n)$ along the path.
    
    \begin{theorem}[Conductance-Stage Backpropagation on a Dendritic Tree]
    \label{thm:tree_backprop}
    For a rooted dendritic tree with soma at node $0$ and unique parent $p(n)$ for each non-somatic compartment, the parent receives the reactivated activity $a_n=f_n(V_n)$, with $f_n$ equal to the identity when reactivation is disabled. The loss gradient satisfies
    \begin{equation}
    \frac{\partial L}{\partial V_n}
    = \frac{\partial L}{\partial V_{p(n)}}\, f_n'(V_n)R_{p(n)}^{\mathrm{tot}}\, g_{n\to p(n)}^{\mathrm{den}},
    \label{eq:tree_recursion}
    \end{equation}
    with boundary condition $\partial L / \partial V_0 = \delta_0^V$. Because the graph is a tree, each compartment has a unique path to the soma. We define the conductance-only path gain
    \begin{equation}
    \alpha_n^{\mathrm{cond}}
    = \prod_{(i\to k)\in \mathrm{path}(n\leadsto 0)} R_k^{\mathrm{tot}}\, g_{i\to k}^{\mathrm{den}},
    \label{eq:path_sum}
    \end{equation}
    and the exact effective path gain
    \begin{equation}
    \tilde{\alpha}_n
    = \prod_{(i\to k)\in \mathrm{path}(n\leadsto 0)}
    f_i'(V_i)\,R_k^{\mathrm{tot}}\,g_{i\to k}^{\mathrm{den}},
    \qquad
    \frac{\partial L}{\partial V_n} = \tilde{\alpha}_n\delta_0^V,
    \label{eq:effective_path_gain}
    \end{equation}
    with $\alpha_0^{\mathrm{cond}}=\tilde{\alpha}_0=1$. When reactivation is disabled, $f_i'(V_i)=1$ along the path and $\tilde{\alpha}_n=\alpha_n^{\mathrm{cond}}$.
    \end{theorem}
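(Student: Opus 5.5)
The plan is a chain-rule argument on the computational graph induced by the tree, followed by induction on depth. First I fix the dependency structure. In Eq.~\eqref{eq:voltage}, $V_n$ depends on the synaptic conductances and inputs at $n$, and on its children's voltages only through the transmitted activities $a_j=f_j(V_j)$. In turn, $V_n$ itself enters the rest of the network only through $a_n=f_n(V_n)$, which appears solely in the numerator of its unique parent's voltage $V_{p(n)}$. Here I treat the loss as a function of the compartment voltages with the synaptic inputs held fixed; the steady state is explicit, so no implicit solve is needed.

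The one structural point I must verify is that nothing else in the tree depends on $V_n$. The denominator $g^{\mathrm{tot}}_{p(n)}$ contains $g^{\mathrm{den}}_{n\to p(n)}$ but not $V_n$ or $a_n$. No other compartment receives $a_n$, because each non-somatic node has a unique parent. The only route from $V_n$ to $L$ is therefore
\[
V_n \to a_n \to V_{p(n)} \to \cdots \to V_0 \to L .
\]
Given this, the single-path chain rule yields
\[
\frac{\partial L}{\partial V_n}=\frac{\partial L}{\partial V_{p(n)}}\,\frac{\partial V_{p(n)}}{\partial V_n}.
\]
Proposition~\ref{prop:sensitivities}, applied at the parent compartment, supplies the local factor
\[
\frac{\partial V_{p(n)}}{\partial V_n}=f_n'(V_n)\,g^{\mathrm{den}}_{n\to p(n)}\,R^{\mathrm{tot}}_{p(n)},
\]
and substituting it gives Eq.~\eqref{eq:tree_recursion}. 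The boundary condition $\partial L/\partial V_0=\delta_0^V$ holds by definition; if the decoder reads $a_0=f_0(V_0)$, it equals $f_0'(V_0)\delta_0^a$ by one more chain-rule step.

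Next I unroll the recursion by induction on the depth $d(n)$, meaning the length of the unique path $n\leadsto 0$. The base case $d=0$ is the soma, where the empty product gives $\tilde\alpha_0=1$. For the inductive step, suppose $\partial L/\partial V_{p(n)}=\tilde\alpha_{p(n)}\delta_0^V$. Then Eq.~\eqref{eq:tree_recursion} gives
\[
\frac{\partial L}{\partial V_n}=\tilde\alpha_{p(n)}\,f_n'(V_n)\,R^{\mathrm{tot}}_{p(n)}\,g^{\mathrm{den}}_{n\to p(n)}\,\delta_0^V .
\]
The path from $n$ to the soma is the edge $n\to p(n)$ followed by the path from $p(n)$ to the soma, so this prefactor is exactly the product defining $\tilde\alpha_n$ in Eq.~\eqref{eq:effective_path_gain}. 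Finally, setting every $f_i\equiv\mathrm{id}$ makes each $f_i'=1$, which collapses $\tilde\alpha_n$ to $\alpha_n^{\mathrm{cond}}$ as given in Eq.~\eqref{eq:path_sum}.

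I expect the main obstacle to be making precise what is held fixed when differentiating, rather than the algebra itself. I will state explicitly that $L$ depends on tree voltages only through $V_0$, or through $a_0$ at the decoder. I will also note that $R^{\mathrm{tot}}_k$ does not depend on any $V_i$: it depends only on the conductances and $x$, and in particular the leak, synaptic and dendritic conductances do not vary with voltage. This voltage-independence is what keeps the recursion a clean product, without extra terms arising from differentiating the denominator.
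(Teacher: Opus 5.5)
Your proposal is correct and follows the paper's own argument: the chain rule on the tree-structured computation graph, with Prop.~\ref{prop:sensitivities} supplying the child-to-parent factor. Your depth induction, your check that $V_n$ reaches $L$ only through $a_n$ and its parent, and your check that $R^{\mathrm{tot}}_k$ is voltage-independent spell out what the paper's one-line proof leaves implicit.
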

    The path gain is not an additional parameter that the local rule must learn. It is the derivative induced by the current dendritic coupling, input resistance, and activation state. Exact transport diagnostics compute this derivative as an oracle; the restricted broadcast rules deliberately do not estimate it and therefore expose the cost of omitting path-specific transport.
    \begin{proof}
    Apply the chain rule on the tree-structured computation graph using Prop.~\ref{prop:sensitivities}.
    \end{proof}
    The one-step recursion in Eq.~\eqref{eq:tree_recursion} and the effective path product in Eq.~\eqref{eq:effective_path_gain} separate local child-to-parent transfer from the non-local somatic error for one soma/tree and one layer. Supplying an approximate layer-soma teaching vector, reusing a decoder-derived vector at another layer, and compressing $\delta_{0,u}^{(\ell),V}$ into an ancestry-shared or global scalar broadcast are additional potential feedback-approximation steps introduced by LocalCA, not consequences of the theorem.

    \begin{corollary}[Local--Global Factorization]
    \label{cor:factorization}
    The exact synaptic gradient at compartment $n$ factorizes as:
    \begin{equation}
    \frac{\partial L}{\partial g_i^{\mathrm{syn}}} =
    \underbrace{x_i\, R_n^{\mathrm{tot}}\, (E_i - V_n)}_{\text{synapse-local eligibility}} \;\cdot\;
    \underbrace{\frac{\partial L}{\partial V_n}}_{\text{compartment error}},
    \label{eq:factorization}
    \end{equation}
    where the eligibility term depends only on quantities available at the synapse $i$ (presynaptic activity $x_i$, input resistance $R_n^{\mathrm{tot}}$, driving force $E_i - V_n$), and the compartment error $\partial L / \partial V_n$ is the sole non-local quantity.
    \end{corollary}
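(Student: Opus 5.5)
The plan is to apply the chain rule once, at the single point where $g_i^{\mathrm{syn}}$ enters the computation graph, and then read off the factors from Proposition~\ref{prop:sensitivities}.

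\emph{Structural step.} In the steady-state model of Eq.~\eqref{eq:voltage}, the synaptic conductance $g_i^{\mathrm{syn}}$ of a synapse on compartment $n$ appears only in the numerator and denominator defining $V_n$. Every downstream quantity, and hence $L$, depends on $g_i^{\mathrm{syn}}$ only through $V_n$. These downstream quantities are the reactivated activity $a_n=f_n(V_n)$, the parent voltage $V_{p(n)}$, the further ancestors along $\mathrm{path}(n\leadsto 0)$, the soma, and the decoder.

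This uses the tree assumption of Theorem~\ref{thm:tree_backprop}. Compartment $n$ feeds exactly one parent, and no other compartment receives $g_i^{\mathrm{syn}}$ as an input. So the computation graph has a single edge out of $g_i^{\mathrm{syn}}$, and the multivariate chain rule reduces to one term:
\[
\frac{\partial L}{\partial g_i^{\mathrm{syn}}}=\frac{\partial L}{\partial V_n}\,\frac{\partial V_n}{\partial g_i^{\mathrm{syn}}}.
\]

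\emph{Local factor.} I substitute the first identity of Proposition~\ref{prop:sensitivities}, $\partial V_n/\partial g_i^{\mathrm{syn}} = x_i R_n^{\mathrm{tot}}(E_i-V_n)$, which gives Eq.~\eqref{eq:factorization} directly. If one wants to recheck this identity, it is the quotient rule on $V_n=N_n/g_n^{\mathrm{tot}}$. Here $\partial N_n/\partial g_i^{\mathrm{syn}}=E_ix_i$ and $\partial g_n^{\mathrm{tot}}/\partial g_i^{\mathrm{syn}}=x_i$, so
\[
\frac{\partial V_n}{\partial g_i^{\mathrm{syn}}}=\frac{E_ix_i - V_n x_i}{g_n^{\mathrm{tot}}}.
\]

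\emph{Locality interpretation.} Every quantity in the eligibility is available at the synapse: $x_i$ is presynaptic, $E_i$ is a fixed reversal potential, and $V_n$ and $R_n^{\mathrm{tot}}=1/g_n^{\mathrm{tot}}$ are state variables of compartment $n$ itself. The only remaining factor is $\partial L/\partial V_n$. By Theorem~\ref{thm:tree_backprop} it equals $\tilde{\alpha}_n\delta_0^V$, so it depends on the path gains and the somatic error and is the sole non-local quantity.

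\emph{Main obstacle.} The step needing care is the claim that $g_i^{\mathrm{syn}}$ has no second route into $L$. This would fail if the same conductance parameter were shared across compartments, or if inhibitory drive were recurrent. In the paper's setting each synapse terminates on a single branch and inhibition is feedforward and input-driven, so $x_i$ does not depend on $g_i^{\mathrm{syn}}$ and the single-edge argument holds. If parameters were shared, the gradient would instead be a sum of such factorized terms, one per compartment, and each term would still have the stated form.
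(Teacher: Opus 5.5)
Your proposal is correct and follows the paper's argument: one chain-rule step through $V_n$ on the tree-structured computation graph, then substitution of the first sensitivity in Proposition~\ref{prop:sensitivities}. Your quotient-rule check and your observation that $g_i^{\mathrm{syn}}$ has no second route into $L$ (synapses are branch-specific and inhibition is feedforward) make explicit what the paper leaves implicit.
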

    Exact-gradient reconstruction uses the measured $R_n^{\mathrm{tot}}$ from the forward pass. A biological implementation need not explicitly represent this factor with numerical precision: if branch resistance varies slowly or only by small amounts, it can be approximated or absorbed into branch-specific plasticity gain, and it provides a natural site for neuromodulatory regulation of learning rate.
    
    \begin{proposition}[Inhibitory control of conductance-stage path gain]
    \label{prop:inhibitory_path_gain}
    Let $\mathcal{A}(n)$ be the set of compartments on the path from compartment $n$ to the soma, excluding $n$ and including the parent compartments through which the error must pass. Holding dendritic coupling conductances fixed, an added inhibitory conductance $\Delta G_k^I(x)\geq 0$ at any $k\in\mathcal{A}(n)$ changes the conductance-stage path gain by
    \begin{equation}
    \frac{\alpha_n^{\mathrm{cond}}(x;\Delta G^I)}{\alpha_n^{\mathrm{cond}}(x;0)}
    =
    \prod_{k\in\mathcal{A}(n)}
    \frac{g_k^{\mathrm{tot}}(x)}
    {g_k^{\mathrm{tot}}(x)+\Delta G_k^I(x)} ,
    \qquad
    \frac{\partial \log \alpha_n^{\mathrm{cond}}}{\partial G_k^I} =
    \begin{cases}
    -R_k^{\mathrm{tot}}, & k\in\mathcal{A}(n),\\
    0, & k\notin\mathcal{A}(n).
    \end{cases}
    \label{eq:inhibition_path_gain_ratio}
    \end{equation}
    \end{proposition}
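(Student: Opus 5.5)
The plan is to read both identities directly off the product formula for $\alpha_n^{\mathrm{cond}}$ in Theorem~\ref{thm:tree_backprop}. Inhibition enters that product only through input resistances, and the coupling conductances are held fixed. Each edge $(i\to k)$ on $\mathrm{path}(n\leadsto 0)$ contributes the factor $R_k^{\mathrm{tot}}g_{i\to k}^{\mathrm{den}}$, where $k$ is the parent endpoint. So the set of compartments whose resistance appears in $\alpha_n^{\mathrm{cond}}$ is exactly the set of parents along the path: it excludes $n$ and includes the soma. This set is $\mathcal{A}(n)$.

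I would first rewrite Eq.~\eqref{eq:path_sum} as
\[
\alpha_n^{\mathrm{cond}}=\Bigl(\prod_{(i\to k)\in\mathrm{path}(n\leadsto 0)} g_{i\to k}^{\mathrm{den}}\Bigr)\prod_{k\in\mathcal{A}(n)}\frac{1}{g_k^{\mathrm{tot}}}.
\]
Because the graph is a tree, the path is unique and visits each $k\in\mathcal{A}(n)$ exactly once. Each $g_k^{\mathrm{tot}}$ therefore appears with multiplicity one.

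Next, by the expression $R_k^{\mathrm{tot}}=(1+G_k^E+G_k^I+G_k^{\mathrm{den}})^{-1}$, adding $\Delta G_k^I$ at compartment $k$ replaces $g_k^{\mathrm{tot}}$ by $g_k^{\mathrm{tot}}+\Delta G_k^I$ and changes nothing else in the product. This uses the conductance-stage convention that $\alpha^{\mathrm{cond}}$ contains no voltage-dependent factors: the $f_i'(V_i)$ terms live only in $\tilde\alpha_n$. So even though $\Delta G^I$ changes the voltages $V$, it does not change $\alpha^{\mathrm{cond}}$ through them. Taking the ratio, the coupling prefactor cancels. Each $k\in\mathcal{A}(n)$ contributes $g_k^{\mathrm{tot}}/(g_k^{\mathrm{tot}}+\Delta G_k^I)$, and compartments off the path contribute nothing, which gives the first identity. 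Perturbations at $k\notin\mathcal{A}(n)$ are allowed in the statement, and they drop out because they never enter the product.

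For the log-derivative, I would take logarithms: $\log\alpha_n^{\mathrm{cond}}=\sum_{\text{edges}}\log g^{\mathrm{den}}-\sum_{k\in\mathcal{A}(n)}\log g_k^{\mathrm{tot}}$. Differentiating with $\partial g_k^{\mathrm{tot}}/\partial G_k^I=1$ gives $-1/g_k^{\mathrm{tot}}=-R_k^{\mathrm{tot}}$ for $k\in\mathcal{A}(n)$ and $0$ otherwise. Equivalently, one can differentiate the ratio formula at $\Delta G^I=0$.

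The main obstacle is bookkeeping rather than analysis. I need to pin down that the "added conductance" is a change in the total $G_k^I$ entering $g_k^{\mathrm{tot}}$, and confirm that it does not alter $g^{\mathrm{den}}$ or any other compartment's total conductance. The latter holds because each child contributes $g^{\mathrm{den}}$, not its own conductance, to its parent's denominator. I also need to state explicitly that the endpoint $n$ itself is excluded, since $R_n^{\mathrm{tot}}$ never appears in $\alpha_n^{\mathrm{cond}}$.
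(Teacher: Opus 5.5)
Your proposal is correct and follows the paper's own argument: substitute $R_k^{\mathrm{tot}}=1/g_k^{\mathrm{tot}}$ into the path product of Eq.~\eqref{eq:path_sum}, observe that added inhibition changes only the denominators of the on-path parent compartments, and read off the ratio and the log derivative. Your extra bookkeeping makes explicit what the paper's two-line proof leaves implicit. Each parent appears exactly once on the unique tree path. Neither $n$ itself nor any off-path compartment enters the product. And $\alpha_n^{\mathrm{cond}}$ contains no voltage-dependent factors, so the voltage changes that inhibition induces do not matter.
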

    \begin{proof}
    Substitute $R_k^{\mathrm{tot}}=1/g_k^{\mathrm{tot}}$ into the path-gain product in Eq.~\eqref{eq:path_sum}. Adding inhibitory conductance changes only the denominator at compartments on the path from $n$ to the soma, which gives the product ratio and the log derivative in Eq.~\eqref{eq:inhibition_path_gain_ratio}.
    \end{proof}
    
    Prop.~\ref{prop:inhibitory_path_gain} makes two directed predictions. First, inhibition is a path gate: it attenuates all credit paths below the inhibited compartment. Second, inhibition is useful for restricted-feedback local learning only when this attenuation makes the distribution of $\alpha_n^{\mathrm{cond}}$ more uniform or better aligned with the task. If inhibition is too strong, too heterogeneous, or generated by a poorly calibrated population of inhibitory synapses, it can suppress useful signals and hurt learning. In the additive case, the same inhibitory input changes voltages and local eligibilities, but it does not enter the conductance-stage path recursion as a multiplicative resistance gate.
    
    \noindent\textbf{When does shunting compress path gains?}
    Attenuation is not by itself a concentration result. Writing the log path gain as $z$ and the path-dependent inhibitory attenuation as $\beta$, the perturbed field is $z'=z-\beta$ and becomes less dispersed only when inhibition preferentially attenuates initially high-gain paths. Appendix Eq.~\eqref{eq:compression_condition} gives the exact covariance condition and explains why it need not agree with coefficient-of-variation or error-rank summaries that average over different axes. We therefore test path-gain dispersion, exact-error geometry, and inhibition interventions directly rather than treating compression as a theorem.
    
    Conditional on an exact layer-soma error, the exact and 3F within-tree gradients differ only by the effective path gain. Appendix Prop.~\ref{prop:path_gain_alignment} shows that their per-example cosine decreases with the eligibility-weighted dispersion of that gain. This explains why a more uniform gain field is useful, but does not solve the inter-layer problem: practical LocalCA must first approximate the correct layer-soma error. The implemented rules are batch averages, and the 4F/5F factors introduced below are slow empirical preconditioners rather than additional error signals. We therefore measure compatibility with the complete compartment-error field directly.
    Whether shunting improves this approximation is an empirical question; the matched gradient, activation, and backward-only diagnostics below separate the supported full-update effect from stronger causal interpretations.
    In the empirical networks, each dendritic branch layer may apply a reactivation $a_n=f_n(V_n)$ before passing activity to the next stage; unless otherwise noted, the manuscript sweeps use a learnable bounded tanh reactivation.
    
    \section{Local Learning Rules}
    \label{sec:local_rules}
    
    \noindent\emph{Broadcast error approximation.}
    We approximate the exact compartment error $\partial L / \partial V_n$ (Corollary~\ref{cor:factorization}) with a broadcast signal $e_n^V$ derived from a soma/core teaching vector. For a linear readout this vector is $W_{\mathrm{dec}}^\top (\partial L / \partial \hat{y})$; for a nonlinear decoder we use the corresponding decoder-input Jacobian product $J_{\mathrm{dec}}(h_{\mathrm{core}})^\top (\partial L / \partial \hat{y})$ so that feedback is defined in soma/core coordinates.
    If this vector is used at an earlier matched-width dendritic layer, practical LocalCA assumes neuron-index correspondence and reuses the coordinate as an approximate layer-soma error. This inter-layer teaching approximation is distinct from the within-tree question addressed by Theorem~\ref{thm:tree_backprop}: given a layer-soma error, how faithfully can a restricted branch-level field approximate $\partial L/\partial V_n$?
    The main configurations use a \textbf{matched-width signal with scalar fallback}. If a dendritic stage has the same width as the soma/core teaching vector, that vector is reused coordinate-wise; at wider branch stages, the vector is reduced to one scalar per example and broadcast to every compartment. This is not a per-soma signal because a soma coordinate is not repeated over all of its descendants. The \textbf{ancestry-shared soma signal} performs that operation explicitly: for a soma $u$, its teaching coordinate is repeated over all branch compartments descended from $u$. This uses one teaching coordinate per neuron rather than one independently specified error per compartment; dendritic ancestry supplies the routing map without dense cross-neuron feedback. Conditional on an exact layer-soma error, this mode isolates the within-tree sharing approximation. Practical multi-layer LocalCA can additionally reuse final-core coordinates as approximate earlier-layer soma errors, as separated in Fig.~\ref{fig:gradient_fidelity}D.

    A \textbf{global scalar} compresses the soma/core vector to one number per example, $\bar{\delta}_b=d_0^{-1}\sum_{c=1}^{d_0}\delta_{0,bc}^{a}$, and broadcasts that number to all neurons and compartments, where $d_0$ is the soma/core teaching-vector dimension and $\mathbf{1}$ is an all-ones vector over the receiving stage. For higher-bandwidth controls, $P_K$ is a fixed random projection from soma-error coordinates to $K$ feedback channels, $q_{n,k}$ are fixed compartment-specific mixing coefficients, and $\Gamma_n$ denotes the pathway-structured transport map for compartment $n$. We therefore distinguish six feedback objects: \textbf{global scalar}, $e_n=\bar{\delta}\mathbf{1}$; \textbf{matched-width/scalar-fallback}, the hybrid just defined; \textbf{ancestry-shared soma}, $e_{u,n}$ is one coordinate repeated over descendants of soma $u$; \textbf{rank-$K$ random}, $c=P_K\delta_0^a$ and $e_n=\sum_k q_{n,k}c_k$; \textbf{path-structured}, $e_n=\lambda_{\mathrm{mix}}\bar{\delta}\mathbf{1}+(1-\lambda_{\mathrm{mix}})\Gamma_n(\delta_0^a)$ with mixing weight $\lambda_{\mathrm{mix}}\in[0,1]$; and the \textbf{transported oracle}, $e_n^{\mathrm{transport}}=\tilde{\alpha}_n\delta_0^V=\partial L/\partial V_n$. The oracle is an analysis upper bound because it supplies the exact path-transported error from Theorem~\ref{thm:tree_backprop}. When reactivation is enabled, exact path transport includes activation derivatives along the branch-to-soma path. Appendix Table~\ref{tab:effective_broadcast_dimensionality} summarizes the dimensionality and spatial constraints of these objects.

    The main matched-width/scalar-fallback condition is more restrictive than a true ancestry-shared soma broadcast at distal and proximal branch stages, but less restrictive at its matched-width stage than one layer-wide scalar. With shunting inhibition, this hybrid field supports nontrivial learning on standard classification tasks. Scalar, ancestry-shared, routed, low-rank, and path-transport controls expose which losses arise from feedback dimensionality, ancestry, inter-layer reuse, and within-tree transport. A compartment-local mismatch is retained as a negative control because it performs much worse than the somatic-error broadcasts (Fig.~\ref{fig:rule_feedback_controls}B).
    
    \noindent\emph{Three-factor learning rule (3F).}
    \begin{definition}[3F Update]
    For synaptic and dendritic conductances, using $a_j=f_j(V_j)$ and the identity convention when reactivation is disabled:
    \begin{equation}
    \widehat{\nabla}_{g_j^{\mathrm{syn}}}L = \langle x_j R_n^{\mathrm{tot}} (E_j - V_n)\, e_n^V \rangle_B,
    \qquad
    \widehat{\nabla}_{g_j^{\mathrm{den}}}L = \langle R_n^{\mathrm{tot}} (a_j - V_n)\, e_n^V \rangle_B,
    \label{eq:3f}
    \end{equation}
    where $\langle\cdot\rangle_B$ denotes the batch average.
    \end{definition}
    
    The three factors are presynaptic activity $x_j$ (or a transmitted child activity difference), postsynaptic modulation through driving force and input resistance, and the broadcast voltage-space error $e_n^V$.
    The same rule applies to excitatory and inhibitory synapses. The sign difference comes from the driving force $(E_j - V_n)$.
    We write $\widehat{\nabla}_g L$ as the local gradient estimate supplied to the optimizer; the optimizer applies the usual learning rate and descent step. Equivalently, one could absorb the sign convention into $e_n^V$ and treat it as a descent signal.
    
    \noindent\textbf{Additive control.}
    Rule~\eqref{eq:3f} is the local gradient of the \emph{shunting} voltage $V_n = \sum E_j g_j x_j / g_n^{\mathrm{tot}}$.
    For the additive E/I control, inhibition enters as a fixed signed voltage contribution rather than through the conductance denominator:
    \[
    V_n^{\mathrm{add}}=\sum_{j\in E}g_j^E x_j^E-\sum_{j\in I}g_j^I x_j^I+\sum_k g_k^{\mathrm{den}}a_k .
    \]
    Equivalently, $\partial V_n^{\mathrm{add}}/\partial g_j^{\mathrm{syn}}=s_jx_j$ with $s_j=+1$ for excitatory and $s_j=-1$ for inhibitory synapses, and $\partial V_n^{\mathrm{add}}/\partial g_k^{\mathrm{den}}=a_k$. The additive local gradient therefore has no driving-force or $R_n^{\mathrm{tot}}$ terms ($R_n^{\mathrm{tot}}\equiv1$ by definition since there is no denominator).
    For the additive tree, the conductance-only upward gain is $\alpha_n^{\mathrm{add}}=\prod_{(i\to k)\in\mathrm{path}(n\leadsto0)}g_{i\to k}^{\mathrm{den}}$ and the exact effective gain is $\tilde{\alpha}_n^{\mathrm{add}}=\prod_{(i\to k)}f_i'(V_i)g_{i\to k}^{\mathrm{den}}$; these coincide when reactivation is disabled. Thus additive controls have an exact tree gain, but inhibition does not enter that gain through a conductance denominator.
    Throughout, each architecture uses the learning rule derived from its own forward-pass dynamics. This keeps the comparison architecture-matched rather than applying a shunting-derived rule to an additive model; Appendix Table~\ref{tab:shunting_additive_mechanism} lists the corresponding forward and local-gradient terms.
    Let $\theta$ denote an unconstrained optimizer parameter and $g=\mathrm{softplus}(\theta)$ its nonnegative physical conductance. Eqs.~\eqref{eq:3f}--\eqref{eq:5f} give conductance-space gradients; before assigning a gradient to $\theta$, we multiply by $\partial g/\partial\theta$. The exact-gradient reconstruction diagnostics include this parameterization factor; Appendix~\ref{app:implementation} gives the compact shunting/additive comparison and Appendix Table~\ref{tab:eq_code_map} maps the equations to their implementation.
    
    \noindent\emph{Practical heuristic wrappers: 4F and 5F.}
    The 4F and 5F variants add slow branch-level preconditioners to the same fast eligibility and broadcast error; they are empirical stabilizers, not additional task-error channels. 4F uses a branch-soma covariance proxy and 5F adds a bounded predictability factor,
    \[
    r_n^{\mathrm{4F}}=
    \frac{\Cov_B(\bar V_{n,b},\bar V_{0,b})}
    {\sqrt{\Var_B(\bar V_{n,b})\Var_B(\bar V_{0,b})}+\varepsilon},
    \qquad
    \phi_n=\frac{\Var(V_n)}{\sigma_{\mathrm{res},n}^2+\varepsilon},
    \]
    where $\bar V_{n,b}$ and $\bar V_{0,b}$ are scalar summaries of branch and soma activity for example $b$, $\Cov_B$ and $\Var_B$ are computed across batch examples, $\sigma_{\mathrm{res},n}^2$ is the online residual variance from predicting branch activity with its configured parent or soma proxy, and $\varepsilon>0$ stabilizes the denominators. The numerator $\Var(V_n)$ uses the same configured online variance estimator. The implementation clamps $r_n^{\mathrm{4F}}\in[0.1,2.0]$ and $\phi_n\in[0.25,4.0]$. The appendix specifies the averaging axes, EMA initialization, detaching, and clamp rationale. Empirically, 4F alone gives little improvement over 3F; the practical rule is the 5F update below.
    
    \begin{definition}[5F Update]
    \begin{equation}
    \widehat{\nabla}_{g_j^{\mathrm{syn}}}L = r_n^{\mathrm{4F}} \phi_n \langle x_j R_n^{\mathrm{tot}} (E_j - V_n)\, e_n^V \rangle_B,
    \qquad
    \widehat{\nabla}_{g_j^{\mathrm{den}}}L = r_n^{\mathrm{4F}} \phi_n \langle R_n^{\mathrm{tot}} (a_j - V_n)\, e_n^V \rangle_B.
    \label{eq:5f}
    \end{equation}
    \end{definition}
    Eq.~\eqref{eq:5f} is therefore the main practical LocalCA update; Algorithm~\ref{alg:main_localca_update} (Appendix~\ref{app:implementation}) gives the full forward-pass-to-weight-update order. The 3F rule uses the theorem-derived local eligibility with an approximate broadcast error, while 4F and 5F add empirical reliability factors.
    A feedback-alignment-style random-broadcast argument is included in Appendix~\ref{app:theory}; it is supportive but not central to the main mechanism claim.
    
    \section{Experiments}
    \label{sec:experiments}
    
    \begin{figure}[t]
    \centering
    \includegraphics[width=\textwidth]{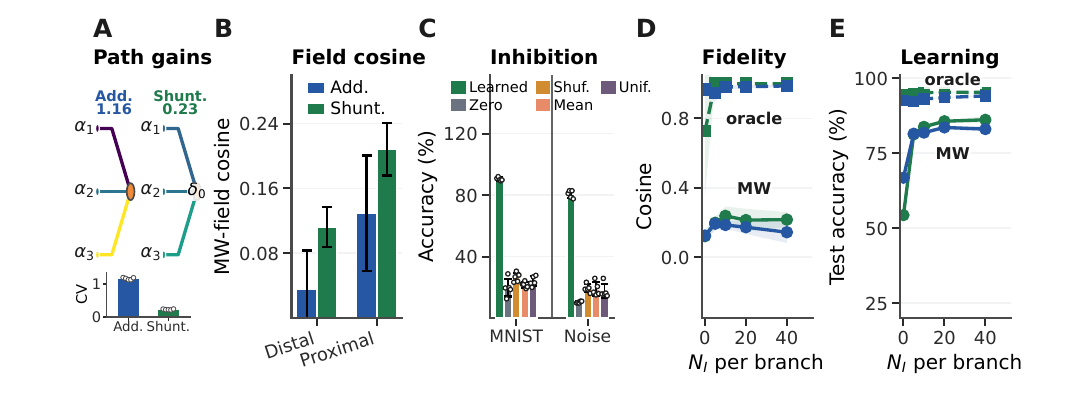}
    \caption{\textbf{Mechanistic chain from path gains to learning.}
    (A) Conductance-stage path-gain field $\alpha^{\mathrm{cond}}$ at $N_I{=}5$; the inset shows the five paired-seed CV values, with lower shunting CV in every pair. (B) Stage-resolved cosine between the matched-width/scalar-fallback field and the exact MNIST error on distal and proximal compartments, averaged over train, validation, and test batches within each of five checkpoints. The width-matched somatic stage is excluded because its cosine is one by construction in both cores. Fidelity is low on both dendritic stages, and the five-checkpoint comparison is descriptive rather than evidence for a cross-core ordering. (C) Post-training intervention in the learned shunting model: inhibition is zeroed, sample-shuffled, replaced by a per-branch batch mean, or replaced by one uniform matched mean. (D,E) Fidelity and learning on noise resilience; green denotes shunting, blue additive, solid lines matched-width/scalar-fallback feedback (MW), and dashed lines exact transported error (oracle). CV denotes the coefficient of variation over the sampled path-gain field and $N_I$ the number of inhibitory synapses per branch. Metrics and interventions: Appendix~\ref{app:extra_results}. Error bars are $\pm$1 s.d. over checkpoints.}
    \label{fig:mechanistic}
    \end{figure}
    
    \begin{figure}[t]
    \centering
    \includegraphics[width=\textwidth]{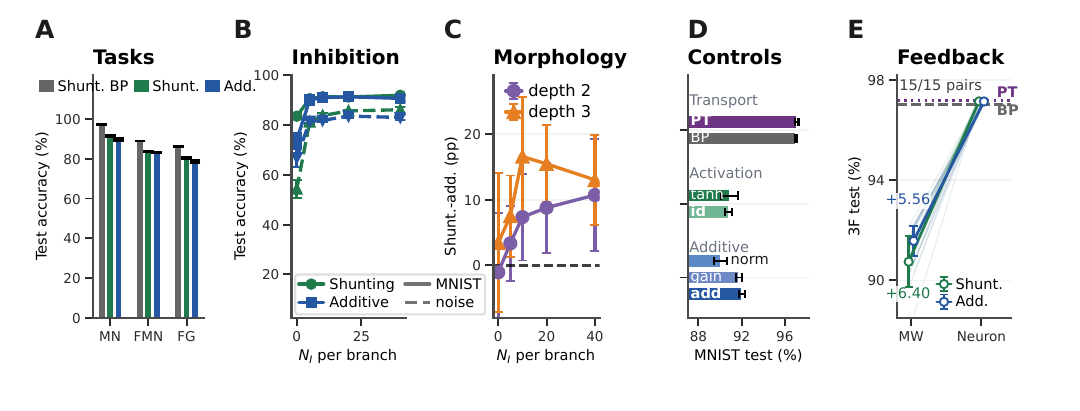}
    \caption{\textbf{Matched-capacity performance, regime dependence, and feedback definition.}
    (A) Matched shunting backpropagation reference (grey, labelled Shunt.\ BP) vs.\ 5F LocalCA with matched-width/scalar-fallback feedback; MN, FMN, and FG denote MNIST, Fashion-MNIST, and figure-ground MNIST. Additive backpropagation references are reported in Table~\ref{tab:gap_closing} where matched five-seed runs are available. (B) Inhibitory dose-response. (C) Descriptive morphology-by-inhibition performance; the detailed map and its initialization limitation are in Appendix Fig.~\ref{fig:morphology_ie_regime}. (D) MNIST controls: BP and exact path transport (PT); learned tanh versus identity transfer (id.); normalized eligibility (+norm); gain-matched additive eligibility (gain); and the additive core (add.). (E) Fifteen-seed 3F intervention: replacing matched-width/scalar-fallback feedback (MW) by one neuron-indexed coordinate shared within each modeled tree raises accuracy by $6.40$ points in shunting and $5.56$ points in additive networks; all paired seeds improve. Dashed and dotted lines mark the matched shunting BP and exact-transport references. Values/protocols: Table~\ref{tab:gap_closing}; Appendix~\ref{app:extra_results}. Error bars denote $\pm$1 s.d.}
    \label{fig:competence_regime}
    \end{figure}
    
    \noindent\emph{Setup.}
    We use architecture-matched additive/shunting cores with local or backprop training on MNIST \cite{lecun1998gradient}, Fashion-MNIST \cite{xiao2017fashionmnist}, figure-ground MNIST, and noise resilience. CIFAR-10 \cite{krizhevsky2009learning}, cue routing, and DFA are stress or boundary controls. Unless stated otherwise, local runs use 5F with matched-width/scalar-fallback feedback and main results use five seeds. Transported error is an oracle, not a proposed biological rule. Appendices~\ref{app:extra_results}--\ref{app:synthetic_tasks} and Tables~\ref{tab:reproducibility_summary}--\ref{tab:compute_resources} specify metrics, implementation, tasks, seeds, settings, and compute.
    
    \noindent\emph{From exact factorization to credit-signal fidelity.}
    Exact-gradient reconstruction matches autograd. At non-somatic parameter blocks, an archived three-checkpoint cohort favors shunting in direction and scale, but a five-seed replication does not reproduce the directional contrast. Whole-model concatenation reverses because an aligned soma block carries different gradient energy, and an identity-transfer control finds better shunting direction but slightly better additive learning. We therefore treat cross-core direction as sensitive to activation and initialization conditions rather than as an architecture-wide advantage (Fig.~\ref{fig:gradient_fidelity}A--C; Appendix Fig.~\ref{fig:mechanistic_audit}; Tables~\ref{tab:matched_3f_gradient}--\ref{tab:identity_transfer_3f}).

    Feedback controls are decisive. Exact layer-soma errors plus path transport reconstruct autograd, whereas neuron-wise sharing omits path gains and the main hybrid also collapses wider stages to a scalar (Fig.~\ref{fig:gradient_fidelity}D). Exact transport brings 3F and 5F to the backprop ceiling; replacing scalar fallback by one soma coordinate per neuron's compartments closes most of the 3F MNIST gap in both cores (Figs.~\ref{fig:competence_regime}E and \ref{fig:rule_feedback_controls}; Appendix Tables~\ref{tab:exact_transport_factorial} and \ref{tab:feedback_definition_control}). A compartment-local mismatch fails as a teaching signal, while the noise-resilience ladder shows that additional bandwidth or path information can help without guaranteeing monotonic improvement (Fig.~\ref{fig:rule_feedback_controls}). Since current aliases use different gate-initialization policies, only the paired feedback substitution within each core is causal. An explicit 2x2 factorial finds equal 3F accuracy under analytical initialization and a modest additive advantage under occupancy calibration, confirming that the cross-core sign is policy-dependent (Appendix Table~\ref{tab:init_policy_factorial}). This regular-tree control identifies collapsed neuron identity as a major bottleneck, not realistic morphology as an optimal router.
    
    \noindent\emph{Mechanistic chain: path gains, broadcast fidelity, and oracle transport.}
    At matched inhibition, shunting narrows the conductance-stage path-gain distribution: at $N_I{=}5$, its CV is about $80\%$ lower than additive, with the same ordering in all five pairs and at every nonzero inhibitory count (Fig.~\ref{fig:mechanistic}A). This unweighted statistic is not the full activation-dependent gain or gradient, so lower CV does not imply better direction. A stage-resolved analysis also shows that the matched-width/scalar-fallback field is poorly aligned with exact errors on distal and proximal compartments in both cores (Fig.~\ref{fig:mechanistic}B). We therefore do not use pooled error-field geometry or fidelity as evidence for a shunting-specific compatibility advantage; unrestricted SVD is retained only as a descriptive geometry diagnostic (Appendix Fig.~\ref{fig:inhibition_causality_error_rank}).
    
    Zeroing, shuffling, or clamping learned inhibition disrupts the trained computation, showing sample dependence and rejecting a matched global conductance load (Fig.~\ref{fig:mechanistic}C). Exact transport reconstructs compartment errors and largely closes the noise-resilience learning gap (Fig.~\ref{fig:mechanistic}D,E).
    
    \noindent\emph{Matched-capacity performance.}
    Under the main 5F rule with matched-width/scalar-fallback feedback, shunting LocalCA remains close to matched backprop on the three main classification tasks (Table~\ref{tab:gap_closing}; Fig.~\ref{fig:competence_regime}). This is not an intrinsic 3F limit: neuron-wise feedback closes the MNIST gap in both cores, and exact transport reaches the shunting reference with 3F or 5F and either decoder update. Activation and additive gain/normalization controls do not explain the main-condition contrast (Fig.~\ref{fig:competence_regime}D; Appendix Table~\ref{tab:exact_transport_factorial}; Fig.~\ref{fig:additive_norm_control}). Rule and sensitivity details are in Appendix Table~\ref{tab:matched_rule_comparison} and Figs.~\ref{fig:five_factor_sensitivity}--\ref{fig:verification_appendix}.
    
    \noindent\emph{Regime dependence of the shunting advantage.}
    With matched-width/scalar-fallback feedback, shunting is modestly better on MNIST, Fashion-MNIST, and figure-ground MNIST, and the gap grows on inhibition-sensitive noise resilience. The performance gap is not monotonic in $N_I$ or nominal tree depth (Fig.~\ref{fig:competence_regime}B,C; Appendix Fig.~\ref{fig:additional_stress_tests}). We treat the morphology grid as descriptive because analytical initialization places some shunting couplings at its $10^{-6}$ floor.

    \begin{figure}[htbp]
    \centering
    \includegraphics[width=\textwidth]{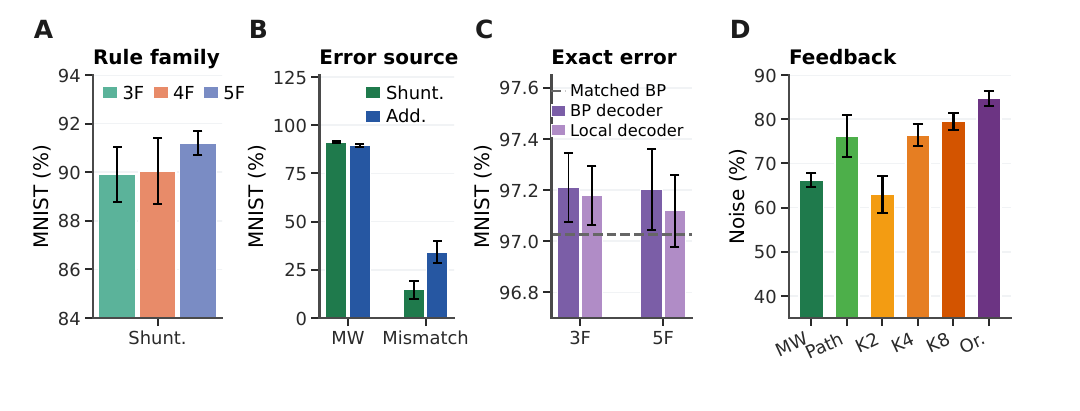}
    \caption{\textbf{Rule and feedback controls identify the principal learning bottleneck.}
    (A) Within-shunting comparison of 3F, 4F, and 5F in an archived three-seed cohort with matched feedback, decoder, optimizer, and effective parameter-group rates (descriptive). (B) Error-source negative control under the same local decoder: matched-width/scalar-fallback feedback (MW) supports learning, whereas a compartment-local mismatch does not ($n{=}3$); the cross-core ordering is not interpreted. (C) Five-seed exact-transport factorial: 3F and 5F both approach matched backpropagation with either a backpropagated or locally updated decoder; note the ordinate spans only $0.9$ percentage points. (D) Noise-resilience feedback ladder: MW, recursive Path, random low-rank K2/K4/K8, and transported Oracle. Together, the panels show that the feedback field, rather than a failure of the local eligibility factorization, sets the dominant practical limit. Error bars are $\pm1$ s.d.}
    \label{fig:rule_feedback_controls}
    \end{figure}
    
    \section{Discussion}
    \label{sec:discussion}
    The central contribution of this study is the exact separation of synapse-local eligibility---presynaptic activity, driving force, and input resistance---from path-specific compartment error. This turns LocalCA into a problem of approximating a defined credit field. The feedback analysis explains the weak matched-width/scalar-fallback 3F result: wider stages collapse neuron identity, whereas neuron-wise feedback nearly matches the performance of exact transport and backpropagation. This strengthens the local-eligibility result but does not establish a claim about realistic dendritic topology.

    Shunting is not universally beneficial, more inhibition is not always better, and 5F is an empirical preconditioner. Its directional contrast varies with activation and initialization conditions, although interventions show that learned inhibitory conductance is sample-dependent and necessary to the trained forward computation. Identity-transfer and fixed-state controls identify transfer derivatives, eligibility, and feedback identity---not isolated backward attenuation---as determinants of the final update (Appendix Fig.~\ref{fig:mechanistic_audit}).

    The model predicts that a focal conductance change should preferentially alter descendant plasticity signals, that current-matched shunting and additive perturbations should change teaching-signal gain differently, and that shared-ancestry branches should show more similar modulation than depth-matched branches from different subtrees. Testing these predictions requires simultaneous control of local input, voltage, and somatic teaching state (Appendix Table~\ref{tab:experimental_predictions}).

    A soma-level error could be carried by apical, burst, plateau, or neuromodulatory signals, but we do not identify a universal carrier. The requirement exposed here is a neuron-indexed teaching coordinate that influences dendritic plasticity; its biological generation remains open.

    The theorem assumes steady-state, non-spiking trees with one path per compartment. Active nonlinearities enter through local derivatives, whereas temporal calcium, spiking, recurrent paths, learned sparsity, and auxiliary objectives require extensions. The inhibition studied here is input-driven, and post-training interventions establish forward necessity rather than isolated credit-assignment causality. We therefore claim a credit-geometry mechanism supported by exact reconstruction, gradient diagnostics, feedback controls, and oracle transport---not competitive large-scale vision performance or sufficiency of one global scalar.

    \ifdefined\localcaauthorversion
    \section*{Acknowledgments}
    This work was supported in part by a gift from the Chan Zuckerberg Initiative Foundation to establish the Kempner Institute for the Study of Natural and Artificial Intelligence at Harvard University.
    \fi
    

    \appendix
    \renewcommand{\thefigure}{S\arabic{figure}}
    \setcounter{figure}{0}
    \renewcommand{\thetable}{S\arabic{table}}
    \setcounter{table}{0}
    
    \section{Supplementary Empirical Results}
    \label{app:extra_results}
    This appendix records the supporting diagnostics, implementation details, and task protocols needed to interpret the paper's main claims. It is organized by function: empirical diagnostics and boundary cases; implemented model and optimizer details; theory-adjacent extensions; and benchmark and synthetic-task protocols.

    \paragraph{Evidence map.}
    The exact-gradient and 3F claims are supported by the archived-checkpoint analysis, five-seed replication, occupancy, and backward-only controls below; the feedback-bottleneck claim by the exact-transport and neuron-wise factorials; the conductance interpretation by the inhibition intervention and operating-point analysis; and robustness by held-out seeds, stress tests, explicit inhibitory cells, and the CIFAR-10 control ladder. Main-text claims cite the corresponding supplementary figure or table directly.
    
    \paragraph{Scope notes.}
    Several controls are intentionally negative or mixed. The log-gain covariance margin from Eq.~\eqref{eq:compression_condition} is not used as a positive main result, and the pathway-vector feedback variant does not beat the best unstructured low-rank cue-routing control. These diagnostics define the scope of the main claims.

    \begin{table}[htbp]
    \centering
    \scriptsize
    \setlength{\tabcolsep}{2.5pt}
    \renewcommand{\arraystretch}{1.05}
    \begin{tabular}{@{}L{0.19\linewidth}L{0.25\linewidth}L{0.22\linewidth}L{0.27\linewidth}@{}}
    \toprule
    \textbf{Approach} & \textbf{Teaching or feedback object} & \textbf{Within-neuron biophysics} & \textbf{Relation to exact credit} \\
    \midrule
    Backpropagation \cite{rumelhart1986learning} & Exact reverse-mode error at every unit & Point units in the usual baseline & Computes the exact parameter gradient. \\
    FA / DFA \cite{lillicrap2016random,nokland2016dfa} & Fixed random backward or direct feedback & None required & Relies on learned alignment with an inexact feedback map. \\
    Target/control/predictive methods \cite{lee2015dtp,meulemans2021dfc,millidge2021predictive} & Learned targets, feedback control, or prediction errors & Circuit-level state, not a conductance tree & Reconstructs or approximates layer credit through auxiliary dynamics. \\
    Forward-only methods \cite{hinton2022forward,dellaferrera2022pepita} & Local goodness targets or output-error input perturbations & None required & Avoids an explicit exact reverse pass; does not derive compartment credit. \\
    Supervised local learning \cite{ma2024auglocal,lv2025dll,erdogan2025ebd,kao2024ccl} & Auxiliary local targets, broadcasts, or paired pathways & Dendritic structure in selected models & Optimizes practical local objectives rather than the exact conductance-tree error field. \\
    Dendritic teaching circuits \cite{guerguiev2017segregated,sacramento2018dendritic,payeur2021burst,greedy2022burstccn} & Segregated apical/basal, plateau, or burst signals & Compartmental circuit mechanisms & Constructs biologically motivated teaching signals that approximate backpropagation. \\
    \textbf{This work} & Restricted soma coordinates routed over a dendritic tree & E/I conductance, shunting, and morphology & Derives the exact conductance-tree credit object, then measures how well restricted feedback approximates it. \\
    \bottomrule
    \end{tabular}
    \caption{\textbf{Position relative to representative local-credit approaches.} The comparison separates the paper's mechanistic question---how within-neuron conductance and topology shape the exact required credit---from the complementary goal of designing scalable alternatives to backpropagation.}
    \label{tab:related_local_learning}
    \end{table}
    
    \begin{table}[htbp]
    \centering
    \scriptsize
    \setlength{\tabcolsep}{3pt}
    \renewcommand{\arraystretch}{1.06}
    \begin{tabular}{@{}L{0.19\linewidth}L{0.26\linewidth}L{0.47\linewidth}@{}}
    \toprule
    \textbf{Symbol} & \textbf{Name} & \textbf{Meaning in this manuscript} \\
    \midrule
    $V_n$, $a_n$ & Compartment voltage/activity & Pre-reactivation voltage and transmitted activity $a_n=f_n(V_n)$ for compartment $n$; $f_n$ is the identity when reactivation is disabled. \\
    $g^{\mathrm{syn}}$, $g^{\mathrm{den}}$ & Synaptic and dendritic conductances & Nonnegative conductance parameters for branch synapses and child-to-parent coupling. \\
    $R_n^{\mathrm{tot}}$ & Input resistance & Reciprocal of total conductance at compartment $n$; the implementation uses the stabilized $R_{n,\varepsilon}^{\mathrm{tot}}$. \\
    $E_j-V_n$ & Driving force & Difference between synaptic reversal potential and local voltage; multiplies presynaptic drive in conductance eligibility. \\
    $\delta_{0,u}^{(\ell),V}$, $\delta_{0,u}^{(\ell),a}$ & Soma voltage/activity error & Exact layer-soma error for neuron $u$ in layer $\ell$, before or after reactivation. \\
    $\delta_{u,n}^{(\ell)}$ & Compartment error & Exact pre-reactivation voltage-space error for compartment $n$ of neuron $u$ in layer $\ell$. \\
    $\boldsymbol{\delta}_0$, $\bar{\delta}$ & Broadcast source coordinates & Soma/core teaching vector and its global scalar average used by restricted-feedback controls. \\
    $\alpha_n^{\mathrm{cond}}$, $\tilde{\alpha}_n$ & Path gain & Conductance-stage and activation-derivative-corrected gain transporting soma error to compartment $n$. \\
    $e_n^V$ & Broadcast voltage-space error & LocalCA approximation to the exact compartment error used in the 3F--5F updates. \\
    $r_n^{\mathrm{4F}}$, $\phi_n$ & Branch-level preconditioners & Slowly estimated 4F covariance proxy and 5F bounded predictability factor. \\
    $\sigma_{\mathrm{res},n}^2$, $\varepsilon$ & Residual variance and stabilizer & Online branch-prediction residual variance and positive denominator stabilizer used by the 5F factor. \\
    $\theta$ & Raw optimizer parameter & Unconstrained variable mapped to a nonnegative conductance by $g=\mathrm{softplus}(\theta)$. \\
    $N_E$, $N_I$ & Synapse counts & Excitatory and inhibitory synapses per dendritic branch. \\
    \bottomrule
    \end{tabular}
    \caption{\textbf{Notation reference.} Main symbols used in the derivation, local-learning rules, and diagnostics.}
    \label{tab:notation_reference}
    \end{table}
    
    \subsection{Reproducibility Statement}
    
    The accompanying source package contains the model code, training scripts, diagnostic scripts, figure-generation scripts, representative configuration files, tests, and precomputed summary files used to generate the manuscript figures and tables. Table~\ref{tab:reproducibility_summary} summarizes the experiment families, seed protocols, and checkpoint-selection rules. Code and reproduction scripts will be released publicly after publication with an immutable repository tag, pinned environment file, end-to-end launch scripts, and expected summary hashes.
    
    \begin{table}[htbp]
    \centering
    \scriptsize
    \setlength{\tabcolsep}{3pt}
    \renewcommand{\arraystretch}{1.05}
    \begin{tabular}{@{}L{0.25\linewidth}L{0.27\linewidth}L{0.20\linewidth}L{0.20\linewidth}@{}}
    \toprule
    \textbf{Experiment family} & \textbf{Architecture / condition} & \textbf{Seeds} & \textbf{Selection / reporting} \\
    \midrule
    Exact-factorization and gradient diagnostics & Nonnegative-input E/I dendritic cores, $[128,128]$ soma layers, $[3,3]$ tree, 3F--5F local rules & Five seeds for main summaries; three seeds where indicated for gradient dynamics & Diagnostic batches or fixed checkpoints, reported as gradient/factorization diagnostics \\
    Layer-soma factorial diagnostic & Two-layer checkpoints with exact/reused soma errors and scalar-fallback, ancestry-shared, or transported branch fields & Six fixed checkpoints & Fixed-checkpoint diagnostic; no validation selection \\
    Fixed-state and norm-matched 3F analysis & Five-seed scalar-fallback MNIST checkpoints; factor substitutions and one-step branch updates & Five paired seeds & One held-out batch per checkpoint; no checkpoint reselection \\
    Identity-transfer 3F analysis & MNIST additive and shunting cores with unit branch-transfer derivative and matched-width/scalar-fallback feedback & Fifteen paired seeds & Validation-selected checkpoint; held-out accuracy and train/validation/test gradient diagnostics \\
    Feedback-definition control & MNIST, 3F, local decoder, matched-width/scalar-fallback vs.\ neuron-wise feedback within additive and shunting cores & Fifteen paired seeds per architecture and feedback mode & Validation-selected checkpoint; held-out test accuracy and fixed-checkpoint branch-gradient diagnostic \\
    Path-gain, exact-error rank, and inhibition intervention & Matched additive and shunting cores at fixed inhibition; post-training intervention states & Five seeds per main condition & Trained checkpoints; post-training interventions are evaluated on held-out data \\
    Transported-error oracle and feedback-construction controls & Matched-width/scalar-fallback, ancestry-shared, random low-rank, path-structured, and exact transported-error feedback & Five seeds for main oracle and feedback-ladder summaries & Validation-selected checkpoints; held-out test summaries \\
    Matched-capacity performance & MNIST, Fashion-MNIST, figure-ground MNIST; 5F LocalCA with matched-width/scalar-fallback feedback and matched backpropagation references & Five seeds & Validation-selected configurations; held-out test accuracy \\
    \bottomrule
    \end{tabular}
    \caption{\textbf{Reproducibility summary.} The table records the experiment families, model settings, seed counts, and selection rules used in the manuscript. Dependency specifications are provided through \texttt{requirements.txt}, \texttt{pyproject.toml}, and \texttt{setup/environment.yml}; exact end-to-end commands and summary hashes will be provided with the post-publication repository tag.}
    \label{tab:reproducibility_summary}
    \end{table}
    
    \subsection{Diagnostic Metrics}
    
    The gradient-fidelity diagnostics compare LocalCA and backpropagation gradients at matched parameters and batches. For a parameter block $p$, let $g_p^{\mathrm{local}}$ and $g_p^{\mathrm{bp}}$ be the two gradients and let $w_p=\mathrm{numel}(p)/\sum_q\mathrm{numel}(q)$. The reported weighted cosine is
    \[
    C_{\mathrm{w}}=\sum_p w_p
    \frac{\langle g_p^{\mathrm{local}},g_p^{\mathrm{bp}}\rangle}
    {\|g_p^{\mathrm{local}}\|_2\|g_p^{\mathrm{bp}}\|_2+\varepsilon},
    \]
    which is equivalent to concatenating the parameter blocks after separately normalizing each block to the same root-mean-square magnitude. It therefore measures typical per-parameter directional agreement without allowing one unusually high-energy block to determine the comparison. We also report the unweighted macro average across blocks and, separately, the raw whole-model concatenated cosine. The weighted scale mismatch is
    \[
    S_{\mathrm{w}}=\sum_p w_p
    \left|\log_{10}\frac{\|g_p^{\mathrm{local}}\|_2+\varepsilon}
    {\|g_p^{\mathrm{bp}}\|_2+\varepsilon}\right|.
    \]
    The checkpoint branch-scale panels instead use the single concatenated norm ratio
    \[
    Q_{\mathrm{branch}}=
    \frac{\sqrt{\sum_{p\in\mathcal{B}}\|g_p^{\mathrm{local}}\|_2^2}}
    {\sqrt{\sum_{p\in\mathcal{B}}\|g_p^{\mathrm{bp}}\|_2^2}},
    \]
    where $\mathcal{B}$ contains all non-somatic dendritic parameter blocks. Thus $Q_{\mathrm{branch}}=1$ denotes matched aggregate scale; it is distinct from the block-averaged log mismatch $S_{\mathrm{w}}$ used in the separate static diagnostic.
    For exact-error compressibility, we form a matrix $E$ whose rows are examples and whose columns are compartment--neuron coordinates of the exact pre-reactivation compartment error. If $E=U\Sigma V^\top$ and $\sigma_i$ are the singular values, the unrestricted rank-$1$ residual and participation rank are
    \[
    \rho_1=\frac{\sqrt{\sum_{i>1}\sigma_i^2}}{\sqrt{\sum_i\sigma_i^2}},
    \qquad
    r_{\mathrm{part}}=\frac{(\sum_i\sigma_i^2)^2}{\sum_i\sigma_i^4}.
    \]
    This is an intrinsic geometry metric: both the sample coefficients and spatial template are chosen in hindsight by SVD. The constrained shared-field residual instead fixes the spatial template to all ones and allows only the best per-example scalar,
    \[
    \rho_{\mathbf{1}}=
    \frac{\|E-\mathrm{rowmean}(E)\mathbf{1}^{\top}\|_F}{\|E\|_F}.
    \]
    The ancestry-shared/blockwise residual tests a different constrained family. Let $\Pi_{\mathrm{AS}}$ replace, for each sample and soma, all descendant compartments of that soma by their optimal shared coefficient. Then
    \[
    \rho_{\mathrm{AS}}=
    \frac{\|E-\Pi_{\mathrm{AS}}E\|_F}{\|E\|_F}.
    \]
    This is not the matched-width/scalar-fallback field at wider branch stages. The broadcast-conditioned residual uses the actual field $B$ generated by that mode,
    \[
    \rho_{\mathrm{bc}}=\frac{\|E-B\|_F}{\|E\|_F},
    \]
    with a corresponding flattened cosine and, where useful, a globally rescaled version of $B$. These constrained metrics are emitted by \path{scripts/measure_error_rank_diagnostics.py} alongside the unrestricted SVD metrics, so intrinsic error geometry and compatibility with the implemented broadcast field can be reported separately. Path-gain dispersion is the coefficient of variation of $\alpha_n^{\mathrm{cond}}$ or $\tilde{\alpha}_n$ over the measured sample and compartment field, depending on whether reactivation is enabled. This unweighted architectural statistic differs from the eligibility-weighted $\mathrm{CV}_w(\tilde{\alpha}_{n(i)})$ in Prop.~\ref{prop:path_gain_alignment}. Compartment-error fidelity is the cosine between the tested broadcast field and the exact compartment-error field after flattening over the same sample and compartment axes.
    
    \subsection{Matched Rule Comparison}
    The theorem-derived 3F eligibility is central to the proposed mechanism, whereas the main practical performance experiments use the empirical 5F stabilizer. Table~\ref{tab:matched_rule_comparison} reports an archived shunting cohort after matching feedback, decoder, optimizer, schedules, and effective parameter-group rates. Cross-core 3F conclusions instead use the explicit initialization-policy factorial below and the fifteen-seed feedback-definition control.

    \begin{table}[htbp]
    \centering
    \small
    \begin{tabular}{@{}lc@{}}
    \toprule
    \textbf{Rule} & \textbf{Shunting test accuracy} \\
    \midrule
    3F & $89.90\pm1.13$ \\
    4F & $90.05\pm1.35$ \\
    5F & $91.19\pm0.49$ \\
    \bottomrule
    \end{tabular}
    \caption{\textbf{Within-shunting rule comparison on MNIST.} Three unique seeds per rule from an archived cohort use matched-width/scalar-fallback feedback, a local decoder, Adam, and fixed effective parameter-group rates. The nominal base-LR sweep axis is inert because group-specific rates are fixed; duplicate nominal-LR rows yield identical checkpoints and are collapsed. The comparison is restricted to this matched shunting cohort and does not use the unmatched additive cohort.}
    \label{tab:matched_rule_comparison}
    \end{table}

    \begin{table}[htbp]
    \centering
    \small
    \begin{tabular}{@{}lcccc@{}}
    \toprule
    \textbf{Initialization policy} & \textbf{Shunting} & \textbf{Additive} &
    \textbf{Shunt.--add. (pp)} & \textbf{Paired $p$} \\
    \midrule
    Analytical & $90.92\pm0.42$ & $90.92\pm0.51$ & $-0.00$ & $0.982$ \\
    Occupancy quantile & $90.99\pm0.59$ & $91.66\pm0.79$ & $-0.67$ & $0.022$ \\
    \bottomrule
    \end{tabular}
    \caption{\textbf{Current-code 3F architecture-by-initialization factorial.}
    Fifteen paired seeds per cell use identical matched-width/scalar-fallback feedback, decoder, optimizer, and schedules. Occupancy calibration improves additive accuracy by $0.74$ percentage points but not shunting accuracy; the architecture-by-policy interaction has paired $t$ $p=0.036$ (signed-rank $p=0.041$). The occupancy procedure, including safety reversion when a fitted stage would exceed the predefined slope cap, is applied identically to both cores. Values are mean $\pm$ s.d.; the final column gives the paired $t$ test within policy.}
    \label{tab:init_policy_factorial}
    \end{table}

    \begin{table}[htbp]
    \centering
    \small
    \begin{tabular}{@{}lcc@{}}
    \toprule
    \textbf{Dendritic update} & \textbf{Decoder update} & \textbf{MNIST test acc.} \\
    \midrule
    Matched backpropagation reference & backpropagation & $97.03\pm0.08$ \\
    \midrule
    3F + exact transported error & backpropagation & $97.21\pm0.14$ \\
    3F + exact transported error & local & $97.18\pm0.12$ \\
    5F + exact transported error & backpropagation & $97.20\pm0.16$ \\
    5F + exact transported error & local & $97.12\pm0.14$ \\
    \bottomrule
    \end{tabular}
    \caption{\textbf{Exact-transport factorial.} Five matched shunting seeds per condition. Supplying the exact path-transported compartment error brings the theorem-derived 3F update to the matched backpropagation reference; neither the empirical 5F preconditioner nor a backpropagated decoder update provides a material additional benefit in this oracle condition. Exact transport is an analysis upper bound, not the proposed biological feedback rule.}
    \label{tab:exact_transport_factorial}
    \end{table}

    \begin{table}[htbp]
    \centering
    \small
    \begin{tabular}{@{}lcc@{}}
    \toprule
    \textbf{Fifteen-seed result} & \textbf{Shunting} & \textbf{Additive} \\
    \midrule
    Matched-width/scalar fallback & $90.75\pm1.01$ & $91.58\pm0.61$ \\
    Neuron-wise & $\mathbf{97.15\pm0.12}$ & $\mathbf{97.14\pm0.10}$ \\
    \midrule
    Paired accuracy gain (pp) & $+6.40$ & $+5.56$ \\
    Paired $t$ $p$ & $1.3{\times}10^{-12}$ & $1.3{\times}10^{-14}$ \\
    \midrule
    Branch cosine, scalar fallback & $0.007\pm0.068$ & $0.024\pm0.051$ \\
    Branch cosine, neuron-wise field & $\mathbf{0.667\pm0.043}$ & $\mathbf{0.599\pm0.034}$ \\
    \bottomrule
    \end{tabular}
    \caption{\textbf{Neuron-wise feedback closes the MNIST 3F gap.} Fifteen seeds are paired within each architecture. The neuron-wise mode repeats one soma coordinate over the 13 compartments of that modeled neuron's regular $[3,3]$ tree; it does not supply an independent error to every compartment. Accuracy rows compare separately trained feedback modes; cosine rows substitute scalar or neuron-wise diagnostic fields at the neuron-wise checkpoints. This tests the feedback implementation in the regular-tree credit-assignment model, not route selection in reconstructed morphologies. The architecture aliases retain their standard reactivation initialization policies, so feedback comparisons within a column are causal, whereas shunting--additive comparisons across columns are descriptive. Accuracy and cosine improvements occur in all 15 pairs (two-sided signed-rank $p=6.1{\times}10^{-5}$). Values are mean $\pm$ s.d.}
    \label{tab:feedback_definition_control}
    \end{table}
    
    \subsection{5F Stabilizer Sensitivity}
    Sensitivity checks show that 5F performance is not sharply dependent on the preconditioner clamp or the EMA rate used to estimate branch statistics. Across three-seed MNIST checks, tightening or widening the clamp and varying the EMA rate leaves test accuracy near the default setting, supporting the view that the factor acts as a bounded branch-level reliability preconditioner rather than a brittle tuned error source.
    
    \begin{figure}[htbp]
    \centering
    \includegraphics[width=\textwidth]{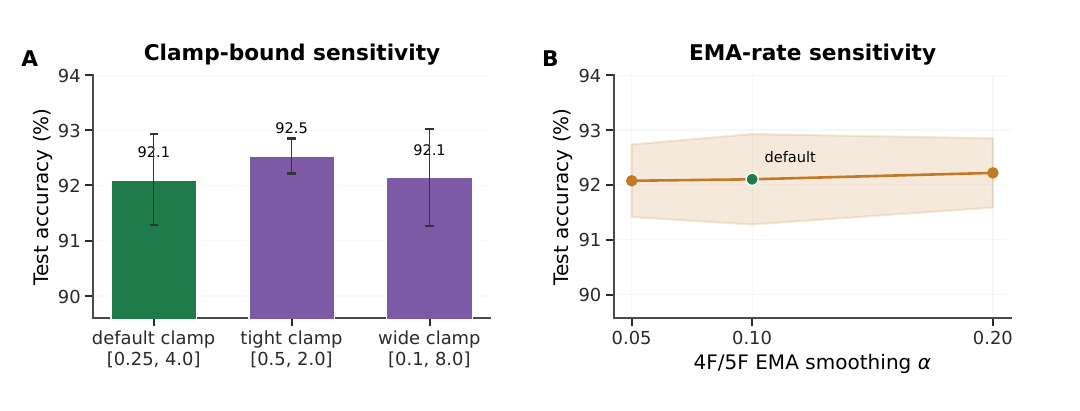}
    \caption{\textbf{5F stabilizer sensitivity.}
    \textbf{(A)}~Changing the 5F preconditioner clamp from the default $[0.25,4.0]$ to tighter $[0.5,2.0]$ or wider $[0.1,8.0]$ bounds leaves MNIST test accuracy near $92\%$.
    \textbf{(B)}~Changing the 4F/5F EMA smoothing rate over $\alpha\in\{0.05,0.10,0.20\}$ gives similarly stable performance. Error bars and bands are $\pm$1 s.d.\ across 3 seeds.}
    \label{fig:five_factor_sensitivity}
    \end{figure}
    \FloatBarrier
    
    \subsection{Local Performance and Regime Dependence}
    Table~\ref{tab:gap_closing} reports the matched-capacity performance numbers used in the main text. The point of this table is not to claim state-of-the-art benchmark performance, but to show the remaining LocalCA-to-backprop gap under matched dendritic capacity and the main 5F rule with matched-width/scalar-fallback feedback.
    
    \begin{table}[t]
    \centering
    \scriptsize
    \setlength{\tabcolsep}{3pt}
    \begin{tabular}{@{}lccccc@{}}
    \toprule
    \textbf{Dataset} & \textbf{Shunt. BP} & \textbf{Add. BP} & \textbf{Shunt. local} & \textbf{Add. local} & \textbf{Shunt. gap (pp)} \\
    \midrule
    MNIST & 0.971 & $0.970\pm0.001$ & $0.911\pm0.005$ & $0.894\pm0.007$ & 6.0 \\
    Fashion-MNIST & 0.889 & $0.876\pm0.001$ & $0.838\pm0.003$ & $0.831\pm0.004$ & 5.1 \\
    Figure-ground MNIST & 0.861 & -- & $0.803\pm0.006$ & $0.787\pm0.007$ & 5.8 \\
    \bottomrule
    \end{tabular}
    \caption{\textbf{Local performance.} Backpropagation references are matched-capacity standard-training runs. Additive backpropagation is reported where a matched five-seed reference is available; the figure-ground additive backpropagation reference was not available in the matched five-seed performance outputs. Local values use 5F with matched-width/scalar-fallback feedback and local decoder updates. The figure-ground LocalCA rows use the HSIC auxiliary objective (weight $0.01$); the shunting BP reference is the standard cross-entropy matched-capacity reference. Error bars are $\pm$1 s.d.\ across 5 seeds where available; gaps are percentage points.}
    \label{tab:gap_closing}
    \end{table}

    \subsection{Extended Gradient Analysis and $N_I$ Sweep Detail}
    Figure~\ref{fig:gradient_fidelity} in the main text summarizes final-state gradient diagnostics and the layer-soma factorial check. The supplementary view below (Fig.~\ref{fig:s2_gradient_extended}) separates static scale checks, inhibitory dose-response curves, and individual Fashion-MNIST seed behavior so that the aggregate claims are not driven by a single summary panel.
    
    \begin{figure}[htbp]
    \centering
    \includegraphics[width=\textwidth]{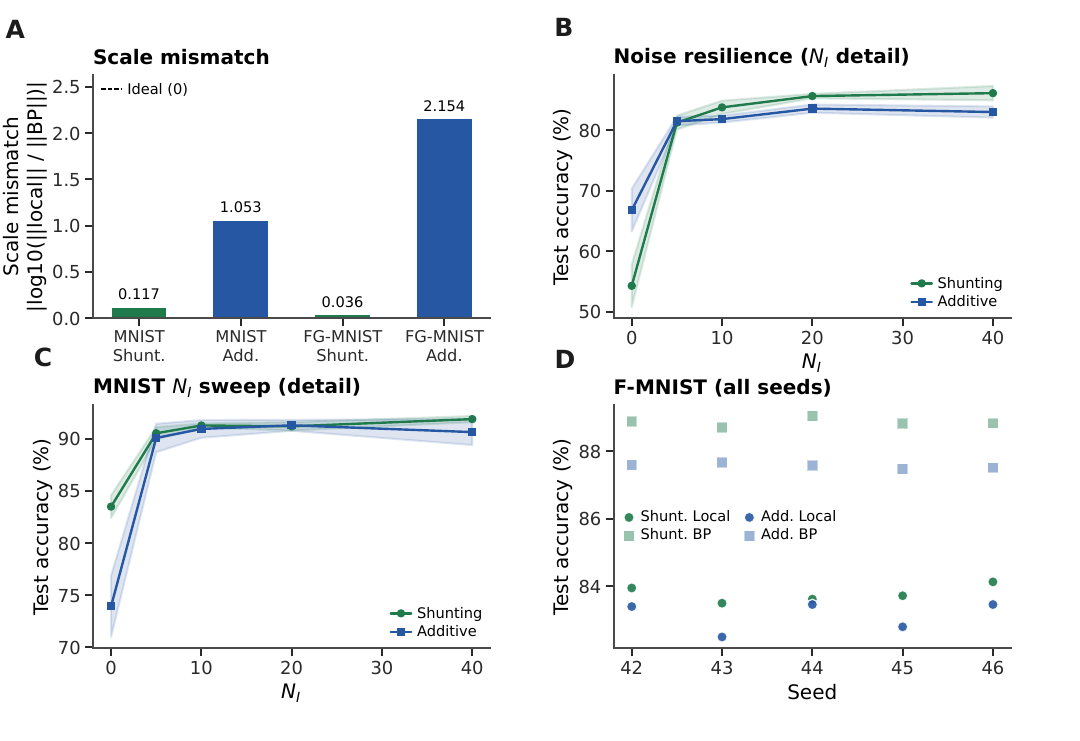}
    \caption{\textbf{Extended gradient and $N_I$ analysis (supplementary).} \textbf{(A)}~Log-scale mismatch, $|\log_{10}(\|g^{\mathrm{local}}\|/\|g^{\mathrm{bp}}\|)|$, from a separate matched-weight static diagnostic; these values should not be compared directly to the checkpointed final-state diagnostic in Fig.~\ref{fig:gradient_fidelity}. The ideal value is $0$; at this operating point shunting is near ideal ($0.117$), whereas additive shows order-of-magnitude distortion ($>1.0$). This descriptive static contrast is not evidence for a general cross-core advantage. \textbf{(B)}~Noise-resilience $N_I$ dose-response with error bands ($\pm$1 s.d.). \textbf{(C)}~MNIST $N_I$ dose-response detail with error bands. \textbf{(D)}~Fashion-MNIST individual seeds for all conditions, showing consistency across runs.}
    \label{fig:s2_gradient_extended}
    \end{figure}
    \FloatBarrier
    
    \paragraph{Alignment dynamics.}
    Local and backprop gradient norms stay finite throughout training, so weak additive alignment in this trajectory reflects directional and scale mismatch rather than vanishing gradients. In a separate archived three-seed checkpoint trajectory with parameter-level LocalCA and backprop gradients, additive local gradients remain nonzero (weighted norm range $9.5{\times}10^{-4}$--$9.5{\times}10^{-3}$), and the corresponding backprop gradients are also nonzero ($2.3{\times}10^{-5}$--$5.7{\times}10^{-3}$). The additive 5F cosine stays near zero from epoch $0$ to epoch $50$ ($0.018$ to $0.005$), while its local/backprop norm ratio falls from about $8.8{\times}10^2$ to $0.69$ (Fig.~\ref{fig:alignment_norm_dynamics}). These three archived checkpoints describe that operating point; the five-seed 3F and initialization-policy controls supersede them for any general cross-core directional claim.

    \begin{table}[htbp]
    \centering
    \scriptsize
    \begin{tabular}{@{}L{0.46\linewidth}cc@{}}
    \toprule
    \textbf{Matched 3F metric} & \textbf{Shunting} & \textbf{Additive} \\
    \midrule
    Whole-model concatenated cosine & $0.096\pm0.053$ & $0.350\pm0.059$ \\
    Branch parameter-count-weighted cosine & $0.160\pm0.007$ & $0.030\pm0.027$ \\
    Branch equal-block macro cosine & $0.238\pm0.067$ & $0.042\pm0.041$ \\
    Branch local/exact norm ratio & $0.261\pm0.035$ & $0.097\pm0.025$ \\
    Exact-gradient energy in final soma block & $0.2\%\pm0.03\%$ & $13.1\%\pm4.4\%$ \\
    \bottomrule
    \end{tabular}
    \caption{\textbf{Systematic matched 3F gradient diagnostic.} Three paired checkpoints, each averaged over the first train, validation, and test batches. The whole-model cosine answers a different, energy-allocation-dependent question because the final soma-coupling block is exactly aligned in both models. Branch metrics exclude this block and address the dendritic mechanism. Values are descriptive at $n=3$; inferential tests are omitted.}
    \label{tab:matched_3f_gradient}
    \end{table}

    \begin{table}[htbp]
    \centering
    \small
    \begin{tabular}{@{}lcc@{}}
    \toprule
    \textbf{Five-seed scalar-fallback 3F metric} & \textbf{Shunting} & \textbf{Additive} \\
    \midrule
    Branch parameter-count-weighted cosine & $0.155\pm0.045$ & $0.164\pm0.048$ \\
    Branch equal-block macro cosine & $0.166\pm0.041$ & $0.211\pm0.027$ \\
    Branch concatenated cosine & $0.143\pm0.049$ & $0.134\pm0.064$ \\
    Branch local/exact norm ratio & $0.313\pm0.046$ & $0.264\pm0.036$ \\
    \bottomrule
    \end{tabular}
    \caption{\textbf{Five-seed replication of the matched-width/scalar-fallback 3F diagnostic.} Each checkpoint is averaged over the first train, validation, and test diagnostic batches before seeds are paired. Direction does not differ reliably between architectures in this cohort, so the directional contrast in Table~\ref{tab:matched_3f_gradient} is treated as specific to the three archived checkpoint pairs. The core aliases retain their standard reactivation initialization policies, making this a replication of the operational training setup rather than an activation-matched causal intervention.}
    \label{tab:fresh_3f_replication}
    \end{table}

    \begin{table}[htbp]
    \centering
    \small
    \begin{tabular}{@{}lccc@{}}
    \toprule
    \textbf{Identity-transfer 3F result} & \textbf{Shunting} & \textbf{Additive} & \textbf{Paired $t$ $p$} \\
    \midrule
    Branch parameter-count-weighted cosine & $0.190\pm0.032$ & $0.111\pm0.023$ & $1.2{\times}10^{-6}$ \\
    Branch concatenated cosine & $0.152\pm0.024$ & $0.109\pm0.023$ & $1.0{\times}10^{-4}$ \\
    Branch local/exact norm ratio & $1.54\pm0.12$ & $0.078\pm0.014$ & $<10^{-10}$ \\
    MNIST test accuracy (\%) & $90.36\pm0.44$ & $90.93\pm0.18$ & $<0.001$ \\
    \bottomrule
    \end{tabular}
    \caption{\textbf{Derivative-matched identity-transfer control.} Fifteen paired seeds use matched-width/scalar-fallback 3F feedback, identical learning-rate schedules, and reactivation disabled, so $f'(V)=1$ in both architectures. Shunting has higher block-averaged and concatenated gradient direction, whereas additive has slightly higher accuracy. The update scales remain different, and identity transfer does not match forward voltage distributions. Cosine is invariant to global norm rescaling. Values are mean $\pm$ s.d.; paired $t$ tests are shown, and two-sided signed-rank tests give $p\leq0.0024$ for all four rows.}
    \label{tab:identity_transfer_3f}
    \end{table}

    \paragraph{Eligibility-weighted gradient and activation analysis.}
    A synaptic gradient weights compartment errors by sample-specific eligibility and maps them into parameter blocks. The analysis also finds different forward operating points. Mean activation derivatives across the distal, proximal, and soma stages are $(0.521,0.694,2.662)$ for shunting and $(0.061,0.057,0.257)$ for additive; the fractions with derivative magnitude below $0.1$ are $(0.680,0.459,0.237)$ and $(0.834,0.825,0.271)$, respectively. Learning-rate schedules are identical and cosine is invariant to global gradient rescaling, but these activation distributions are not matched. We therefore interpret the branch-gradient result as evidence for the full conductance-dependent update, not as an intervention isolating the backward path while holding the forward representation fixed. Recalibrating every trained gate to common voltage-quantile occupancy targets is a fixed-checkpoint sensitivity check on that interpretation rather than a learning comparison (Table~\ref{tab:occupancy_matched_3f}).

    \begin{table}[htbp]
    \centering
    \small
    \begin{tabular}{@{}lcc@{}}
    \toprule
    \textbf{Fixed-checkpoint 3F diagnostic} & \textbf{Shunting} & \textbf{Additive} \\
    \midrule
    Archived-state branch cosine & $0.160\pm0.007$ & $0.030\pm0.027$ \\
    Occupancy-calibrated branch cosine & $0.324\pm0.067$ & $0.141\pm0.021$ \\
    Occupancy-calibrated norm ratio & $1.01\pm0.16$ & $0.28\pm0.08$ \\
    \bottomrule
    \end{tabular}
    \caption{\textbf{Reactivation-occupancy sensitivity.} For the post-hoc intervention, every trained parametric gate is recalibrated so that its within-layer 10th and 90th voltage quantiles map to common 0.1 and 0.9 activation targets, after which exact and local gradients are recomputed on the same test batch. The result is a fixed-checkpoint sensitivity test, not a learning comparison: recalibration changes model predictions and does not make the full activation-derivative distributions identical. Three paired seeds; inferential values are descriptive.}
    \label{tab:occupancy_matched_3f}
    \end{table}

    \paragraph{Backward-only path-gain counterfactual.}
    To separate backward transport from the forward operating point, we recomputed an exact counterfactual target after removing inhibitory conductance only from the parent-resistance factors along the backward path. Recorded voltages, activation derivatives, dendritic couplings, soma errors, and all local eligibility variables were held fixed. We then compared the same restricted or neuron-wise 3F update with the actual and counterfactual targets. Removing backward attenuation leaves direction essentially unchanged even in the higher-inhibition noise regime (Table~\ref{tab:backward_only_3f}). Together with the measured conductances, this establishes an operating-point result: inhibition changes backward gain, but at the trained states its isolated contribution is too small to move credit direction. The complete conductance-dependent eligibility, feedback field, and operating state determine the update.

    \begin{table}[htbp]
    \centering
    \small
    \begin{tabular}{@{}L{0.25\linewidth}cL{0.21\linewidth}L{0.21\linewidth}@{}}
    \toprule
    \textbf{Fixed-checkpoint setting} & \textbf{Distal gain ratio} & \textbf{MW cosine: actual $\rightarrow$ no-I} & \textbf{Neuron-wise cosine: actual $\rightarrow$ no-I} \\
    \midrule
    MNIST & $1.070\pm0.008$ & $0.1422\rightarrow0.1426$ & $0.4957\rightarrow0.4968$ \\
    Noise, population 0 & $1.281\pm0.010$ & $0.0653\rightarrow0.0649$ & $0.1748\rightarrow0.1761$ \\
    \bottomrule
    \end{tabular}
    \caption{\textbf{Backward-only shunting counterfactual.} Five shunting checkpoints per setting, each averaged over the first train, validation, and test diagnostic batches. ``No-I'' removes all inhibitory conductance only from parent resistance during error transport. The gain ratio is the resulting mean distal path gain relative to the recorded path. This is not a physical forward model or training comparison; all forward variables and local eligibility factors remain fixed. Population 0 is the first of two feedforward dendritic populations.}
    \label{tab:backward_only_3f}
    \end{table}

    A stage-resolved operating-point analysis explains the MNIST result. At proximal compartments---the stage whose conductance can gate distal descendants---mean total conductance is $34.5$, mean input resistance is $0.029$, and learned inhibition supplies only $5.2\%$ of total conductance on average (median $0\%$). The mean of the pointwise parent-gain ratio $g^{\mathrm{tot}}/(g^{\mathrm{tot}}-G^I)$ is only $1.068$; it differs from applying this nonlinear ratio to the mean inhibitory fraction. Inhibition is stronger at terminal distal compartments, where it changes forward voltage and local eligibility but has no lower modeled descendants whose path gain it can gate. In the first noise-resilience population, proximal inhibition is about $21\%$ and full removal increases distal gain by $1.281\times$, yet direction remains stable (Table~\ref{tab:backward_only_3f}). A sensitivity extension shows visible directional changes only under nonphysical over-removal, once gains are several-fold larger and the conductance floor is active. These fixed-checkpoint quantities make the null a measured operating-point boundary rather than an unexplained absence of effect; they do not establish a universal inhibitory scale.

    \paragraph{Fixed-state factor decomposition.}
    We next recomputed the exact-error 3F gradient after replacing one local factor at a time while keeping the same recorded inputs, voltages, soma errors, active synapse masks, couplings, and activation derivatives unless that derivative was the factor under study. Table~\ref{tab:fixed_state_factorial} separates two effects that were conflated in the broader architecture comparison. Removing inhibitory conductance from the local or backward input-resistance factor changes scale more than direction. Replacing the reversal-potential driving force by the implementation's threshold-centered voltage proxy also leaves direction largely intact. The diagnostic is sensitive to larger path-factor interventions: setting local resistance to one gives cosine $0.948$ with a $5.56\times$ norm ratio, and removing parent-compartment transfer derivatives gives cosine $0.385$. The restricted matched-width/scalar-fallback feedback remains the largest mismatch. These are diagnostic hybrids at a fixed forward state, not physical forward models.

    \begin{table}[htbp]
    \centering
    \small
    \begin{tabular}{@{}L{0.50\linewidth}cc@{}}
    \toprule
    \textbf{Fixed-state gradient construction} & \textbf{Branch cosine} & \textbf{Norm ratio} \\
    \midrule
    Exact eligibility and effective transport & $1.000\pm0.000$ & $1.00\pm0.00$ \\
    No inhibitory conductance in backward resistance & $0.991\pm0.001$ & $1.10\pm0.04$ \\
    No inhibitory conductance in local eligibility resistance & $0.984\pm0.003$ & $1.29\pm0.03$ \\
    Threshold-centered synaptic voltage proxy & $0.971\pm0.005$ & $0.93\pm0.03$ \\
    Unit local resistance & $0.948\pm0.009$ & $5.56\pm0.53$ \\
    No parent transfer derivatives in transport & $0.385\pm0.040$ & $0.42\pm0.05$ \\
    Matched-width/scalar-fallback feedback & $0.190\pm0.076$ & $0.38\pm0.05$ \\
    \bottomrule
    \end{tabular}
    \caption{\textbf{Factor-by-factor gradient diagnostic at fixed forward state.} Five shunting 3F checkpoints; values are mean $\pm$ s.e.m.\ over seeds on a held-out batch. Branch cosine is the parameter-count-weighted mean of non-somatic block cosines. Each row changes only the named factor relative to exact reconstruction. The input-resistance interventions retain recorded voltages and therefore do not represent a new self-consistent forward pass.}
    \label{tab:fixed_state_factorial}
    \end{table}

    \begin{figure}[htbp]
    \centering
    \includegraphics[width=\textwidth]{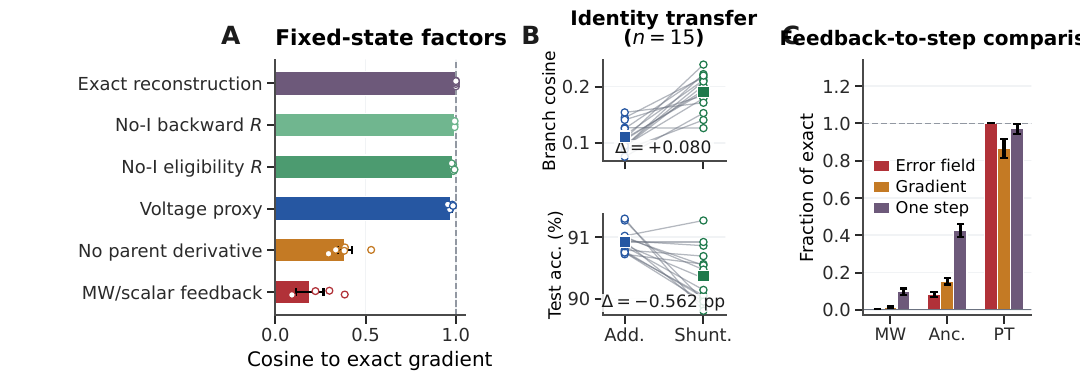}
    \caption{\textbf{Mechanistic analysis separating path effects from learning consequences.}
    \textbf{(A)}~Fixed-forward-state factor substitutions in five shunting checkpoints. Removing inhibitory conductance from either resistance factor leaves gradient direction close to exact reconstruction, whereas omitting parent transfer derivatives or using matched-width/scalar-fallback feedback causes larger directional errors. Bars are mean $\pm$ s.e.m.; points are seeds.
    \textbf{(B)}~Fifteen paired identity-transfer checkpoints. Shunting improves branch-gradient direction, while additive test accuracy remains slightly higher; squares denote means.
    \textbf{(C)}~The same-batch feedback ladder measured as optimally rescaled dendritic error-field capture, eligibility-weighted branch-gradient capture, and norm-matched one-step progress at relative step $10^{-5}$. Bars pool five checkpoints per core after averaging three diagnostic splits; error bars are s.e.m.\ over checkpoints. In this cohort the global scalar equals the matched-width/scalar-fallback field, and the available ancestry field equals exact-soma ancestry. Error geometry alone is therefore not treated as a learning metric. Fixed-state substitutions are diagnostic hybrids, not separately trained forward models.}
    \label{fig:mechanistic_audit}
    \end{figure}

    \paragraph{Norm-matched finite updates.}
    Gradient cosine is invariant to global rescaling, but actual loss change is not. A matched feedback ladder connects the descriptive field metric to its learning consequence (Fig.~\ref{fig:mechanistic_audit}C): ancestry sharing improves raw dendritic error capture, the eligibility-weighted gradient, and norm-matched one-step progress relative to scalar or matched-width/scalar-fallback feedback, while exact transport is strongest. Raw error capture is not numerically interchangeable with update quality because sample-specific eligibility reweights and aggregates the field into parameter gradients.

    We also globally rescaled the matched-width/scalar-fallback 3F branch gradient to the exact branch-gradient norm and applied one update to non-somatic parameters only. The update norm was set to a fixed fraction of the branch-parameter norm. Across the practical step range in Table~\ref{tab:norm_matched_one_step}, both architectures decrease the same held-out batch loss in every seed. The fraction of exact-gradient progress depends on step size, and no tested step gives a reliable shunting--additive difference. This confirms that update scale does not rescue the proposed general directional advantage.

    \begin{table}[htbp]
    \centering
    \small
    \begin{tabular}{@{}lcc@{}}
    \toprule
    \textbf{Step / branch-parameter norm} & \textbf{Shunting} & \textbf{Additive} \\
    \midrule
    $3{\times}10^{-5}$ & $0.225\pm0.102$ (5/5) & $0.207\pm0.030$ (5/5) \\
    $1{\times}10^{-4}$ & $0.269\pm0.089$ (5/5) & $0.344\pm0.029$ (5/5) \\
    $1{\times}10^{-3}$ & $0.384\pm0.065$ (5/5) & $0.603\pm0.163$ (5/5) \\
    \bottomrule
    \end{tabular}
    \caption{\textbf{One-step loss decrease after global branch-norm matching.} Entries are the loss decrease produced by the norm-matched matched-width/scalar-fallback 3F update divided by that produced by the exact-gradient update at the same step norm, reported as mean $\pm$ s.e.m.; parentheses give the number of seeds with positive loss decrease. Five paired checkpoints, one held-out batch per checkpoint. No architecture ordering is consistent across the tested steps, so small-sample inferential tests are omitted. Smaller steps include one shunting seed at numerical/non-smooth resolution; all exact-gradient updates decrease loss.}
    \label{tab:norm_matched_one_step}
    \end{table}

    \begin{figure}[htbp]
    \centering
    \includegraphics[width=\textwidth]{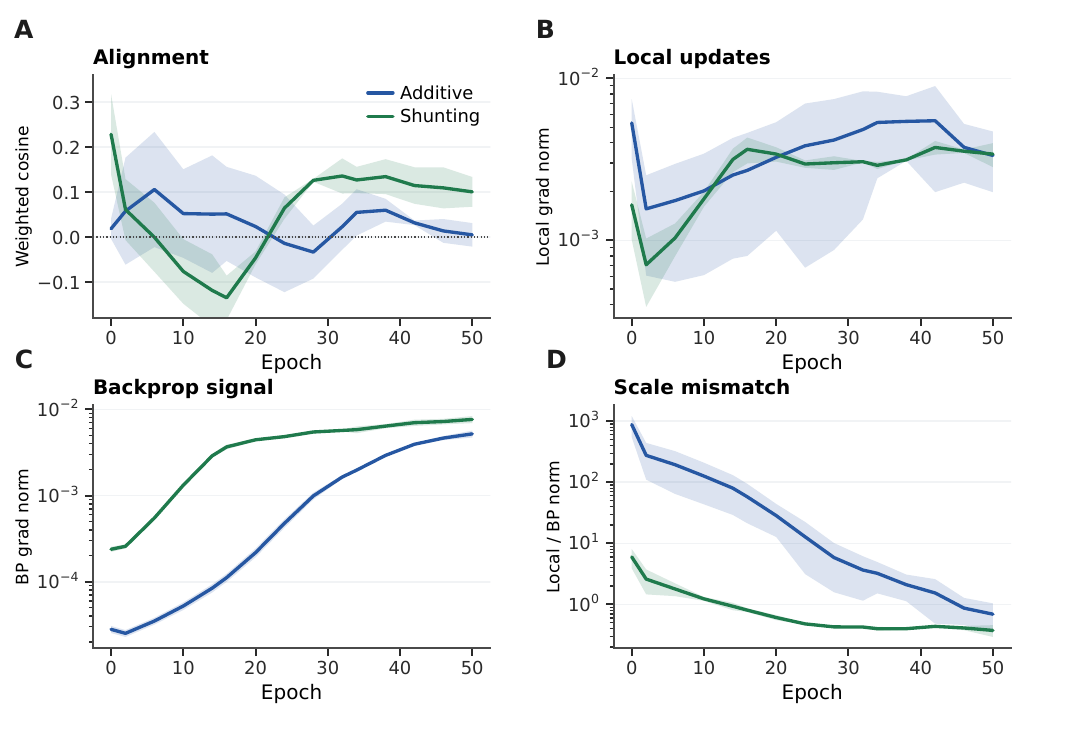}
    \caption{\textbf{Alignment dynamics with explicit gradient-norm checks.}
    \textbf{(A)} Weighted cosine similarity between LocalCA and backpropagation gradients over training for the 5F rule. \textbf{(B,C)} Weighted local and backprop gradient norms stay nonzero for both additive and shunting models. \textbf{(D)} The additive control begins with a much larger scale mismatch, then approaches the backprop scale without recovering strong directional alignment. These are three archived checkpoints and characterize that operating point rather than a general cross-core advantage. Shading is s.d.\ across 3 seeds.}
    \label{fig:alignment_norm_dynamics}
    \end{figure}
    \FloatBarrier
    
    \subsection{Verification and Seed Robustness}
    These checks (Fig.~\ref{fig:verification_appendix}) compare the main seed set with held-out seeds and report the HSIC-weight sensitivity for figure-ground MNIST. The additional matched experiments use the same objective-weight comparison on both seed sets, so seed robustness and HSIC contribution can be summarized as a matched two-factor control.
    
    \begin{figure}[htbp]
    \centering
    \includegraphics[width=\textwidth]{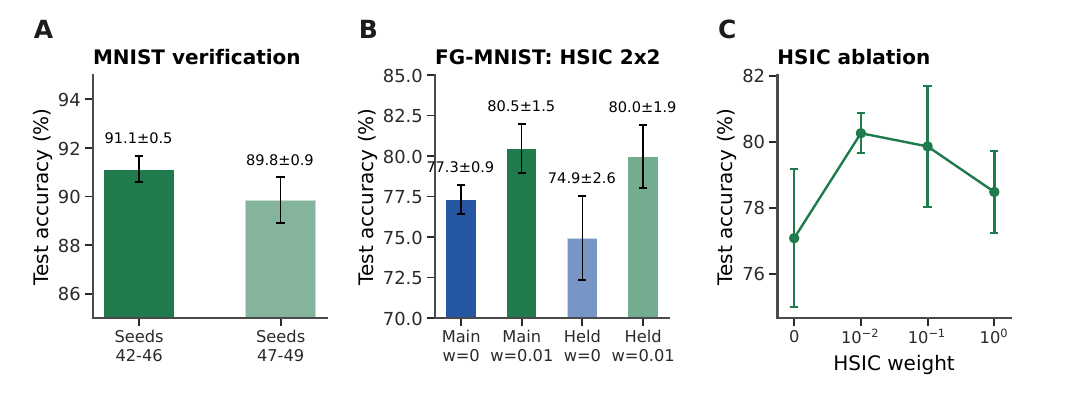}
    \caption{\textbf{Verification and seed robustness (supplementary).} \textbf{(A)}~MNIST verification: repeat runs on main seeds (42--46) yield $91.1\pm0.5$\%; held-out seeds (47--49) yield $89.8\pm0.9$\%, showing robustness across random seeds. \textbf{(B)}~Matched figure-ground MNIST HSIC control: both main seeds (42--46) and held-out seeds (47--49) compare HSIC weight $0$ against $0.01$. \textbf{(C)}~HSIC weight ablation on figure-ground MNIST: moderate weights ($0.01$--$0.1$) perform best.}
    \label{fig:verification_appendix}
    \end{figure}
    \FloatBarrier
    
    \subsection{Additional Stress Tests}
    These stress tests probe two ways in which low-bandwidth broadcast can become unreliable: deeper dendritic paths and noisy teaching signals. They are not separate main benchmarks; they delimit where shunting remains useful and where additive local learning degrades.
    
    \begin{figure}[htbp]
    \centering
    \includegraphics[width=\textwidth]{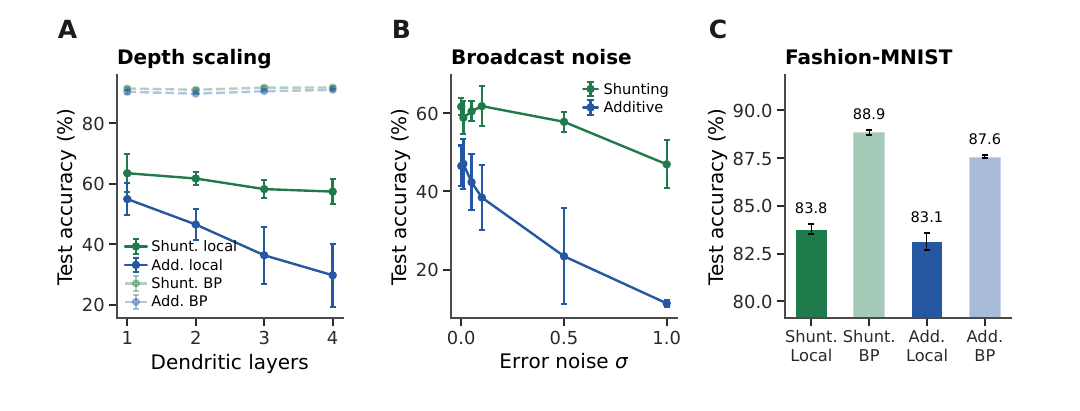}
    \caption{\textbf{Additional stress tests.} \textbf{(A)}~Depth scaling over $1$--$4$ dendritic layers: shunting local (green) degrades more gracefully from 63.5\% to 57.4\%; additive local (blue) drops from 54.9\% to 29.7\%. Matched backpropagation references (dashed) remain stable. \textbf{(B)}~Broadcast-noise robustness, where $\sigma$ is the standard deviation of Gaussian noise added to the broadcast error: shunting remains stable for $\sigma\!\leq\!0.1$; additive degrades rapidly and reaches chance at $\sigma\!=\!1$. \textbf{(C)}~Fashion-MNIST: the shunting gain is modest on this cleaner benchmark (83.8\% local vs.\ 88.9\% backprop), consistent with the regime-dependent interpretation in the main text.}
    \label{fig:additional_stress_tests}
    \end{figure}
    \FloatBarrier
    
    \subsection{FA/DFA Baseline Comparison}
    Figure~\ref{fig:fa_dfa_appendix} includes feedback alignment as an adjacent local-learning reference point. DFA can be evaluated on the dendritic cores, whereas the FA condition is marked unavailable for the block-structured dendritic updates because the required random feedback matrices are not dimensionally compatible with the branch-level parameterization.
    
    \begin{figure}[htbp]
    \centering
    \includegraphics[width=\textwidth]{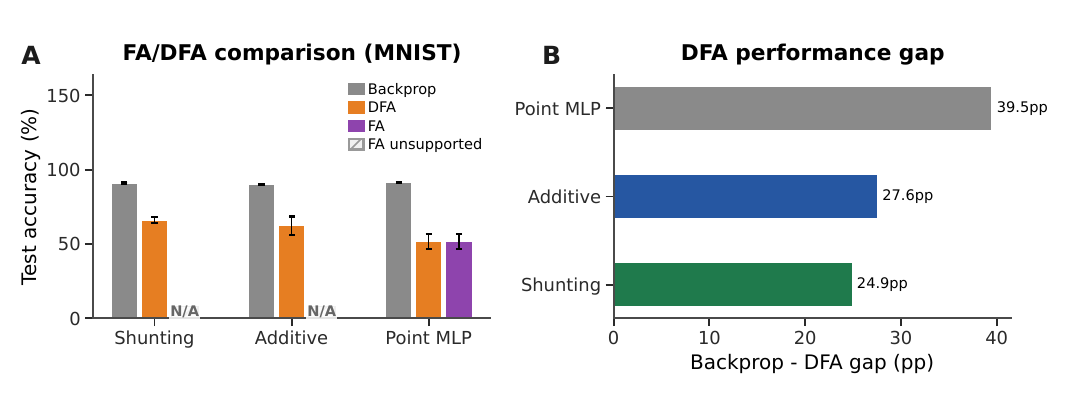}
    \caption{\textbf{Feedback alignment baselines (MNIST).} \textbf{(A)}~Grouped comparison: standard backprop (neutral gray), DFA (orange), and FA (purple). Hatched N/A markers indicate that FA is not implemented for the block-structured dendritic architectures because the random feedback matrices are not dimensionally compatible with the tree-structured branch updates. DFA achieves 66.1\% on shunting vs.\ 62.2\% additive vs.\ 51.8\% point MLP. \textbf{(B)}~Backprop$-$DFA gap: dendritic architectures (24.9--27.6~pp) show smaller gaps than point MLPs (39.5~pp), suggesting conductance-based architecture is partially compatible with random feedback.}
    \label{fig:fa_dfa_appendix}
    \end{figure}
    \FloatBarrier
    
    \subsection{CIFAR-10 Results}
    Figure~\ref{fig:cifar10_mechanism_extension} and Table~\ref{tab:cifar10_mechanism_extension} report a CIFAR-10 control ladder, used as a harder-dataset mechanistic stress test rather than a competitive vision benchmark.
    
    \begin{figure}[htbp]
    \centering
    \includegraphics[width=\textwidth]{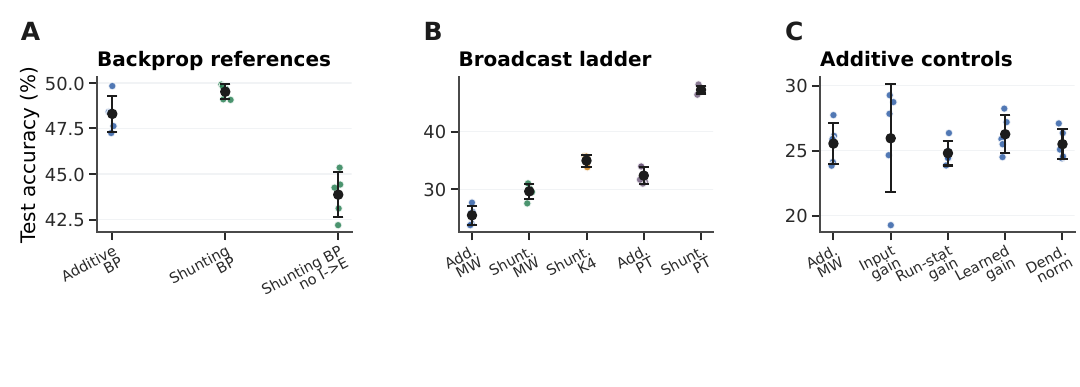}
    \caption{\textbf{CIFAR-10 control ladder in the compact direct-I-stream family.}
    \textbf{(A)}~Matched backpropagation references. With nonnegative inputs and a nonlinear decoder, shunting slightly exceeds additive under standard training, and removing learned I-to-E conductance lowers the shunting reference.
    \textbf{(B)}~Broadcast ladder. Matched-width/scalar-fallback feedback remains weak on this harder dataset, random low-rank $K{=}4$ improves partially, and exact effective transported error nearly reaches the matched shunting backpropagation reference.
    \textbf{(C)}~Additive fairness controls. Input-dependent gain, running-stat gain, learned gain, and dendritic normalization do not rescue the additive matched-width/scalar-fallback baseline. This figure is a harder-dataset stress test, not a claim of competitive CIFAR-10 benchmarking.}
    \label{fig:cifar10_mechanism_extension}
    \end{figure}
    \FloatBarrier
    
    \begin{table}[htbp]
    \centering
    \scriptsize
    \renewcommand{\arraystretch}{1.06}
    \setlength{\tabcolsep}{3pt}
    \begin{tabular}{@{}L{0.32\linewidth}L{0.12\linewidth}L{0.24\linewidth}L{0.10\linewidth}L{0.12\linewidth}@{}}
    \toprule
    \textbf{Condition} & \textbf{Core} & \textbf{Training / broadcast} & \textbf{I-to-E} & \textbf{Test acc.} \\
    \midrule
    standard & additive & backprop & yes & $48.3 \pm 1.0$ \\
    standard & shunting & backprop & yes & $\mathbf{49.5 \pm 0.4}$ \\
    standard, no learned inhibition & shunting & backprop & no & $43.9 \pm 1.2$ \\
    \midrule
    MW/scalar fallback & additive & 5F restricted & yes & $25.5 \pm 1.6$ \\
    MW, input-dependent gain & additive & 5F restricted & yes & $25.9 \pm 4.1$ \\
    MW, running-stat gain & additive & 5F restricted & yes & $24.8 \pm 0.9$ \\
    MW, learned gain & additive & 5F restricted & yes & $26.3 \pm 1.5$ \\
    MW, dendritic normalization & additive & 5F restricted & yes & $25.5 \pm 1.2$ \\
    MW/scalar fallback & shunting & 5F restricted & yes & $29.7 \pm 1.3$ \\
    MW, no learned inhibition & shunting & 5F restricted & no & $22.3 \pm 0.7$ \\
    \midrule
    transported error & additive & 5F path transport & yes & $32.4 \pm 1.5$ \\
    rank-$4$ & shunting & 5F low-rank & yes & $35.0 \pm 1.0$ \\
    transported error & shunting & 5F path transport & yes & $\mathbf{47.2 \pm 0.6}$ \\
    transported error, no learned inhibition & shunting & 5F path transport & no & $38.6 \pm 1.0$ \\
    \bottomrule
    \end{tabular}
    \renewcommand{\arraystretch}{1.0}
    \caption{\textbf{Exact CIFAR-10 values for the 70-run control ladder (5 seeds per condition, validation-best epoch).} All rows use flattened CIFAR-10 in $[0,1]$, nonnegative transfer outputs, positive conductance transforms, a compact depth-$4$ dendritic core, and a nonlinear decoder. LocalCA maps output error back to decoder-input/soma coordinates through the decoder Jacobian. The useful conclusions are narrow: learned shunting slightly improves the matched backpropagation reference, learned I-to-E conductance is necessary for the strongest shunting results, transported error nearly closes the shunting LocalCA gap, and additive matched-width/scalar-fallback controls do not explain the transported-shunting result.}
    \label{tab:cifar10_mechanism_extension}
    \end{table}
    
    \subsection{Additive Normalization Control}
    This control separates generic additive voltage normalization from shunting conductance dynamics in a lower-capacity MNIST stress regime.
    
    \begin{figure}[htbp]
    \centering
    \includegraphics[width=\textwidth]{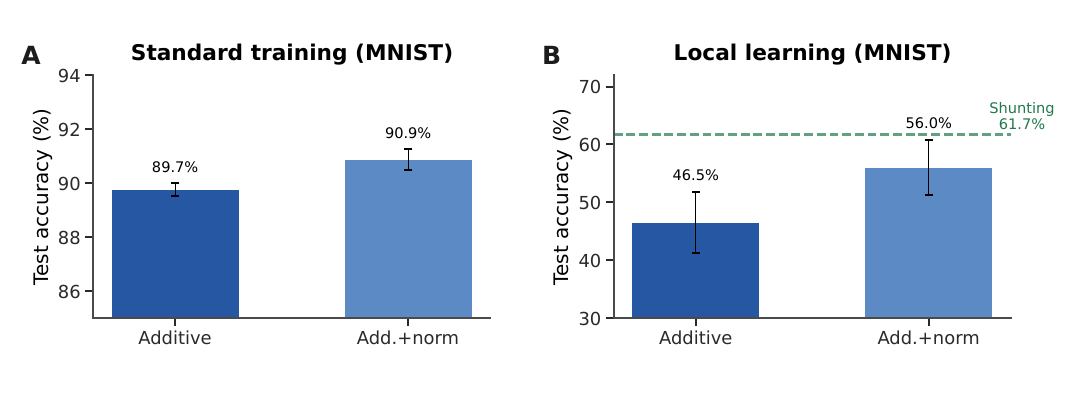}
    \caption{\textbf{Additive + normalization control (MNIST stress-regime diagnostic).} \textbf{(A)}~Standard backprop: normalization provides a small boost (89.7\% $\to$ 90.9\%). \textbf{(B)}~Local learning: normalization improves additive from 46.5\% to 56.0\% ($+9.5$~pp), partially closing the gap to shunting (61.7\%, dashed green). These lower local accuracies come from a separate stress-regime diagnostic, not from the matched-capacity performance setting in Table~\ref{tab:gap_closing}.}
    \label{fig:additive_norm_control}
    \end{figure}
    
    \subsection{Morphology $\times$ Inhibition Regime Map}
    
    This regime map (Fig.~\ref{fig:morphology_ie_regime}) describes how performance varies jointly with tree geometry and inhibitory input. It does not isolate a causal morphology effect: under the analytical initialization used in this sweep, one dendritic population in inhibited shunting models begins with a coupling stage at the $10^{-6}$ floor, whereas additive stages remain active. The figure therefore reports the performance interaction without using it as mechanistic evidence that morphology alone selects the useful inhibitory regime.
    
    \begin{figure}[htbp]
    \centering
    \includegraphics[width=\textwidth]{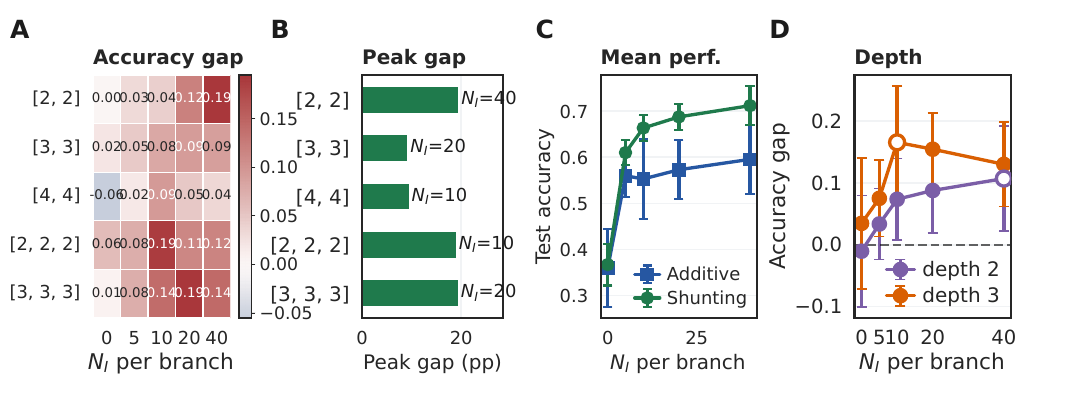}
    \caption{\textbf{Descriptive morphology $\times$ inhibition performance map.}
    \textbf{(A)}~Shunting-minus-additive LocalCA accuracy gap across dendritic morphologies and inhibitory synapse counts.
    \textbf{(B)}~Best inhibitory count for each morphology.
    \textbf{(C)}~Mean performance across tested trees.
    \textbf{(D)}~Average shunting advantage by depth, with markers showing the peak inhibitory count for each depth. The grid uses nonnegative inputs. Because analytical initialization places one shunting population at a near-zero coupling floor, these comparisons are descriptive rather than a causal morphology intervention.}
    \label{fig:morphology_ie_regime}
    \end{figure}
    
    \subsection{Input-Mode Probe: Direct Inhibitory Stream vs.\ Explicit Inhibitory Cells}
    
    The main experiments use input-driven inhibitory streams: the transfer layer provides nonnegative excitatory and inhibitory channels from the external input, and excitatory dendrites receive learned I-to-E conductances directly. To test whether the same mechanism can be generated by an explicit feedforward inhibitory population, we ran a one-feedforward-layer noise-resilience probe that preserves the relevant dendritic depth ($[3,3]$ morphology). In this setting, explicit inhibitory cells reach $94.2\pm0.2\%$ under path-transport LocalCA when their dendrites use the same update policy as excitatory dendrites, $93.8\pm0.2\%$ when those inhibitory-cell updates are frozen, and $95.1\pm0.3\%$ under standard backprop. Direct input-driven inhibition gives the expected feedback-fidelity ladder on the same one-layer architecture: matched-width/scalar-fallback feedback reaches $85.2\pm0.7\%$, while exact effective path transport reaches $95.2\pm0.0\%$. Thus explicit inhibitory cells can generate useful path-gain structure once the architecture isolates the dendritic-path question.
    
    \begin{table}[!htbp]
    \centering
    \small
    \renewcommand{\arraystretch}{1.02}
    \setlength{\tabcolsep}{5pt}
    \begin{tabular}{@{}L{0.32\linewidth}L{0.39\linewidth}L{0.18\linewidth}@{}}
    \toprule
    \textbf{Inhibitory input mode} & \textbf{Training / broadcast} & \textbf{Test accuracy} \\
    \midrule
    Direct inhibitory stream & LocalCA, MW/scalar-fallback feedback & $0.852 \pm 0.007$ \\
    Direct inhibitory stream & LocalCA, exact effective path transport & $\mathbf{0.952 \pm 0.000}$ \\
    Explicit I cells & LocalCA, path transport, train I-cell dendrites & $0.942 \pm 0.002$ \\
    Explicit I cells & LocalCA, path transport, freeze I-cell dendrites & $0.938 \pm 0.002$ \\
    Explicit I cells & standard backprop & $0.951 \pm 0.003$ \\
    \bottomrule
    \end{tabular}
    \caption{\textbf{One-feedforward-layer input-mode path-gain probe on noise resilience (3 seeds).} All rows use one excitatory dendritic population with $[3,3]$ morphology and nonnegative inputs. The direct-stream rows have no explicit inhibitory neurons; the input stream drives learned branch-level inhibitory conductances directly. The explicit-I rows use a matched inhibitory population with $[3,3]$ morphology. Exact effective path transport closes the direct-stream restricted-feedback gap, and explicit inhibitory cells nearly match both direct-stream path transport and the explicit-I matched backpropagation reference.}
    \label{tab:input_mode_probe}
    \end{table}
    
    \paragraph{Post-training inhibition intervention and exact-error geometry.}
    We next tested whether learned inhibitory conductance carries sample-dependent structure rather than only changing global scale (Fig.~\ref{fig:inhibition_causality_error_rank}). We replayed trained shunting LocalCA checkpoints while replacing the branch-level inhibitory conductance with four post-training interventions: zero inhibition, sample-shuffled inhibition, per-branch batch means, or one uniform matched mean. On a $512$-example held-out subset, learned MNIST inhibition at $N_I{=}5$ gives $90.4\pm1.1\%$ accuracy, while the four interventions drop to $19.8$--$26.4\%$. On noise resilience at $N_I{=}10$, learned inhibition gives $80.5\pm2.4\%$, while the interventions drop to $10.3$--$19.3\%$. Thus the inhibitory field is not interchangeable with a matched global conductance load. We retain the unrestricted SVD summaries as descriptive properties of the complete error matrices, not as tests of the matched-width/scalar-fallback feedback. A stage-resolved analysis showed that the pooled feedback cosine was dominated by the width-matched somatic stage, whose cosine is one by construction and whose error-energy share differs across cores. On distal and proximal compartments the matched-width/scalar-fallback field has low cosine in both architectures; the pooled comparison is therefore not evidence for a shunting-specific compatibility advantage.
    
    We also computed the direct log-gain covariance margin in Eq.~\eqref{eq:compression_condition} on the selected morphology diagnostic checkpoints. The margin is zero when no inhibitory conductance is present and slightly negative in the inhibited shunting selections ($-2.5\times10^{-3}$, $-5.0\times10^{-4}$, and $-2.8\times10^{-4}$ at $N_I{=}5,20,40$). We therefore do not use this covariance condition as a positive main result. Its role is to clarify when shunting should compress gains. The measured path-gain dispersion and unrestricted exact-error rank remain descriptive, while the intervention establishes only that the learned inhibitory field is sample-dependent and necessary to the trained forward computation; none of these results establishes a general feedback-compatibility advantage.
    
    \begin{figure}[htbp]
    \centering
    \includegraphics[width=\textwidth]{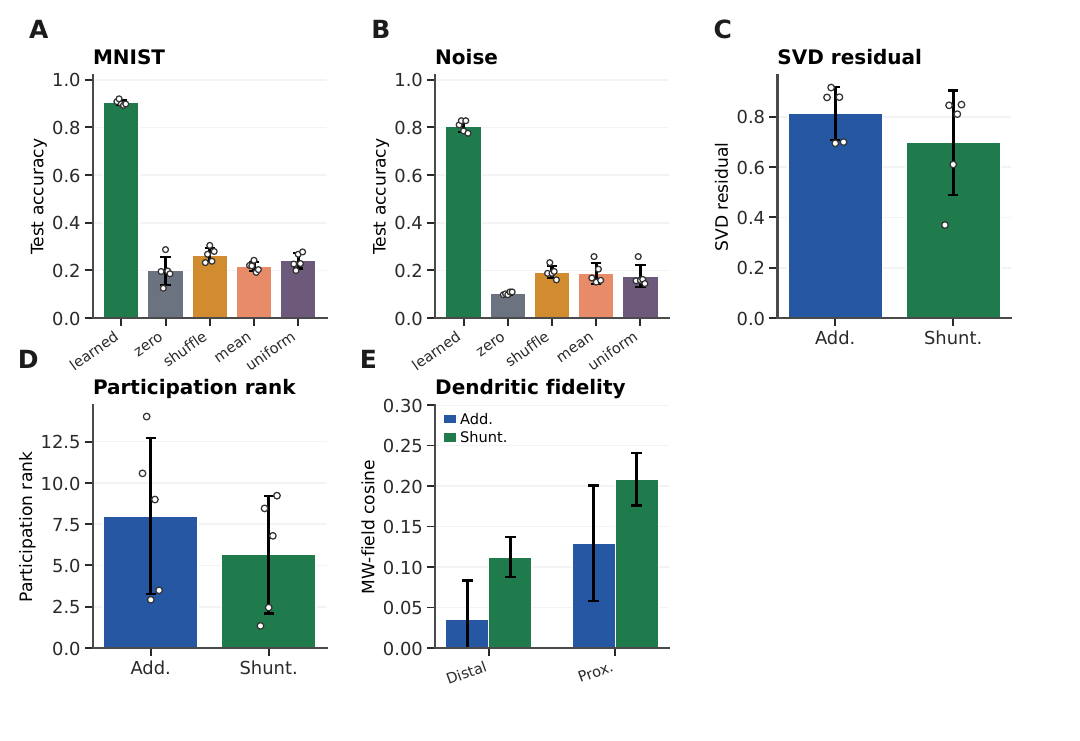}
    \caption{\textbf{Post-training inhibition intervention and exact-error geometry.}
    \textbf{(A,B)} Post-training inhibitory-conductance interventions on trained shunting LocalCA checkpoints. Accuracy is measured on a $512$-example held-out subset; dots show seeds. Removing, shuffling, or clamping learned inhibition collapses performance, showing that the learned inhibitory field is sample-dependent and functionally important. \textbf{(C,D)} Unrestricted exact-error geometry on three matched MNIST $N_I{=}5$ checkpoints per core; these SVD summaries do not test the implemented feedback. \textbf{(E)} Stage-resolved matched-width/scalar-fallback field cosine on distal and proximal compartments, excluding the exact-by-construction width-matched somatic stage. Bars average train, validation, and test batches within each of five checkpoints; error bars are s.d.\ over checkpoints.}
    \label{fig:inhibition_causality_error_rank}
    \end{figure}
    
    \subsection{Experimental Predictions}
    The model is not a detailed cell-type circuit model, but its path-gain identity gives falsifiable dendritic predictions (Table~\ref{tab:experimental_predictions}). These predictions are naturally aligned with branch-level dendritic computation and compartment-targeted inhibition in pyramidal neurons \cite{spruston2008pyramidal,branco2010single,larkum2013cellular,hennequin2017inhibitory}.
    
    \begin{table}[t]
    \centering
    \small
    \begin{tabular}{@{}L{0.31\linewidth}L{0.58\linewidth}@{}}
    \toprule
    \textbf{Model implication} & \textbf{Biological prediction} \\
    \midrule
    Path-local inhibitory conductance changes $\alpha_n^{\mathrm{cond}}$ only for descendants whose credit path crosses the inhibited compartment. & Branch-local inhibitory perturbation should distort plasticity or teaching-signal efficacy for synapses below the perturbed compartment more than for sister branches. \\
    The log derivative $\partial\log\alpha_n^{\mathrm{cond}}/\partial G_k^I=-R_k^{\mathrm{tot}}$ predicts stronger credit gating in high-resistance compartments. & The same inhibitory conductance change should have larger effects on low-conductance/high-resistance branches than on already high-conductance branches. \\
    Single-channel broadcast succeeds when exact compartment errors are effectively shared across the relevant branches and fails when branch identity matters. & Tasks requiring route- or context-specific branch credit should need richer dendritic teaching signals than a single scalar modulatory broadcast. \\
    \bottomrule
    \end{tabular}
    \caption{\textbf{Experimental predictions from the path-gain mechanism.} These are qualitative biological predictions, not claims tested directly in this paper.}
    \label{tab:experimental_predictions}
    \end{table}
    
    \subsection{Cue-Routing Feedback-Rank Diagnostic}
    
    Cue routing tests a regime where one shared teaching signal is expected to fail: context determines which noisy cue is reliable, so different pathways should receive different credit. Figure~\ref{fig:cue_routing} shows that the benchmark is learnable and that moving beyond the matched-width/scalar-fallback baseline helps, but the path-structured variant does not yet beat the best unstructured low-rank setting. We therefore interpret this result as a failure-mode and rank-requirement diagnostic, not as a solved structured-feedback method.
    
    \begin{figure}[!p]
    \centering
    \includegraphics[width=\textwidth]{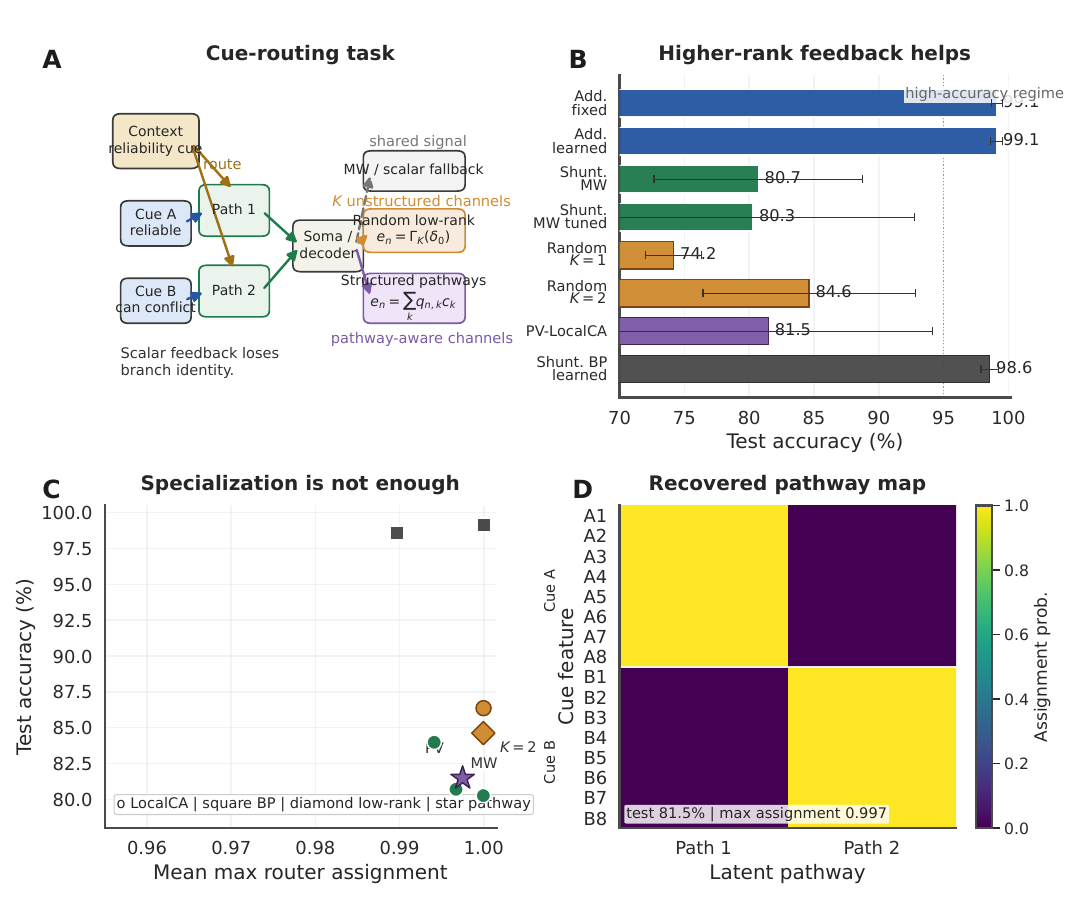}
    \caption{\textbf{Structured feedback when restricted feedback fails.}
    \textbf{(A)}~Cue-routing task schematic. Context determines which noisy cue is reliable on each trial, so different dendritic pathways should receive different teaching signals.
    \textbf{(B)}~Five-seed comparison of matched-width/scalar-fallback, random low-rank, and pathway-structured feedback. Moving beyond the restricted baseline helps, while the best structured repair remains open.
    \textbf{(C)}~Test accuracy vs.\ router specialization across learned-router conditions. High specialization alone does not guarantee success.
    \textbf{(D)}~Recovered pathway assignments for a representative learned-router run. Cue-A and cue-B features segregate into separate latent pathways even when feedback quality, rather than router discreteness, is the limiting factor.}
    \label{fig:cue_routing}
    \end{figure}
    \FloatBarrier
    
    \subsection{Random Low-Rank Broadcast Channels}
    These controls test whether failures of the matched-width/scalar-fallback baseline reflect insufficient feedback bandwidth rather than a failure of local eligibility itself. Table~\ref{tab:low_rank_broadcast} gives the exact values behind the noise-resilience ladder in Fig.~\ref{fig:rule_feedback_controls}D and the cue-routing rank sweep in Fig.~\ref{fig:cue_routing}B.
    
    \begin{table}[!htbp]
    \centering
    \small
    \renewcommand{\arraystretch}{1.02}
    \setlength{\tabcolsep}{5pt}
    \begin{tabular}{@{}lll@{}}
    \toprule
    \textbf{Setting} & \textbf{Broadcast} & \textbf{Test accuracy} \\
    \midrule
    \multicolumn{3}{@{}l}{\emph{Noise resilience, shunting, rank bridge}} \\
    MW/scalar fallback & main restricted feedback & $0.662 \pm 0.016$ \\
    path propagation & recursive attenuation proxy & $0.762 \pm 0.047$ \\
    low-rank & $K{=}1$ & $0.453 \pm 0.126$ \\
    low-rank & $K{=}2$ & $0.630 \pm 0.042$ \\
    low-rank & $K{=}4$ & $0.764 \pm 0.025$ \\
    low-rank & $K{=}8$ & $0.795 \pm 0.019$ \\
    path transport & exact effective path transport & $\mathbf{0.847 \pm 0.018}$ \\
    \addlinespace[2pt]
    \multicolumn{3}{@{}l}{\emph{Cue routing, learned shunting router, rank/structure sweep}} \\
    MW/scalar fallback & tuned restricted feedback & $0.803 \pm 0.125$ \\
    low-rank & $K{=}1$ & $0.742 \pm 0.022$ \\
    low-rank & $K{=}2$ & $\mathbf{0.846 \pm 0.082}$ \\
    low-rank & $K{=}4$ & $0.748 \pm 0.085$ \\
    pathway-vector & tuned structured rank-$2$ & $0.815 \pm 0.126$ \\
    \bottomrule
    \end{tabular}
    \renewcommand{\arraystretch}{1.0}
    \caption{\textbf{Rank bridge and rank-structure comparison.} On shunting noise resilience, both recursive path propagation and higher-rank random broadcast close a substantial fraction of the gap between matched-width/scalar-fallback feedback and exact effective path transport, with low-rank continuing to improve up to $K{=}8$. On cue routing, the learned-router sweep shows that moving beyond the restricted baseline helps, but the best unstructured low-rank setting ($K{=}2$) is still slightly better than the tuned structured pathway-vector variant. We therefore interpret routed credit assignment as an open higher-rank problem rather than as a solved pathway-vector rescue.}
    \label{tab:low_rank_broadcast}
    \end{table}

    \section{Dendritic Network Implementation}
    \label{app:implementation}
    This section gives the implementation-level specification needed to reproduce the dendritic architectures and LocalCA updates. It separates the biological modeling assumptions from software choices such as positive raw-parameter transforms, decoder-error coordinates, and optimizer gradient assignment.
    
    \subsection{Implementation Overview}
    
    Each feedforward dendritic layer is a population of dendritic neurons. A neuron contains a rooted tree of compartments; each compartment receives excitatory inputs, optionally inhibitory inputs, and dendritic coupling conductances from its children. The main experiments use branch-level input-driven inhibition: a nonnegative transfer stage produces excitatory and inhibitory input streams from the external input, and the inhibitory stream drives learned inhibitory conductances on excitatory dendritic branches. The explicit-I probe in Table~\ref{tab:input_mode_probe} instead instantiates a separate feedforward inhibitory population, whose output projects inhibitory conductance onto excitatory dendrites.
    
    The additive and shunting cores are architecture-matched: they use the same tree, synapse counts, nonnegative transfer stage, positive conductance parameterization, optimizer protocol, and seed sets. They differ only in how a branch combines excitatory, inhibitory, and dendritic inputs. Decoder weights are unconstrained readout parameters and are not interpreted as conductances. Table~\ref{tab:eq_code_map} maps each implementation concept to its governing equation, code location, and update scope.
    
    \begin{table}[htbp]
    \centering
    \scriptsize
    \setlength{\tabcolsep}{3pt}
    \renewcommand{\arraystretch}{1.08}
    \begin{tabular}{@{}L{0.21\linewidth}L{0.13\linewidth}L{0.34\linewidth}L{0.26\linewidth}@{}}
    \toprule
    \textbf{Concept / parameter family} & \textbf{Paper eq.} & \textbf{Code location} & \textbf{Update scope / notes} \\
    \midrule
    Shunting branch voltage & Eq.~\eqref{eq:voltage} & \path{DendriticBranchLayer.forward} (\path{use_shunting=True}) & Numerator/denominator conductance integration; in-theorem. \\
    Additive control voltage & Eq.~\eqref{eq:voltage} & \path{DendriticBranchLayer.forward} (\path{use_shunting=False}); \path{normalize_additive_voltage} & Same tree/synapses; inhibition subtractive rather than a denominator load. \\
    E/I synaptic and dendritic-coupling conductances & Cor.~\ref{cor:factorization}; Eqs.~\eqref{eq:3f}--\eqref{eq:5f} & \path{LocalCreditAssignment._apply_local_rule_gradients} (conductance branch) & LocalCA eligibility $xR^{\mathrm{tot}}(E{-}V)\,e$, then positive-transform derivative; covered by the theorem. \\
    Additive eligibility (matched control) & Eq.~\eqref{eq:3f} & \path{_apply_local_rule_gradients} (additive branch) & Matched signed additive derivative, with optional additive-gain controls. \\
    Broadcast-mode switch & broadcast $e_n$ & \path{LocalRuleConfig.error_broadcast_mode}; \path{_compute_low_rank_broadcast}; \path{_precompute_path_transport_errors}; \path{_precompute_path_propagation_factors} & Selects global scalar, matched-width/scalar-fallback, ancestry-shared, low-rank, path-transport, or pathway-vector feedback. \\
    4F / 5F reliability gates & $r_n^{\mathrm{4F}},\phi_n$ & \path{_compute_layer_rho}, \path{_compute_layer_phi*}, \path{clamp_phi}, \path{FiveFactorConfig} & Slow bounded preconditioners, not additional task-error channels; outside the theorem. \\
    Reactivation transfer & $a_n=f_n(V_n)$ & configured local/autograd schedule & Local reactivation derivative, with $f_n'(V_n)=1$ when disabled; implementation-level training component. \\
    Top-$k$ synaptic input groups & --- & masked \path{_apply_local_rule_gradients} & Sparse input selection; masks multiply the branch eligibility. \\
    Linear / nonlinear decoder & $\delta_0^a{=}W_{\mathrm{dec}}^{\!\top}\delta^y$ / $J_{\mathrm{dec}}^{\!\top}\delta^y$ & local decoder gradient or autograd & Maps output error into soma/core coordinates; implementation-level. \\
    HSIC auxiliary & --- & auxiliary gradient on selected representations & Batch dependence regularizer; auxiliary objective, not a local theorem term. \\
    Router / pathway roles & pathway-vector & configuration-dependent router/pathway updates & Exploratory higher-rank routing; outside the theorem. \\
    Experiment knobs & --- & \path{LocalRuleConfig} YAML (\path{rule_variant}, \path{error_broadcast_mode}, \path{broadcast_rank}, \path{five_factor}) & Reproduces rule family, feedback bandwidth, and 5F stabilizer settings. \\
    \bottomrule
    \end{tabular}
    \caption{\textbf{Implementation map: concepts, equations, code, and update scope.} The exact factorization (Cor.~\ref{cor:factorization}) covers conductance-stage branch and dendritic-coupling parameters; decoder, reactivation, top-$k$, HSIC, and router mechanisms are implementation-level training components documented alongside the theorem rather than derived from it. Code anchors are symbolic names so that minor refactors do not invalidate the map.}
    \label{tab:eq_code_map}
    \end{table}
    
    \paragraph{Implementation-specific update paths.}
    Top-$k$ selector scores are learned selection parameters; the selected masks gate active presynaptic drives and the corresponding local gradients, while inactive weights receive no LocalCA update unless an explicit inactive-update control is enabled. The discrete selection itself is treated as an implementation-level sparsity mechanism rather than as a conductance variable in the theorem. Reactivation slope and bias parameters use the configured optimizer schedule; the conductance update still uses the local derivative $f_n'(V_n)$ for voltage-error conversion. Router/pathway parameters in the cue-routing controls are trained through their configured router/pathway objective and are not included in the conductance-factorization theorem. HSIC gradients are applied to selected representation tensors through the configured local-learning representation pathway; they do not replace the LocalCA conductance eligibility and are reported as an auxiliary objective.
    
    \subsection{Morphology and Synaptic Inputs}
    
    The morphology notation $[b_1,b_2,\dots,b_D]$ specifies a rooted dendritic tree. The factor $b_d$ is the fan-in from dendritic stage $d$ to the next more proximal stage. Thus $[3,3]$ gives $3$ proximal branches per soma and $3$ distal branches per proximal branch, for $9$ distal leaves per soma. In the main feedforward models, synaptic inputs terminate on dendritic branches rather than directly on the soma. Unless a control explicitly changes this, each branch has $N_E$ excitatory synapses and $N_I$ inhibitory synapses. A learned top-$k$ selector makes each branch input sparse; inactive synapses are masked out in the local update unless the corresponding control enables inactive-weight updates.
    
    For a branch $n$, write $x_{nj}^E,x_{nj}^I\geq 0$ for the selected excitatory and inhibitory presynaptic drives and $a_c=f_c(V_c)$ for the reactivated activity of child compartment $c$. The branch-level excitatory, inhibitory, and dendritic conductance currents are
    \[
    U_n^E=\sum_{j=1}^{N_E} g_{nj}^E x_{nj}^E,
    \qquad
    U_n^I=\sum_{j=1}^{N_I} g_{nj}^I x_{nj}^I,
    \qquad
    D_n=\sum_{c\in\mathrm{child}(n)} g_{c\to n}^{\mathrm{den}} a_c .
    \]
    The corresponding dendritic coupling load is
    \[
    G_n^{\mathrm{den}}=\sum_{c\in\mathrm{child}(n)} g_{c\to n}^{\mathrm{den}}.
    \]
    For leaf branches, $D_n=G_n^{\mathrm{den}}=0$. For direct input-driven inhibition, $x^I$ is produced by the inhibitory transfer stream. For explicit-I controls, $x^I$ is the activity of the inhibitory-cell population.
    
    \subsection{Shunting and Additive Forward Equations}
    
    In the shunting core, excitatory reversal is $1$, leak and inhibitory reversal are $0$, and leak conductance is fixed to $1$. The implemented branch voltage is therefore
    \[
    V_n^{\mathrm{shunt}}
    =\frac{U_n^E+D_n}
    {1+U_n^E+U_n^I+G_n^{\mathrm{den}}+\varepsilon}.
    \]
    The small numerical $\varepsilon$ is used only for floating-point safety; analytically the leak conductance keeps the denominator positive. In implementation equations, the same stabilized denominator is used for the recorded input resistance:
    \[
    g_n^{\mathrm{tot}}=1+U_n^E+U_n^I+G_n^{\mathrm{den}},
    \qquad
    g_{n,\varepsilon}^{\mathrm{tot}}=g_n^{\mathrm{tot}}+\varepsilon,
    \qquad
    R_{n,\varepsilon}^{\mathrm{tot}}=(g_{n,\varepsilon}^{\mathrm{tot}})^{-1}.
    \]
    The main analytical equations write $R_n^{\mathrm{tot}}=(g_n^{\mathrm{tot}})^{-1}$; exact-gradient reconstruction tests use the implementation convention above.
    In the additive control, inhibition is subtractive rather than shunting. With the same nonnegative conductance variables and nonnegative inputs, the implemented comparator is
    \[
    V_n^{\mathrm{add}}=U_n^E+D_n-U_n^I,
    \]
    with optional additive-normalization controls reported separately in Fig.~\ref{fig:additive_norm_control}. Thus additive inhibition changes the signed voltage contribution and local additive eligibility, but it does not enter the denominator and does not gate conductance-stage path gain through $R_n^{\mathrm{tot}}$; Table~\ref{tab:shunting_additive_mechanism} summarizes the matched comparison.
    
    \begin{table}[htbp]
    \centering
    \small
    \setlength{\tabcolsep}{4pt}
    \renewcommand{\arraystretch}{1.02}
    \begin{tabular}{@{}L{0.20\linewidth}L{0.36\linewidth}L{0.36\linewidth}@{}}
    \toprule
    \textbf{Property} & \textbf{Shunting / conductance core} & \textbf{Additive control} \\
    \midrule
    Voltage & Normalized by total conductance & Signed E/I voltage sum \\
    Inhibition & Divisive conductance load & Subtractive signed term \\
    Synaptic eligibility & $x R^{\mathrm{tot}}(E-V)e$ & $sxe$, $s\in\{+1,-1\}$ \\
    Driving force & Present through $E-V$ & Absent from additive derivative \\
    Input resistance & State-dependent, $R^{\mathrm{tot}}=(g^{\mathrm{tot}})^{-1}$ & Effectively fixed \\
    Credit interpretation & Inhibition changes path gain and voltage sensitivity & Inhibition shifts voltage but does not gate path gain \\
    \bottomrule
    \end{tabular}
    \caption{\textbf{Matched shunting/additive comparison.} Both architectures use their own local derivative, so the comparison changes the forward-pass integration rule rather than applying a shunting-derived update to an additive model.}
    \label{tab:shunting_additive_mechanism}
    \end{table}
    
    \subsection{Positive Parameterization and Raw-Parameter Gradients}
    
    Trainable synaptic and dendritic conductances are stored as unconstrained raw parameters $\theta$ and transformed by a positive map $g=\psi(\theta)$, with \texttt{softplus} used in the main reported runs:
    \[
    \psi(\theta)=\log(1+\exp\theta),
    \qquad
    \psi'(\theta)=\frac{1}{1+\exp(-\theta)}.
    \]
    Equations in the main text give conductance-space gradients $\partial L/\partial g$. Before assigning gradients to the optimizer variables, the implementation applies the chain rule
    \[
    \frac{\partial L}{\partial \theta}
    =\frac{\partial L}{\partial g}\,\psi'(\theta).
    \]
    The exact-gradient reconstruction diagnostic includes this parameterization factor, as well as top-$k$ activity masks and dendritic block-linear parameterization. This is why the reconstruction comparison is against raw autograd gradients rather than only against abstract conductance variables.
    
    \subsection{Reactivation}
    
    The conductance stage produces a pre-reactivation voltage $V_n$, after which each branch transmits $a_n=f_n(V_n)$. When reactivation is disabled, $f_n$ is the identity and $f_n'(V_n)=1$. The main sweeps use the learnable bounded transform
    \[
    f_n(V)=\frac{\tanh(m_n(V-b_n))+1}{2},
    \qquad
    m_n=\exp(\ell_{m,n}),
    \]
    with derivative
    \[
    f_n'(V)=\frac{m_n}{2}\left[1-\tanh^2(m_n(V-b_n))\right].
    \]
    Exact path transport uses the effective gain $\tilde{\alpha}_n$: parent-to-child transport includes the parent reactivation derivative, and the local update converts activation-space error into pre-reactivation voltage error by multiplying by the child derivative. When reactivation is disabled, this reduces to conductance-only transport through $\alpha_n^{\mathrm{cond}}$.
    
    \subsection{Decoder and Error Coordinates}
    
    Let $h$ denote the final soma/core activity passed to the decoder and let $\delta^y=\partial L/\partial \hat{y}$. For a linear decoder $\hat{y}=W_{\mathrm{dec}}h$, the soma/core activation-space teaching error is
    \[
    \delta_0^a=W_{\mathrm{dec}}^\top\delta^y.
    \]
    For a nonlinear decoder, used in the CIFAR-10 compact control ladder, the implementation uses the decoder-input Jacobian product
    \[
    \delta_0^a=J_{\mathrm{dec}}(h)^\top\delta^y.
    \]
    All LocalCA broadcast modes are defined in this soma/core activation-error coordinate system before local conversion to voltage-space errors. Decoder-update modes are: \textbf{backprop}, which uses autograd for decoder parameters; \textbf{local}, which is implemented for the linear readout and assigns
    \[
    \widehat{\nabla}_{W_{\mathrm{dec}}}L=\left\langle \delta^y h^\top\right\rangle_B,
    \qquad
    \widehat{\nabla}_{b_{\mathrm{dec}}}L=\left\langle \delta^y\right\rangle_B;
    \]
    and \textbf{frozen}, which holds the decoder fixed. The local decoder gradient is output-space; $\delta_0^a$ is the mapped soma/core activation-space error used by dendritic broadcasts.
    
    \subsection{Biological Plausibility Assumptions}
    
    The model makes the following explicit assumptions. First, it uses the steady-state solution of the passive cable equation rather than temporal membrane dynamics; all path gains and gradient diagnostics refer to conductance-stage voltages at steady state. Second, each dendritic cell is a rooted tree, so each compartment has a unique path to the soma and the closed-form path gains $\alpha_n^{\mathrm{cond}}$ and $\tilde{\alpha}_n$ are well-defined. Third, synaptic and dendritic conductance parameters are nonnegative by construction. Fourth, the main sweeps enforce nonnegative first-layer presynaptic drive through nonnegative inputs or a ReLU transfer stage. Fifth, reactivation is handled through local $f'(V)$ factors along the path, with $f'(V)=1$ when disabled. Sixth, each dendritic synapse receives a restricted non-local feedback field---global scalar, matched-width/scalar-fallback, ancestry-shared, low-rank, path-structured, or transported-oracle depending on the condition---while presynaptic activity, voltage, reversal potential, input resistance, and slowly estimated branch-level modulators are local or branch-local quantities.
    
    \subsection{Units and Parameterization}
    Table~\ref{tab:units_parameterization} summarizes the numerical conventions used throughout the implementation and diagnostics, including which quantities are constrained conductances and which are unconstrained readout parameters.
    \begin{table}[t]\centering\small
    \begin{tabular}{@{}llL{0.62\textwidth}@{}}\toprule
    Quantity & Symbol & Convention\\\midrule
    Voltage & $V$ & Conductance-stage shunting voltage in $[0,1]$ under nonnegative drive; additive controls may be signed\\
    Conductances & $g^{\mathrm{syn}}, g^{\mathrm{den}}$ & Nonnegative via $g=\mathrm{softplus}(\theta)$; local gradients are mapped to raw $\theta$ by the transform derivative\\
    Leak conductance & $g^{\mathrm{leak}}$ & Set to $1$\\
    Input resistance & $R^{\mathrm{tot}}$ & $\leq 1$\\\bottomrule
    \end{tabular}
    \caption{Units and normalization.}
    \label{tab:units_parameterization}
    \end{table}
    
    \subsection{Representative Manuscript Architectures}
    \begin{table}[htbp]\centering\scriptsize
    \renewcommand{\arraystretch}{1.06}
    \resizebox{\linewidth}{!}{%
    \begin{tabular}{@{}L{0.18\linewidth}L{0.08\linewidth}L{0.08\linewidth}L{0.10\linewidth}L{0.16\linewidth}L{0.06\linewidth}L{0.20\linewidth}@{}}
    \toprule
    \textbf{Setting} & \textbf{Encoder} & \textbf{E layers} & \textbf{Tree} & \textbf{Synapses / branch} & \textbf{Seeds} & \textbf{Notes} \\
    \midrule
    Gradient fidelity / path-gain / noise-resilience mechanism & id.+ReLU & $[128,128]$ & $[3,3]$ & $N_E{=}40$, $N_I \in \{0,5,10,20,40\}$ & $5$ fidelity, $5$ oracle & Main mechanism sweeps in Figs.~\ref{fig:gradient_fidelity} and \ref{fig:mechanistic}; learned bounded tanh reactivation. \\
    MNIST / figure-ground local performance & id.+ReLU & $[128]$ & $[3,3]$ & $N_E{=}40$, $N_I{=}20$ & $5$ & Best local 5F matched-width/scalar-fallback runs in Table~\ref{tab:gap_closing}; learned bounded tanh reactivation. \\
    Fashion-MNIST performance & id.+ReLU & $[128]$ & $[3,3]$ & $N_E{=}40$, $N_I{=}20$ & $5$ & Dedicated five-seed standard-vs-local performance sweep; learned bounded tanh reactivation. \\
    Cue-routing PV-LocalCA & router & $[64]$ & $[2]$ & $N_E{=}10$, $N_I{=}4$ & $5$ & Learned router with two pathway groups and rank-$2$ structured feedback; learned bounded tanh reactivation. \\
    CIFAR-10 compact direct-I-stream control ladder & id.+ReLU & $[20]$ E, no I cells & $[3,3,3,3]$ & $N_E{=}25$, $N_I{=}25$ direct I-to-E, with matched no-I controls & $5$ per condition & Harder-dataset stress test with input-driven inhibitory conductance, decoder $[32,16]$, decoder-aware soma mapping, and additive fairness controls. \\
    Morphology$\times$inhibition appendix map & id.+ReLU & $[128,128]$ & $[2,2]$, $[3,3]$, $[4,4]$, $[2,2,2]$, $[3,3,3]$ & $N_E{=}40$, $N_I \in \{0,5,10,20,40\}$ & $3$ & Supportive 3-seed sweep; learned bounded tanh reactivation. \\
    \bottomrule
    \end{tabular}%
    }
    \renewcommand{\arraystretch}{1.0}
    \caption{\textbf{Representative architectures used in the manuscript.} All dendritic rows use explicit branch-level excitatory and inhibitory inputs without direct somatic synapses, nonnegative conductances, and a nonnegative first-layer drive enforced by an explicit ReLU transfer stage in the main runs. The tree column gives dendritic morphology per excitatory neuron, and the synapse counts report excitatory and inhibitory synapses per branch.}
    \end{table}
    
    \subsection{Hyperparameters}
    Table~\ref{tab:hyperparameters} collects the core training hyperparameters for each main experiment. Learning rates are applied through parameter groups (top-$k$, dendritic block-linear, reactivation, decoder) with the values given in the representative config files; where a single LR is reported, every group uses that value.
    
    \begin{table}[htbp]\centering\scriptsize
    \renewcommand{\arraystretch}{1.06}
    \setlength{\tabcolsep}{3pt}
    \begin{tabular}{@{}L{0.26\linewidth}L{0.07\linewidth}L{0.07\linewidth}L{0.10\linewidth}L{0.10\linewidth}L{0.07\linewidth}L{0.22\linewidth}@{}}
    \toprule
    \textbf{Setting} & \textbf{Opt.} & \textbf{LR} & \textbf{Batch} & \textbf{Epochs} & \textbf{Weight decay} & \textbf{Notes} \\
    \midrule
    MNIST / Fashion-MNIST / figure-ground MNIST (5F MW/scalar-fallback LocalCA) & Adam & $10^{-3}$ / $5\cdot10^{-4}$ (block / react.) & 256 & 100 & 0 & fixed-epoch, no early stopping; learned bounded tanh reactivation \\
    MNIST / Fashion-MNIST / figure-ground MNIST (matched backprop reference) & Adam & $10^{-3}$ / $5\cdot10^{-4}$ & 256 & 100 & 0 & matched to LocalCA setup \\
    Gradient fidelity / path-gain / noise resilience & Adam & $10^{-3}$ & 256 & 50 & 0 & 5 seeds, hooks enabled for diagnostic capture \\
    Morphology $\times$ inhibition regime map (supportive) & Adam & $10^{-3}$ & 256 & 50 & 0 & 3 seeds \\
    Cue routing & Adam & $10^{-3}$ & 256 & 90 & 0 & early stopping with patience 30 \\
    CIFAR-10 compact direct-I-stream depth-4 LocalCA & Adam & $10^{-3}$ / $10^{-4}$ (block / react.) & 256 & 400 & 0 / 0.01 & grad-clip 5.0, early stop patience 50, nonlinear decoder \\
    CIFAR-10 compact direct-I-stream depth-4 BP & Adam & $10^{-3}$ & 256 & 200 & 0.01 & early stop patience 40, matched backpropagation reference \\
    \bottomrule
    \end{tabular}
    \renewcommand{\arraystretch}{1.0}
    \caption{\textbf{Training hyperparameters by experiment.} All runs use Adam with default $\beta_1{=}0.9$, $\beta_2{=}0.999$, $\varepsilon{=}10^{-8}$. Reactivation and block-linear parameter groups use a slower LR to keep conductance-level dynamics stable. Representative configs are provided in the accompanying repository.}
    \label{tab:hyperparameters}
    \end{table}
    
    \subsection{Effective Broadcast Dimensionality in Main Architectures}
    Table~\ref{tab:effective_broadcast_dimensionality} reports the effective feedback dimensionality of each main architecture.
    
    \begin{table}[htbp]\centering\small
    \renewcommand{\arraystretch}{1.06}
    \begin{tabular}{@{}L{0.29\linewidth}L{0.17\linewidth}L{0.12\linewidth}L{0.32\linewidth}@{}}
    \toprule
    \textbf{Setting} & \textbf{Soma/core dim.} & \textbf{Output classes} & \textbf{Main feedback field} \\
    \midrule
    Gradient fidelity / path-gain / noise resilience & $128,128$ & $10$ & Vector at width-matched stage; scalar at wider branch stages \\
    MNIST / Fashion-MNIST performance & $128$ & $10$ & Vector at width-matched stage; scalar at wider branch stages \\
    Figure-ground MNIST performance & $128$ & $10$ & Vector at width-matched stage; scalar at wider branch stages \\
    Cue routing & $64$ & $2$ & Tuned matched-width/scalar-fallback baseline; higher-rank tests use low-rank or pathway-vector feedback \\
    Compact CIFAR-10 harder-dataset stress test & $20$ & $10$ & Decoder-aware vector at matched width; scalar at wider branch stages \\
    \bottomrule
    \end{tabular}
    \renewcommand{\arraystretch}{1.0}
    \caption{\textbf{Effective dimensionality of the main feedback condition.} Output loss gradients are first mapped into decoder-input/soma coordinates. The matched-width/scalar-fallback mode preserves the vector only when the current dendritic-stage width equals the soma/core dimension and otherwise falls back to a single scalar per example. It is therefore not ancestry-shared soma feedback. Global scalar, ancestry-shared, low-rank, pathway-vector, and transported-error conditions are labeled separately as controls or oracles.}
    \label{tab:effective_broadcast_dimensionality}
    \end{table}
    
    \subsection{Compute Resources}
    All experiments were run on an institutional GPU cluster. Reproduction runs use one NVIDIA A100 or H100-class GPU (40--80GB memory) per seed and do not require distributed training. Table~\ref{tab:compute_resources} gives approximate single-GPU runtime ranges for the reported experiment families; unreported exploratory runs used additional cluster time and are not counted in the reproduction estimate.

    In a matched one-seed current-code profile on an NVIDIA RTX PRO 6000, the same shunting model, batch size, 180-epoch schedule, hooks, and final evaluation took 190\,s with LocalCA and 125\,s with backpropagation. Process-wide peak CUDA allocation was 1.70\,GiB in both cases (2.03 vs.\ 1.97\,GiB reserved). Thus the present research implementation provides no measured speed or memory advantage; it retains diagnostic state and is not optimized for either. We did not measure hardware energy and make no energy-efficiency claim.
    
    \begin{table}[htbp]
    \centering
    \small
    \begin{tabular}{@{}L{0.39\linewidth}L{0.18\linewidth}L{0.17\linewidth}L{0.16\linewidth}@{}}
    \toprule
    \textbf{Experiment family} & \textbf{Runs} & \textbf{Time / run} & \textbf{Approx. GPU-h} \\
    \midrule
    Performance and verification & $\sim$25 & 5--20 min & 4--9 \\
    Gradient, path-gain, inhibition, and oracle diagnostics & $\sim$80 & 10--30 min & 15--40 \\
    Morphology, stress, rule, and feedback controls & $\sim$120 & 5--35 min & 20--70 \\
    CIFAR-10 control ladder & 70 & 0.8--2.5 h & 60--175 \\
    Figure generation and CPU-side summaries & -- & $<5$ h total & $<5$ \\
    \bottomrule
    \end{tabular}
    \caption{\textbf{Approximate compute for reported results.} Ranges reflect single-seed wall-clock variation across A100/H100-class GPUs, data-loading overhead, and early stopping.}
    \label{tab:compute_resources}
    \end{table}
    
    \subsection{Code, Data, and Asset Availability}
    \label{app:code_data_assets}
    MNIST, Fashion-MNIST, and CIFAR-10 are public datasets or public benchmark assets cited in the paper; we use them under their standard published access conditions and cite their original sources in the bibliography. The accompanying source package contains the model code, training scripts, diagnostic scripts, figure-generation scripts, representative configuration files, tests, precomputed summary files, and analysis entry points used for the manuscript. Code and reproduction scripts will be released publicly after publication with an immutable repository tag. The paper does not release a new dataset or a stand-alone pretrained model asset.
    Dataset access pages are the original MNIST site / UCI entry (\url{https://archive.ics.uci.edu/dataset/683/mnist+database+of+handwritten+digits}), the Fashion-MNIST repository (\url{https://github.com/zalandoresearch/fashion-mnist}), and the CIFAR page (\url{https://www.cs.toronto.edu/~kriz/cifar.html}).
    
    \begin{table}[htbp]
    \centering
    \small
    \setlength{\tabcolsep}{4pt}
    \begin{tabular}{@{}L{0.20\linewidth}L{0.34\linewidth}L{0.36\linewidth}@{}}
    \toprule
    \textbf{Asset} & \textbf{Use in this paper} & \textbf{Access / license or terms note} \\
    \midrule
    MNIST & Digit classification and derived nonnegative synthetic tasks & Public benchmark from the original MNIST site / UCI entry; UCI lists DOI 10.24432/C53K8Q and asks users to follow the original acknowledgement policy; no raw-data redistribution. \\
    Fashion-MNIST & Apparel classification stress test & Public Zalando Research benchmark; repository is MIT licensed; no raw-data redistribution beyond standard dataset loaders. \\
    CIFAR-10 & Flattened harder-dataset stress test & Public CIFAR dataset site asks users to cite Krizhevsky's technical report; no raw-data redistribution. \\
    Synthetic tasks & Figure-ground MNIST, noise resilience, cue integration & Generated procedurally from public benchmarks or random seeds described in Appendix~\ref{app:synthetic_tasks}; no new third-party asset is introduced. \\
    \bottomrule
    \end{tabular}
    \caption{\textbf{Existing assets used in the manuscript.} We credit the original sources in the bibliography, use public benchmark assets under their published access conditions, and do not redistribute third-party raw data in the manuscript package.}
    \end{table}
    
    \subsection{LocalCA Broadcast and Gradient Assignment}
    
    LocalCA first maps the output loss derivative to the soma/core activation-space error $\delta_0^a$ and then constructs a branch voltage-error field. In the final dendritic layer this is the decoder-input error; in earlier matched-width layers the practical implementation reuses the same coordinate by neuron index as an approximate layer-soma teaching signal. This is an inter-layer feedback approximation, separate from the within-tree path-transport approximation studied by the main theory. In the main matched-width/scalar-fallback mode, if the current stage width matches the soma/core error dimension, the vector is reused directly:
    \[
    e_{n,b}^{a}=\delta_{0,b}^{a}\in\mathbb{R}^{d_0}.
    \]
    The stricter scalar control compresses the soma/core error to one value per example,
    \[
    \bar{\delta}_b=\frac{1}{d_0}\sum_{c=1}^{d_0}\delta_{0,bc}^{a},
    \qquad
    e_{n,b}^{a}=\bar{\delta}_b\mathbf{1}_n,
    \]
    where $d_0$ is the dimension of the soma/core error. If dimensions do not match, the implementation falls back to scalar expansion. The ancestry-shared mode instead repeats coordinate $\delta_{0,bu}^{a}$ over the contiguous block of compartments descended from soma $u$ whenever the branch-stage width is an integer multiple of $d_0$. The low-rank controls use fixed random projection and mixing matrices $P_K$ and $Q_n$,
    \[
    c_b=P_K\delta_{0,b}^{a}\in\mathbb{R}^{K},
    \qquad
    e_{n,b}^{a}=Q_n c_b.
    \]
    The pathway-vector controls use router-inferred pathway roles to gate and transport a low-rank role vector. The transported oracle uses the effective tree recursion. If $p$ is the parent of child $c$, activation-space error is propagated by
    \[
    e_{c}^{a}=e_{p}^{a}\, f_p'(V_p) R_{p,\varepsilon}^{\mathrm{tot}} g_{c\to p}^{\mathrm{den}},
    \]
    and the local pre-reactivation voltage error used in conductance eligibility is
    \[
    e_n^V=e_n^a f_n'(V_n),
    \]
    where $f_n'(V_n)=1$ when reactivation is disabled.
    This operational definition is the implementation counterpart of the effective path gain $\tilde{\alpha}_n$ in Theorem~\ref{thm:tree_backprop}.
    
    For shunting branches, the conductance-space LocalCA gradients assigned before raw-parameter transformation are
    \[
    \widehat{\nabla}_{g_{nj}^{E}} L
    =\left\langle e_n^V\, x_{nj}^{E} R_{n,\varepsilon}^{\mathrm{tot}}(1-V_n)\right\rangle_B,
    \qquad
    \widehat{\nabla}_{g_{nj}^{I}} L
    =\left\langle e_n^V\, x_{nj}^{I} R_{n,\varepsilon}^{\mathrm{tot}}(0-V_n)\right\rangle_B,
    \]
    and for a dendritic coupling from child $c$ to parent $n$,
    \[
    \widehat{\nabla}_{g_{c\to n}^{\mathrm{den}}} L
    =\left\langle e_n^V R_{n,\varepsilon}^{\mathrm{tot}}(a_c-V_n)\right\rangle_B.
    \]
    For additive controls, the matching local derivatives are
    \[
    \widehat{\nabla}_{g_{nj}^{E}} L
    =\langle e_n^V x_{nj}^{E}\rangle_B,
    \qquad
    \widehat{\nabla}_{g_{nj}^{I}} L
    =-\langle e_n^V x_{nj}^{I}\rangle_B,
    \qquad
    \widehat{\nabla}_{g_{c\to n}^{\mathrm{den}}} L
    =\langle e_n^V a_c\rangle_B.
    \]
    The optimizer applies the usual descent step; equivalently, the sign can be absorbed into the definition of the broadcast error. After these conductance-space gradients are formed, they are multiplied by the positive-transform derivative $\psi'(\theta)$ and by any active synapse masks before assignment to raw parameters.
    
    \subsection{4F and 5F Local Modulators}
    
    The 3F rule uses only the local eligibility and broadcast error above. The 4F and 5F variants multiply the conductance-space gradient by branch-level reliability factors before raw-parameter transformation. For 4F, the implementation records $V_n$ from each branch layer and the corresponding soma/core activity $V_0$. For each batch item $b$, it first averages over the non-batch coordinates of the recorded layer tensor to obtain scalar summaries $\bar V_{n,b}$ and $\bar V_{0,b}$. It then estimates covariance over the batch axis and smooths the resulting scalar with an EMA:
    \[
    r_n^{\mathrm{4F}}=
    \frac{\mathrm{Cov}_B(\bar{V}_{n,b},\bar{V}_{0,b})}
    {\sqrt{\mathrm{Var}_B(\bar{V}_{n,b})\mathrm{Var}_B(\bar{V}_{0,b})}+\varepsilon}.
    \]
    The EMA is initialized from the first observed batch statistic; the recorded tensors are detached from autograd, and the same mini-batch supplies both the statistics and the gradient update. This proxy is clamped to the positive stability range $[0.1,2.0]$ before it multiplies the local gradient; sign-opposed covariance therefore reduces the multiplier rather than reversing update direction. Because the smoothed numerator and denominator can be updated separately, and because single-sample fallbacks use online covariance moments, the clamped value can exceed one and should be read as a preconditioner rather than a literal Pearson correlation. Single-sample online fallbacks for these statistics use Welford's algorithm \cite{welford1962note}.
    The 5F factor adds a bounded branch-level preconditioner based on how much branch voltage variance remains predictable from parent or soma-level activity:
    \[
    \phi_n=
    \mathrm{clip}_{[0.25,4.0]}
    \left(
    \frac{\mathrm{Var}(V_n)}
    {\sigma_{\mathrm{res},n}^{2}+\varepsilon}
    \right),
    \]
    where $\sigma_{\mathrm{res},n}^{2}$ is estimated online from a scalar ridge regression of the branch-layer voltage on the configured parent proxy. Batch-mode estimates center over examples and flatten remaining layer coordinates before updating the smoothed variance and covariance moments; single-sample fallback uses scalar online moments. Since these moments are smoothed and clamped rather than recomputed as a single ordinary least-squares fit on a fixed batch, $\phi_n$ can transiently fall below one or rise above one. The lower clamp keeps noisy residual estimates from suppressing a branch-layer update entirely, and the upper clamp prevents highly predictable branches from dominating optimization. In the main 5F runs, the multiplier is $r_n^{\mathrm{4F}}\phi_n$. The 5F sensitivity diagnostic in Fig.~\ref{fig:five_factor_sensitivity} varies the clamp and EMA rate to verify that the reported MNIST performance is not a single brittle clamp setting. By default the reliability statistics use EMA rate $\alpha_{\mathrm{EMA}}=0.1$ and a conditional-EMA residual estimator with ridge $\lambda_{\mathrm{ridge}}=10^{-3}$ for $\sigma_{\mathrm{res},n}^{2}$, and all gate denominators use the stabilizer $\varepsilon=10^{-8}$.
    
    \subsection{Algorithm}
    Algorithm~\ref{alg:main_localca_update} gives the update order for the main 5F matched-width/scalar-fallback condition and shows where ancestry-shared, higher-bandwidth, and transported-error controls enter.
    \begin{algorithm}[H]
    \caption{Main 5F Matched-Width/Scalar-Fallback LocalCA Update}\small
    \label{alg:main_localca_update}
    \begin{algorithmic}[1]
    \STATE \textbf{Input:} Dendritic model, batch $(x,y)$, optimizer
    \STATE Forward pass; loss $L$, output error $\delta^y$
    \STATE Somatic/core activation error $\delta_0^a =$ decoder-input Jacobian product $J_{\mathrm{dec}}(h_{\mathrm{core}})^\top \delta^y$ (or $W_{\mathrm{dec}}^\top\delta^y$ for a linear decoder)
    \FOR{each layer $n$ (reverse)}
        \STATE If widths match, set $e_n^a=\delta_0^a$; otherwise broadcast $\bar{\delta}$ to the stage. Ancestry-shared/low-rank/path controls replace this line
        \STATE Update EMA estimates for $r_n^{\mathrm{4F}}$ and $\phi_n$ from batch voltages
        \STATE Set $e_n^V=e_n^a f_n'(V_n)$, with $f_n'(V_n)=1$ when reactivation is disabled
        \STATE $\widehat{\nabla}_{g_j^{\mathrm{syn}}}L \leftarrow r_n^{\mathrm{4F}}\phi_n\langle x_j R_{n,\varepsilon}^{\mathrm{tot}}(E_j-V_n)e_n^V\rangle_B$
        \STATE $\widehat{\nabla}_{g_{c\to n}^{\mathrm{den}}}L \leftarrow r_n^{\mathrm{4F}}\phi_n\langle R_{n,\varepsilon}^{\mathrm{tot}}(a_c-V_n)e_n^V\rangle_B$
        \STATE Map conductance gradients to raw parameters by multiplying by $\partial g/\partial\theta$
    \ENDFOR
    \STATE Optional controls replace line 5 with scalar, rank-$K$, path-structured, or transported-oracle feedback.
    \STATE Clip gradients; optimizer step
    \end{algorithmic}
    \end{algorithm}
    
    \section{Theoretical Details}
    \label{app:theory}
    This section collects theory-adjacent material that supports implementation and interpretation but is not part of the main proof. The central derivation remains Theorem~\ref{thm:tree_backprop}, Corollary~\ref{cor:factorization}, and Prop.~\ref{prop:inhibitory_path_gain}.

    \subsection{Conditional Path-Gain Compression}
    Let $z_n=\log \alpha_n^{\mathrm{cond}}$ be the log conductance-stage path gain for compartment $n$. If added inhibitory conductance contributes a path-dependent attenuation
    \begin{equation}
    \beta_n(x)=\sum_{k\in\mathcal{A}(n)}
    \log\!\left(1+\frac{\Delta G_k^I(x)}{g_k^{\mathrm{tot}}(x)}\right),
    \qquad
    z'_n=z_n-\beta_n,
    \label{eq:log_gain_attenuation}
    \end{equation}
    then, over compartments or examples,
    \begin{equation}
    \Var(z')=\Var(z)+\Var(\beta)-2\Cov(z,\beta).
    \label{eq:log_gain_variance}
    \end{equation}
    Thus shunting narrows the log path-gain field exactly when
    \begin{equation}
    \Cov(z,\beta)>\frac{1}{2}\Var(\beta).
    \label{eq:compression_condition}
    \end{equation}
    Eq.~\eqref{eq:log_gain_attenuation} follows by taking the logarithm of the product ratio in Prop.~\ref{prop:inhibitory_path_gain}; Eq.~\eqref{eq:log_gain_variance} then applies the variance identity for $z-\beta$. This is a diagnostic condition under one averaging measure, not a theorem that inhibition minimizes rank. In finite trained networks, its covariance margin can disagree with coefficient-of-variation and exact-error-rank summaries because those statistics average over different compartment and sample axes.

    \begin{proposition}[Path-gain dispersion controls 3F alignment]
    \label{prop:path_gain_alignment}
    For a single example before batch averaging, consider one soma/tree with the exact voltage-space soma error $\delta_0^V$ supplied, and write the local eligibility for conductance parameter $i$ as $z_i$. Let $n(i)$ be the compartment containing parameter $i$. Use the effective gain $\tilde{\alpha}_{n(i)}$ below; when reactivation is disabled, it equals $\alpha_{n(i)}^{\mathrm{cond}}$. Assume both gradient vectors are nonzero. The exact and 3F ancestry-shared gradients are
    \begin{equation}
    g_i^{\mathrm{BP}}=z_i\tilde{\alpha}_{n(i)}\delta_0^V,
    \qquad
    \widehat{g}_i^{\mathrm{3F}}=z_i\delta_0^V .
    \label{eq:path_gain_alignment_gradients}
    \end{equation}
    If transfer derivatives and conductance couplings are nonnegative, $\tilde{\alpha}_{n(i)}\geq0$, so corresponding nonzero components have the same sign. Their cosine is
    \begin{equation}
    \cos(\widehat{g}^{\mathrm{3F}},g^{\mathrm{BP}})
    =
    \frac{\sum_i w_i\tilde{\alpha}_{n(i)}}
    {\sqrt{\sum_i w_i}\sqrt{\sum_i w_i\tilde{\alpha}_{n(i)}^2}}
    =
    \frac{1}{\sqrt{1+\mathrm{CV}_w(\tilde{\alpha}_{n(i)})^2}},
    \qquad
    w_i=(z_i\delta_0^V)^2 .
    \label{eq:path_gain_alignment_cosine}
    \end{equation}
    \end{proposition}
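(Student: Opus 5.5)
The plan has three steps: obtain the two gradient formulas from earlier results, settle the sign claim, and then treat the cosine as a weighted Cauchy--Schwarz ratio.

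\emph{Gradient formulas.} For any conductance parameter $i$ at compartment $n(i)$, Prop.~\ref{prop:sensitivities} gives $\partial V_{n(i)}/\partial\theta_i=z_i$, where $z_i$ is the local eligibility: $x_iR^{\mathrm{tot}}_{n}(E_i-V_n)$ for a synapse and $R^{\mathrm{tot}}_n(a_j-V_n)$ for a dendritic coupling. Corollary~\ref{cor:factorization}, together with its dendritic analogue, then gives $g_i^{\mathrm{BP}}=z_i\,\partial L/\partial V_{n(i)}$. Theorem~\ref{thm:tree_backprop} replaces the compartment error by $\tilde{\alpha}_{n(i)}\delta_0^V$. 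For the ancestry-shared 3F field at a single example, the exact soma coordinate is repeated over every descendant compartment, so $e_{n}^V=\delta_0^V$. Substituting this into Eq.~\eqref{eq:3f} without the batch average yields $\widehat g_i^{\mathrm{3F}}=z_i\delta_0^V$.

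\emph{Sign claim.} Each factor of $\tilde{\alpha}_n$ is $f_i'(V_i)R_k^{\mathrm{tot}}g^{\mathrm{den}}_{i\to k}$. Here $R_k^{\mathrm{tot}}>0$, $g^{\mathrm{den}}\ge0$, and $f'\ge0$ by assumption, so $\tilde{\alpha}_n\geq 0$. The components $g_i^{\mathrm{BP}}$ and $\widehat g_i^{\mathrm{3F}}$ therefore differ only by a nonnegative multiplier, and whenever both are nonzero they have the same sign.

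\emph{Cosine.} Write $u_i=z_i\delta_0^V$. The inner product is $\sum_i u_i^2\tilde{\alpha}_{n(i)}=\sum_i w_i\tilde{\alpha}_{n(i)}$. The norms are $\sqrt{\sum_i w_i}$ and $\sqrt{\sum_i w_i\tilde{\alpha}_{n(i)}^2}$, and both are nonzero by assumption. This gives the first equality in Eq.~\eqref{eq:path_gain_alignment_cosine}.

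For the second equality, define the $w$-weighted mean $\mu_w=\sum_i w_i\tilde{\alpha}_{n(i)}/\sum_i w_i$, the weighted second moment $m_2=\sum_i w_i\tilde{\alpha}_{n(i)}^2/\sum_i w_i$, and $\mathrm{CV}_w^2=(m_2-\mu_w^2)/\mu_w^2$. Dividing the numerator and denominator by $\sum_i w_i$ turns the cosine into $\mu_w/\sqrt{m_2}$. Since $\mu_w\ge0$, we have $\mu_w/\sqrt{m_2}=1/\sqrt{m_2/\mu_w^2}=1/\sqrt{1+\mathrm{CV}_w^2}$.

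\emph{Main obstacle.} The only delicate point is the case $\mu_w=0$, where $\mathrm{CV}_w$ is undefined. Nonnegativity of $\tilde{\alpha}$ means $\mu_w=0$ forces $\tilde{\alpha}_{n(i)}=0$ for every $i$ with $w_i>0$. That makes $g^{\mathrm{BP}}=0$, which the nonzero hypothesis excludes. Hence $\mu_w>0$, the CV expression is well defined, and the cosine is positive. As a consequence, the cosine equals one exactly when $\tilde{\alpha}$ is constant on the weighted support. I would also state explicitly that $\mathrm{CV}_w$ is defined with respect to these weights $w_i$.
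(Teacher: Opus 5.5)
Your proposal is correct and follows the paper's proof. Like the paper, you substitute the two gradient forms into the cosine, absorb the common eligibility factors into $w_i=(z_i\delta_0^V)^2$, and apply the weighted identity $E_w[\tilde{\alpha}^2]=E_w[\tilde{\alpha}]^2\bigl(1+\mathrm{CV}_w(\tilde{\alpha})^2\bigr)$. Your added argument that $\mu_w>0$ (from $\tilde{\alpha}\ge0$ and $g^{\mathrm{BP}}\neq0$) makes $\mathrm{CV}_w$ well defined and justifies $\mu_w/\sqrt{m_2}=1/\sqrt{1+\mathrm{CV}_w^2}$, which fills in a well-definedness step the paper leaves implicit.
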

    \begin{proof}
    Substitute Eq.~\eqref{eq:path_gain_alignment_gradients} into the cosine formula and cancel the common signed eligibility factors through $w_i$. The final equality is the weighted identity $E_w[\tilde{\alpha}^2]=E_w[\tilde{\alpha}]^2(1+\mathrm{CV}_w(\tilde{\alpha})^2)$.
    \end{proof}

    Prop.~\ref{prop:path_gain_alignment} is conditional on the exact layer-soma error. The implemented updates are batch averages of products, so the proposition describes the pre-batch geometry of per-example gradient contributions rather than a guarantee for a practical multi-layer broadcast.
    
    \subsection{Random-Broadcast Alignment Intuition}
    
    Feedback-alignment arguments suggest that random or low-rank feedback can be useful when the induced local update remains positively correlated with the exact gradient \cite{lillicrap2016random}. In this model, that correlation depends on how local eligibility factors co-vary with the conductance-only path gain in Eq.~\eqref{eq:path_sum} and with the reactivation derivatives in Eq.~\eqref{eq:effective_path_gain}. We treat this as intuition only. The central theoretical object in the paper is still the exact pre-reactivation compartment error $\partial L/\partial V_n=\tilde{\alpha}_n\delta_0^V$; when reactivation is disabled, $\tilde{\alpha}_n=\alpha_n^{\mathrm{cond}}$. The empirical question is how faithfully different broadcast fields approximate that quantity.
    
    \subsection{Morphology-Aware Extensions}
    The following variants are implemented controls or architectural extensions used to test whether coarse morphology-aware feedback can approximate exact path transport. They are not required for the main 5F matched-width/scalar-fallback results, but they define the ancestry-shared, path-propagation, depth, normalization, and pathway-vector controls reported in the appendix figures and tables.
    
    \paragraph{Path-integrated propagation.}
    Modulate broadcast error by $\pi_n = \pi_{n-1} \cdot R_{n-1}^{\mathrm{tot}} \cdot \bar{g}_{n-1}^{\mathrm{den}}$, approximating depth attenuation from Eq.~\eqref{eq:path_sum} without computing the exact sample-specific path gain.
    
    \paragraph{Depth modulation.}
    Per-branch scaling $\kappa_j = \kappa_{\mathrm{base}} / (d_j + d_0)$, mirroring cable attenuation and testing whether a simple depth prior can stabilize distal updates.
    
    \paragraph{Dendritic normalization.}
    $\Delta g_j^{\mathrm{den}} \leftarrow \Delta g_j^{\mathrm{den}} / (\sum_k g_k^{\mathrm{den}} + \varepsilon)$, analogous to homeostatic scaling \cite{turrigiano2008homeostatic} and used as an additive-control normalization comparison.
    
    \paragraph{Pathway-vector feedback.}
    For tasks with latent pathway structure, the broadcast becomes a role vector inferred from the router. Local pathway activity gates this vector. Upward block transport then passes it to earlier branches, aligning their feedback with downstream descendants.
    
    \paragraph{Router-derived branch roles.}
    Each branch receives a role profile inferred from router assignments and incoming block weights rather than from a hand-coded apical/basal label.
    Plasticity is then modulated by branch selectivity and a synapse-specific role-alignment factor, emphasizing pathway-consistent updates without imposing a heuristic branch taxonomy.
    
    \subsection{HSIC Auxiliary Objectives}
    
    Following \cite{gretton2005hsic}, the figure-ground MNIST LocalCA runs optionally apply an HSIC-style auxiliary gradient to selected layer representations $Z$:
    \[
    \mathcal{L}^{\mathrm{self}} = B^{-2}\tr(\mathbf{K}_Z\mathbf{H}\mathbf{K}_Z\mathbf{H}),
    \qquad
    \mathcal{L}^{\mathrm{target}} = -B^{-2}\tr(\mathbf{K}_Z\mathbf{H}\mathbf{K}_Y\mathbf{H}).
    \]
    Here $B$ is batch size, $\mathbf{H}$ is the centering matrix, $\mathbf{K}_Z$ uses the configured representation kernel, and $\mathbf{K}_Y$ uses one-hot class labels when \path{target_source=labels}. The reported figure-ground runs use an RBF kernel with fixed bandwidth $\sigma=1.0$, HSIC weight $0.01$, self and target weights $0.3$, five-epoch warmup, and gradient clipping at $0.1$; the code applies the resulting auxiliary gradient through the configured local-learning representation pathway rather than through the conductance-factorization theorem. Moderate weights ($0.01$--$0.1$) improve figure-ground MNIST but have negligible effect on MNIST.
    
    \section{Task Construction and Robustness Protocols}
    \label{app:synthetic_tasks}
    This section specifies the benchmark, synthetic-task, depth, and feedback-noise protocols used in the experiments, with emphasis on which inputs preserve the nonnegative-drive regime assumed by the conductance model.
    
    \subsection{General Data Handling}
    
    MNIST, Fashion-MNIST, and CIFAR-10 are loaded from standard public dataset loaders, flattened before entering the dendritic core, scaled to $[0,1]$, and passed through the configured nonnegative transfer stage in the main experiments. MNIST-derived synthetic tasks preserve the flattened $28\times28$ geometry. Fixed random seeds define task generators such as corruption projections and context masks; generator parameters are shared across train, validation, and test splits, while sample-level corruptions are drawn per example. Validation accuracy selects checkpoints, and test accuracy is reported. CIFAR-10 is used as a flattened harder-dataset stress test under the same conductance constraints rather than as a competitive vision benchmark; no data augmentation is used as a performance device.
    
    \subsection{Supervised Benchmark Tasks}
    
    \paragraph{MNIST and Fashion-MNIST.}
    These ten-class tasks use nonnegative pixel inputs. MNIST tests whether 5F LocalCA with matched-width/scalar-fallback feedback approaches matched backpropagation on clean supervised data; Fashion-MNIST adds a same-dimensional distribution shift.
    
    \paragraph{CIFAR-10 harder-dataset stress test.}
    CIFAR-10 inputs are flattened RGB images in $[0,1]$. The compact CIFAR family uses a smaller dendritic core, nonlinear decoder, and decoder-aware LocalCA mapping to test whether the feedback-fidelity ladder persists on harder data: matched-width/scalar-fallback feedback is weak, higher-rank feedback helps, and exact effective path transport approaches the matched shunting backpropagation reference.
    
    \paragraph{Figure-ground MNIST.}
    For each reshaped $28\times28$ MNIST image, the right half carries the digit and the left half is replaced by independent distractors $u_{r,c}\sim\mathrm{Uniform}(0,0.25)$, clipped to $[0,1]$ when needed. The task tests whether local credit assignment exploits spatial signal/distractor structure without leaving the nonnegative-drive regime. The main LocalCA performance rows use the HSIC auxiliary objective with weight $0.01$, which adds roughly $3$~pp in the ablation; the matched shunting BP reference in Table~\ref{tab:gap_closing} is the standard cross-entropy reference.
    
    \paragraph{Noise resilience.}
    Flattened MNIST images are corrupted as $\tilde{x}=\mathrm{clip}(x+\sigma_{\mathrm{task}}Az,0,1)$ with $\sigma_{\mathrm{task}}=1.5$, a fixed task-seeded projection $A\in\mathbb{R}^{784\times50}$, and fresh $z\sim\mathcal{N}(0,I_{50})$ for each corrupted example. The fixed projection creates correlated interference without reducing the task to memorizing one corruption pattern.
    
    \paragraph{Cue integration.}
    Two noisy cue streams ($A$ and $B$) are presented simultaneously for two-class classification, with a one-hot context indicating which cue is reliable on each trial. Cue vectors are clipped to $[0,1]$ after noise. The fixed-pathway variant duplicates context into both cue branches; the learned-routing variant requires the router to discover cue separation from data. This feedback-rank diagnostic tests structured pathway-vector broadcast inside the nonnegative-input regime.
    
    \subsection{Depth Scaling and Noise Robustness}
    These stress diagnostics keep the same broad model family while changing dendritic depth or corrupting the broadcast signal. For depth scaling, dendritic depth varies from 1--4 layers (branch factors $[9]$ to $[3,3,3,3]$): shunting local degrades from 63.5\% to 57.4\%, additive local falls from 54.9\% to 29.7\%, the shunting advantage grows from $+8.5$ to $+27.7$~pp, and matched backpropagation references remain near $90$--$92\%$ (Fig.~\ref{fig:additional_stress_tests}A). A checkpoint check confirms that every nominal coupling stage remains active in all 40 LocalCA depth-sweep runs; the smallest stage-median coupling is $0.716$, so this result is not explained by a collapsed dendritic stage. For noise robustness, Gaussian noise $\mathcal{N}(0,\sigma^2)$ is added to the broadcast error; shunting remains robust to $\sigma\!\leq\!0.1$ (about $62\%$), while additive drops from 46.5\% to chance at $\sigma{=}1.0$, indicating that shunting credit signals carry useful learning information beyond broadcast magnitude alone (Fig.~\ref{fig:additional_stress_tests}B).


\begin{thebibliography}{99}
    
    \bibitem{koch1999biophysics}
    Koch, C. (1999).
    \emph{Biophysics of Computation: Information Processing in Single Neurons}.
    Oxford University Press.
    
    \bibitem{dayan2001theoretical}
    Dayan, P., \& Abbott, L. F. (2001).
    \emph{Theoretical Neuroscience: Computational and Mathematical Modeling of Neural Systems}.
    MIT Press.
    
    \bibitem{poirazi2003pyramidal}
    Poirazi, P., Brannon, T., \& Mel, B. W. (2003).
    Pyramidal neuron as two-layer neural network.
    \emph{Neuron}, 37(6), 989--999.
    https://doi.org/10.1016/S0896-6273(03)00149-1
    
    \bibitem{london2005dendritic}
    London, M., \& H{\"a}usser, M. (2005).
    Dendritic computation.
    \emph{Annual Review of Neuroscience}, 28, 503--532.
    
    \bibitem{spruston2008pyramidal}
    Spruston, N. (2008).
    Pyramidal neurons: dendritic structure and synaptic integration.
    \emph{Nature Reviews Neuroscience}, 9, 206--221. https://doi.org/10.1038/nrn2286
    
    \bibitem{branco2010single}
    Branco, T., \& H{\"a}usser, M. (2010).
    The single dendritic branch as a fundamental functional unit in the nervous system.
    \emph{Current Opinion in Neurobiology}, 20(4), 494--502. https://doi.org/10.1016/j.conb.2010.07.009
    
    \bibitem{larkum2013cellular}
    Larkum, M. (2013).
    A cellular mechanism for cortical associations: an organizing principle for the cerebral cortex.
    \emph{Trends in Neurosciences}, 36(3), 141--151. https://doi.org/10.1016/j.tins.2012.11.006
    
    \bibitem{urbanczik2014dendritic}
    Urbanczik, R., \& Senn, W. (2014).
    Learning by the dendritic prediction of somatic spiking.
    \emph{Neuron}, 81(3), 521--528.
    
    \bibitem{carandini2012normalization}
    Carandini, M., \& Heeger, D. J. (2012).
    Normalization as a canonical neural computation.
    \emph{Nature Reviews Neuroscience}, 13(1), 51--62.
    
    \bibitem{holt1997shunting}
    Holt, G. R., \& Koch, C. (1997).
    Shunting inhibition does not have a divisive effect on firing rates.
    \emph{Neural Computation}, 9(5), 1001--1013.
    
    \bibitem{vogels2011inhibitory}
    Vogels, T. P., Sprekeler, H., Zenke, F., Clopath, C., \& Gerstner, W. (2011).
    Inhibitory plasticity balances excitation and inhibition in sensory pathways and memory networks.
    \emph{Science}, 334(6062), 1569--1573.
    
    \bibitem{fremaux2016threefactor}
    Fr{\'e}maux, N., \& Gerstner, W. (2016).
    Neuromodulated spike-timing-dependent plasticity, and theory of three-factor learning rules.
    \emph{Frontiers in Neural Circuits}, 9, 85. https://doi.org/10.3389/fncir.2015.00085
    
    \bibitem{hennequin2017inhibitory}
    Hennequin, G., Agnes, E. J., \& Vogels, T. P. (2017).
    Inhibitory plasticity: balance, control, and codependence.
    \emph{Annual Review of Neuroscience}, 40, 557--579. https://doi.org/10.1146/annurev-neuro-072116-031005
    
    \bibitem{lillicrap2016random}
    Lillicrap, T. P., Cownden, D., Tweed, D. B., \& Akerman, C. J. (2016).
    Random synaptic feedback weights support error backpropagation for deep learning.
    \emph{Nature Communications}, 7, 13276.
    
    \bibitem{lillicrap2020backprop}
    Lillicrap, T. P., Santoro, A., Marris, L., Akerman, C. J., \& Hinton, G. (2020).
    Backpropagation and the brain.
    \emph{Nature Reviews Neuroscience}, 21(6), 335--346. https://doi.org/10.1038/s41583-020-0277-3
    
    \bibitem{rumelhart1986learning}
    Rumelhart, D. E., Hinton, G. E., \& Williams, R. J. (1986).
    Learning representations by back-propagating errors.
    \emph{Nature}, 323, 533--536.
    
    \bibitem{nokland2016dfa}
    N{\o}kland, A. (2016).
    Direct feedback alignment provides learning in deep neural networks.
    \emph{NeurIPS}, 29, 1037--1045.
    
    \bibitem{guerguiev2017segregated}
    Guerguiev, J., Lillicrap, T. P., \& Richards, B. A. (2017).
    Towards deep learning with segregated dendrites.
    \emph{eLife}, 6, e22901.
    
    \bibitem{sacramento2018dendritic}
    Sacramento, J., Costa, R. P., Bengio, Y., \& Senn, W. (2018).
    Dendritic cortical microcircuits approximate the backpropagation algorithm.
    \emph{NeurIPS}, 31, 8721--8732.
    
    \bibitem{whittington2019theories}
    Whittington, J. C., \& Bogacz, R. (2019).
    Theories of error back-propagation in the brain.
    \emph{Trends in Cognitive Sciences}, 23(3), 235--250.
    
    \bibitem{richards2019dendritic}
    Richards, B. A., \& Lillicrap, T. P. (2019).
    Dendritic solutions to the credit assignment problem.
    \emph{Current Opinion in Neurobiology}, 54, 28--36.
    
    \bibitem{scellier2017equilibrium}
    Scellier, B., \& Bengio, Y. (2017).
    Equilibrium Propagation: Bridging the Gap between Energy-Based Models and Backpropagation.
    \emph{Frontiers in Computational Neuroscience}, 11, 24. https://doi.org/10.3389/fncom.2017.00024
    
    \bibitem{song2024prospective}
    Song, Y., Millidge, B., Salvatori, T., Lukasiewicz, T., Xu, Z., \& Bogacz, R. (2024).
    Inferring neural activity before plasticity as a foundation for learning beyond backpropagation.
    \emph{Nature Neuroscience}, 27, 348--358. https://doi.org/10.1038/s41593-023-01514-1
    
    \bibitem{gretton2005hsic}
    Gretton, A., Bousquet, O., Smola, A. J., \& Sch{\"o}lkopf, B. (2005).
    Measuring statistical dependence with Hilbert-Schmidt norms.
    \emph{Lecture Notes in Computer Science}, 3734, 63--77. \url{https://doi.org/10.1007/11564089_7}
    
    \bibitem{welford1962note}
    Welford, B. P. (1962).
    Note on a method for calculating corrected sums of squares and products.
    \emph{Technometrics}, 4(3), 419--420. https://doi.org/10.1080/00401706.1962.10490022
    
    \bibitem{turrigiano2008homeostatic}
    Turrigiano, G. G. (2008).
    The self-tuning neuron: synaptic scaling of excitatory synapses.
    \emph{Cell}, 135(3), 422--435.
    
    \bibitem{payeur2021burst}
    Payeur, A., Guerguiev, J., Zenke, F., Richards, B. A., \& Naud, R. (2021).
    Burst-dependent synaptic plasticity can coordinate learning in hierarchical circuits.
    \emph{Nature Neuroscience}, 24(7), 1010--1019.
    
    \bibitem{greedy2022burstccn}
    Greedy, W., Zhu, H. W., Pemberton, J., Mellor, J., \& Ponte Costa, R. (2022).
    Single-phase deep learning in cortico-cortical networks.
    \emph{NeurIPS}, 35, 24213--24225.
    
    \bibitem{haider2021latent}
    Haider, P., Ellenberger, B., Kriener, L., Jordan, J., Senn, W., \& Petrovici, M. A. (2021).
    Latent Equilibrium: A unified learning theory for arbitrarily fast computation with arbitrarily slow neurons.
    \emph{NeurIPS}, 34, 17839--17851.
    
    \bibitem{hinton2022forward}
    Hinton, G. (2022).
    The Forward-Forward Algorithm: Some Preliminary Investigations.
    \emph{arXiv:2212.13345}.
    
    \bibitem{dellaferrera2022pepita}
    Dellaferrera, G., \& Kreiman, G. (2022).
    Error-driven input modulation: Solving the credit assignment problem without a backward pass.
    \emph{Proceedings of the 39th International Conference on Machine Learning}, \emph{Proceedings of Machine Learning Research}, 162, 4937--4955.
    
    \bibitem{lee2015dtp}
    Lee, D.-H., Zhang, S., Fischer, A., \& Bengio, Y. (2015).
    Difference target propagation.
    \emph{Machine Learning and Knowledge Discovery in Databases: ECML PKDD 2015}, 498--515. \url{https://doi.org/10.1007/978-3-319-23528-8_31}
    
    \bibitem{meulemans2021dfc}
    Meulemans, A., Tristany Farinha, M., Garc{\'i}a Ord{\'o}{\~n}ez, J., Vilimelis Aceituno, P., Sacramento, J., \& Grewe, B. F. (2021).
    Credit assignment in neural networks through deep feedback control.
    \emph{NeurIPS}, 34, 4674--4687.
    
    \bibitem{millidge2021predictive}
    Millidge, B., Seth, A. K., \& Buckley, C. L. (2021).
    Predictive coding: A theoretical and experimental review.
    \emph{arXiv:2107.12979}.
    
    \bibitem{koch1983nonlinear}
    Koch, C., Poggio, T., \& Torre, V. (1983).
    Nonlinear interactions in a dendritic tree: localization, timing, and role in information processing.
    \emph{PNAS}, 80(9), 2799--2802. https://doi.org/10.1073/pnas.80.9.2799
    
    \bibitem{silver2010neuronal}
    Silver, R. A. (2010).
    Neuronal arithmetic.
    \emph{Nature Reviews Neuroscience}, 11(7), 474--489.
    
    \bibitem{max2024backprojections}
    Max, K., Kriener, L., Pineda Garc{\'\i}a, G., Nowotny, T., Jaras, I., Senn, W., \& Petrovici, M. A. (2024).
    Learning efficient backprojections across cortical hierarchies in real time.
    \emph{Nature Machine Intelligence}, 6, 619--630. https://doi.org/10.1038/s42256-024-00845-3
    
    \bibitem{ma2024auglocal}
    Ma, C., Wu, J., Si, C., \& Tan, K. C. (2024).
    Scaling supervised local learning with augmented auxiliary networks.
    \emph{International Conference on Learning Representations (ICLR)}.
    
    \bibitem{lv2025dll}
    Lv, C., Xu, J., Lu, Y., Wang, X., Wang, Z., Xu, Z., Yu, D., Du, X., Zheng, X., \& Huang, X. (2025).
    Dendritic Localized Learning: Toward Biologically Plausible Algorithm.
    \emph{Proceedings of the 42nd International Conference on Machine Learning}, \emph{Proceedings of Machine Learning Research}, 267, 41682--41700.
    
    \bibitem{erdogan2025ebd}
    Erdogan, M., Pehlevan, C., \& Erdogan, A. T. (2025).
    Error Broadcast and Decorrelation as a Potential Artificial and Natural Learning Mechanism.
    \emph{Advances in Neural Information Processing Systems}, 38.
    
    \bibitem{kao2024ccl}
    Kao, C.-H., \& Hariharan, B. (2024).
    Counter-Current Learning: A Biologically Plausible Dual Network Approach for Deep Learning.
    \emph{Advances in Neural Information Processing Systems}, 37.
    https://doi.org/10.52202/079017-2265
    
    \bibitem{xiao2017fashionmnist}
    Xiao, H., Rasul, K., \& Vollgraf, R. (2017).
    Fashion-MNIST: A novel image dataset for benchmarking machine learning algorithms.
    \emph{arXiv:1708.07747}.
    
    \bibitem{lecun1998gradient}
    LeCun, Y., Bottou, L., Bengio, Y., \& Haffner, P. (1998).
    Gradient-based learning applied to document recognition.
    \emph{Proceedings of the IEEE}, 86(11), 2278--2324.
    
    \bibitem{krizhevsky2009learning}
    Krizhevsky, A. (2009).
    Learning multiple layers of features from tiny images.
    Technical report, University of Toronto.
    
    \bibitem{francioni2026vectorized}
    Francioni, V., Tang, V. D., Toloza, E. H. S., Ding, Z., Brown, N. J., \& Harnett, M. T. (2026).
    Vectorized instructive signals in cortical dendrites.
    \emph{Nature}, 652, 1254--1263. https://doi.org/10.1038/s41586-026-10190-7
    
    \bibitem{iyer2022activedendrites}
    Iyer, A., Grewal, K., Velu, A., Souza, L. O., Forest, J., \& Ahmad, S. (2022).
    Avoiding catastrophe: Active dendrites enable multi-task learning in dynamic environments.
    \emph{Frontiers in Neurorobotics}, 16, 846219.
    
    \end{thebibliography}
\end{document}